\documentclass[a4paper,USenglish,cleveref, autoref, thm-restate]{lipics-v2021}
\usepackage[T1]{fontenc}
\usepackage[utf8]{inputenc}
\usepackage{numprint}

\usepackage{wrapfig}
\usepackage{amsmath}
\usepackage{mathrsfs}
\usepackage{amssymb}
\usepackage{amsthm}
\usepackage{amsfonts}
\usepackage{wasysym}
\usepackage{tikz}
\usepackage[ruled, vlined, linesnumbered]{algorithm2e}
\usepackage{stmaryrd}
\usepackage{enumitem}
\usepackage{caption}
\usepackage{subcaption}
\usetikzlibrary{automata}
\usetikzlibrary{shapes}

\nolinenumbers

\bibliographystyle{plainurl}

\newcommand{\<}{\langle}
\renewcommand{\>}{\rangle}

\newcommand{\bsigma}{\bar{\sigma}}
\newcommand{\btau}{\bar{\tau}}

\renewcommand{\|}{\upharpoonright}

\newcommand{\N}{\mathbb{N}}
\newcommand{\Z}{\mathbb{Z}}

\newcommand{\playcircle}{{\tiny{\Circle}}}

\renewcommand{\l}{\ell}
\renewcommand{\epsilon}{\varepsilon}
\renewcommand{\phi}{\varphi}

\newcommand{\card}{\mathsf{card}}

\newcommand{\Plays}{\mathsf{Plays}}
\newcommand{\Hist}{\mathsf{Hist}}

\newcommand{\nego}{\mathsf{nego}}
\newcommand{\lCons}{\lambda\mathsf{Cons}}

\newcommand{\lRat}{\lambda\mathsf{Rat}}

\newcommand{\Abs}{\mathsf{Abs}}
\newcommand{\Conc}{\mathsf{Conc}}

\newcommand{\Req}{\mathsf{Req}}
\newcommand{\Inf}{\mathsf{Inf}}
\newcommand{\InfTr}{\mathsf{InfTr}}
\newcommand{\Occ}{\mathsf{Occ}}

\newcommand{\Dev}{\mathsf{Dev}}
\newcommand{\last}{\mathsf{last}}
\newcommand{\first}{\mathsf{first}}

\newcommand{\Sat}{\textsc{Sat}}
\newcommand{\coSat}{\textsc{coSat}}

\renewcommand{\P}{\mathbb{P}}
\newcommand{\C}{\mathbb{C}}
\renewcommand{\S}{\mathbb{S}}
\newcommand{\A}{\mathbb{A}}
\renewcommand{\O}{\mathbb{O}}

\newcommand{\M}{\mathcal{M}}

\newcommand{\bx}{\bar{x}}
\newcommand{\by}{\bar{y}}

\newcommand{\bc}{\bar{c}}
\newcommand{\bL}{\overline{L}}

\newcommand{\dpi}{\dot{\pi}}
\newcommand{\dchi}{\dot{\chi}}

\newcommand{\trho}{\tilde{\rho}}
\newcommand{\teta}{\tilde{\eta}}
\renewcommand{\th}{\tilde{h}}

\newcommand{\hkappa}{\hat{\kappa}}

\newcommand{\NP}{\mathbf{NP}}

\newcommand{\coNP}{\mathbf{coNP}}
\newcommand{\PSpace}{\mathbf{PSpace}}
\newcommand{\ExpTime}{\mathbf{ExpTime}}
\newcommand{\BH}{\mathbf{BH}}
\newcommand{\X}{\mathbf{X}}
\newcommand{\U}{\mathbf{U}}
\newcommand{\G}{\mathbf{G}}
\newcommand{\F}{\mathbf{F}}
\newcommand{\B}{\mathbf{B}}

\newtheorem{thm}{Theorem}

\newtheorem{lm}{Lemma}

\theoremstyle{definition}
\newtheorem{defi}{Definition}
\newtheorem{pb}{Problem}

\theoremstyle{remark}
\newtheorem*{rk}{Remark}
\newtheorem{ex}{Example}

\author{Léonard Brice}
{Université Gustave Eiffel, France, and Université Libre de Bruxelles, Belgium}{lnrd.brice@gmail.com}{}{}

\author{Jean-François Raskin}
{Université Libre de Bruxelles, Belgium}{jraskin@ulb.ac.be}{}{}

\author{Marie van den Bogaard}
{LIGM, Univ. Gustave Eiffel, CNRS, F-77454 Marne-la-Vall\'{e}e, France}{marie.van-den-bogaard@univ-eiffel.fr}{}{}

\authorrunning{L. Brice, J.-F. Raskin, and M. van den Bogaard}

\Copyright{Léonard Brice, Jean-François Raskin, Marie van den Bogaard}

\ccsdesc{Software and its engineering: Formal methods; Theory of
	computation: Logic and verification; Theory of computation: Solution concepts in game theory.}

\keywords{Games on graphs, subgame-perfect equilibria, parity objectives.}

\funding{This work is partially supported by the ARC project Non-Zero Sum Game Graphs: Applications to Reactive Synthesis and Beyond (F\'{e}d\'{e}ration Wallonie-Bruxelles), the EOS project Verifying Learning Artificial Intelligence Systems (F.R.S.-FNRS \& FWO), the COST Action 16228 GAMENET (European Cooperation in Science and Technology), and by the PDR project \emph{Subgame perfection in graph games} (F.R.S.- FNRS).}
\acknowledgements{We wish to thank anonymous reviewers for their useful observations.}

\title{On the Complexity of SPEs in Parity Games}
\date{2021}

\begin{document}
	
	\maketitle
	
	\begin{abstract}
		We study the complexity of problems related to subgame-perfect equilibria (SPEs) in infinite duration non zero-sum multiplayer games played on finite graphs with parity objectives. 
		We present new complexity results that close gaps in the literature. Our techniques are based on a recent characterization of SPEs in prefix-independent games that is grounded on the notions of requirements and negotiation, and according to which the plays supported by SPEs are exactly the plays consistent with the requirement that is the least fixed point of the negotiation function.
		The new results are as follows. First, checking that a given requirement is a fixed point of the negotiation function is an $\NP$-complete problem.
		Second, we show that the SPE constrained existence problem is $\NP$-complete, this problem was previously known to be $\ExpTime$-easy and $\NP$-hard.
		Third, the SPE constrained existence problem is fixed-parameter tractable when the number of players and of colors are parameters.
		Fourth, deciding whether some requirement is the least fixed point of the negotiation function is complete for the second level of the Boolean hierarchy. Finally, the SPE-verification problem --- that is, the problem of deciding whether there exists a play supported by a SPE that satisfies some LTL formula --- is $\PSpace$-complete, this problem was known to be $\ExpTime$-easy and $\PSpace$-hard.
	\end{abstract}
	
	\section{Introduction}
	\emph{Nash equilibrium} (NE) is one of the central concepts from game theory to formalize the notion of rationality.
	It describes profiles of strategies in which no player has an incentive to change their strategy unilaterally.
	However, in sequential games, like games played on graphs, NEs are known to be plagued by {\em non-credible threats}: players can threaten other players in subgames with {\em non-rational actions} in order to force an equilibrium that avoids these subgames.
	To avoid non-credible threats, subgame-perfect equilibria are used instead. \emph{Subgame-perfect equilibria} (SPEs) are NEs that are NEs in all subgames of the original game: the players must act rationally in all subgames even after a deviation by another player.
	
	In this paper, we study the complexity of decision problems related to SPEs in sequential games played on graphs with parity objectives. In such a game, each vertex of the game graph has one color per player, and each player wants the least color they see infinitely often along a play, which is an infinite path in the graph, to be even.
	Parity conditions, in games as well as in automata, are canonical ways to represent $\omega$-regular constraints. It is known that SPEs always exist in parity games, as shown  in \cite{DBLP:conf/fsttcs/Ummels06}. Unfortunately, the precise complexity of the SPE constrained existence problem, i.e. the problem of deciding whether there exists an SPE that generates payoffs between two given thresholds, is left open in the literature: it is known to be $\ExpTime$-easy and $\NP$-hard. We prove here that it is in fact $\NP$-complete, and we provide several other new complexity results on related problems of interest.
	
	While previous attempts to solve this decision problem
	were based on alternating tree automata (\cite{GU08}), we obtain the new tight complexity results starting from concepts that we have introduced recently in \cite{Concur} to capture SPEs in mean-payoff games: the notions of \emph{requirements} and \emph{negotiation}.
	A requirement is a function $\lambda$ that maps each vertex $v$ of the game graph to a real value, that represents the lowest payoff that the player controlling $v$ should accept when facing other rational players.
	A play $\rho = \rho_0 \rho_1 \dots$ is \emph{$\lambda$-consistent} if for each vertex $\rho_k$, the player controlling $\rho_k$ gets at least the payoff $\lambda(\rho_k)$ in $\rho$.
	In Boolean games, such as parity games, we naturally consider requirements whose values are either $0$ or $1$ ($1$ meaning that the player must achieve their objective, $0$ that they may not).
	
	The negotiation function maps a requirement $\lambda$ to a requirement $\nego(\lambda)$, which captures from any vertex $v$ the maximal payoff that the corresponding player can ensure, against \emph{$\lambda$-rational} players, that is, players who play in such a way that they obtain at least the payoff specified by $\lambda$.
	Clearly, if $\lambda_0$ maps each vertex to $0$, then every play is $\lambda_0$-consistent.
	Then, the requirement $\nego(\lambda_0)$ maps each vertex $v$ to its \emph{antagonistic value}, i.e. the best payoff that the player controlling $v$ can ensure against an adversarial coalition of the other players (as any behavior of the other players is $\lambda_0$-rational).
	It is the case that the $\nego(\lambda_0)$-consistent plays are exactly the plays supported by NEs.
	
	But then, the following natural question is: given $v$ and $\lambda$, can the player who controls $v$ improve his worst-case value, if only plays that are consistent with $\lambda$ are proposed by the other players? Or equivalently, can this player enforce a better value when playing against players that are not willing to give away their own worst-case value? which is clearly a minimal goal for any rational adversary. So $\nego(\lambda)(v)$ returns this value; and this reasoning can be iterated. In~\cite{Concur}, it is shown that the least fixed point $\lambda^*$ of the negotiation function is exactly characterizing the set of plays supported by SPEs, for all prefix independent payoff functions with steady negotiation (which is the case for parity objectives) .
	
	
	Using that characterization of SPEs, we prove that SPEs always exist in parity games (Theorem~\ref{thm_existence_spe}).
	That result had already been proved by Ummels in \cite{DBLP:conf/fsttcs/Ummels06}: we use the concepts of requirements and negotiation to rephrase his proof in a more succinct way.
	
	\paragraph*{Main contributions}
		
	In order to get tight complexity results, we establish the links between the negotiation function and a class of zero-sum two-player games, the \emph{abstract negotiation games} (Theorem~\ref{thm_abstract_game}).
	Those games are played on an infinite arena, but we show that the players can play simple strategies that have polynomial size representation, while still playing optimally (Lemma~\ref{lm_reduced_strategy}).
	We show that a non-deterministic polynomial algorithm can decide which player has a winning strategy in that game, i.e. can decide whether $\nego(\lambda)(v) = 0$ or $1$, for a given requirement $\lambda$ on a given vertex in a given game: as a consequence, deciding whether the requirement $\lambda$ is a fixed point of the negotiation function is $\NP$-easy (Lemma~\ref{lm_fixed_point_np_easy}).
	We also show that the computation of $\lambda^*$, and consequently the SPE constrained existence problem, are fixed-parameter tractable if we fix the number of players and colors in the game (Theorem~\ref{thm_fpt}).
	
	Those algorithms can be exploited to obtain upper bounds for several problems related to SPEs.
	Most classical of them, the problem of deciding whether there exists an SPE generating a payoff vector between two given thresholds (SPE constrained existence problem), is $\NP$-complete (Theorem~\ref{thm_constrained_existence_np_complete}).
	That problem can be solved without computing the least fixed point of the negotiation function: such a problem is $\BH_2$-complete, as well as its decisional version --- i.e. deciding whether a given requirement $\lambda$ is equal to $\lambda^*$ (Theorem~\ref{thm_lfp_bh2_complete}).
	Finally, deciding whether there exists an SPE generating a play that satisfies some LTL formula (SPE-verification problem) is $\PSpace$-complete (Theorem~\ref{thm_spe_verif_pspace_complete}).

	\paragraph*{Related works}
	In \cite{GU08}, Ummels and Grädel solve the SPE constrained existence problem in parity games, and prove that such games always contain SPEs.
	Their algorithm is based on the construction of alternating tree automata, on which one can solve the emptiness problem in exponential time.
	The SPE constrained existence problem is therefore $\ExpTime$-easy, which was, to our knowledge, the best upper bound existing in the literature.
	The authors also prove the $\NP$-hardness of that problem.
	In what follows, we prove that it is actually $\NP$-complete.
	
	In \cite{DBLP:journals/mor/FleschP17}, Flesch and Predtetchinski present a general non-effective procedure to characterize the plays supported by an SPE in games with finitely many possible payoff vectors, as parity games.
	That characterization uses
	the abstract negotiation game, but does not use the notions of requirements and negotiation, and as a consequence does not yield an effective algorithm --- their procedure requires to solve infinitely many games that have an uncountable state space.
	
	In \cite{Concur}, we solve the SPE constrained existence problem on mean-payoff games.
	To that end, we define the notions of requirements and negotiation, and highlight the links between negotiation and the abstract negotiation game.
	Part of our results can be applied to every prefix-independent game with steady negotiation, which includes parity games.
	But, in order to get tight complexity results, we need here to introduce new notions, such as reduced strategies and deviation graphs.
	
	In \cite{DBLP:conf/concur/BrihayeBGRB19}, Brihaye et al. prove that the SPE constrained existence problem on quantitative reachability games is $\PSpace$-complete.
	Their algorithm updates continuously a function that heralds the notion of requirement, until it reaches a fixed point, that we can interpret as the least fixed point of the negotiation function.

	In \cite{DBLP:conf/lfcs/BrihayePS13}, Brihaye et al. give a characterization of NEs in cost-prefix linear games, based on the worst-case value.
	Parity games are cost-prefix linear, and the worst-case value is 
	captured by our notion of requirement.
	The authors do not study the notion of SPE in their paper.
	
	In \cite{thesis_noemie}, Meunier proposes a method to decide the existence of SPEs generating a given payoff, proving that it is equivalent to decide which player has a winning strategy in a Prover-Challenger game.
	That method could be used with parity games, but it would not lead to a better complexity than \cite{DBLP:conf/fsttcs/Ummels06}, and so it would not yield our $\NP$-completeness result.
	
	The applications of non-zero sum infinite duration games targeting reactive synthesis problems have gathered significant attention during the recent years, hence a rich literature on that topic.
	The interested reader may refer to the surveys \cite{DBLP:conf/lata/BrenguierCHPRRS16,DBLP:conf/dlt/Bruyere17} and their references.

	\paragraph*{Structure of the paper}
	
	In Section~\ref{sec_background}, we introduce the necessary background.
	In Section~\ref{sec_nego}, we present the concepts of requirements, negotiation and abstract negotiation game, and show how they can be applied to characterize SPEs in parity games.
	In Section~\ref{sec_algo}, we turn that characterization into algorithms that solve the aforementioned problems, and deduce upper bounds for their complexities.
	In Section~\ref{sec_lower_bounds}, we match them with lower bounds, and conclude on the precise complexities of those problems.

	\section{Background} \label{sec_background}
	
	In the sequel, we use the word \emph{game} for Boolean turn-based games played on finite graphs.
	
	\begin{defi}[Game]
		A \emph{game} is a tuple $G = (\Pi, V, (V_i)_{i \in \Pi}, E, \mu)$, where:
		\begin{itemize}
			\item $\Pi$ is a finite set of \emph{players};
			
			\item $(V, E)$ is a finite directed graph, whose vertices and edges are also called \emph{states} and \emph{transitions}, and in which every state has at least one outgoing transition;
			
			\item $(V_i)_{i \in \Pi}$ is a partition of $V$, where each $V_i$ is the set of the states \emph{controlled} by player $i$;
			
			\item $\mu: V^\omega \to \{0, 1\}^\Pi$ is a \emph{payoff function}, which maps each sequence of states $\rho$ to the tuple $\mu(\rho) = (\mu_i(\rho))_i$ of the players' payoffs: player $i$ \emph{wins} $\rho$ if $\mu_i(\rho) = 1$, and \emph{loses} otherwise.
		\end{itemize}
	\end{defi}
	
	A play in such a game can be seen as an infinite sequence of moves of a token on the graph $(V, E)$: when the token is on a given vertex, the player controlling that vertex chooses the edge along which it will move.
	And then, the player controlling the next vertex chooses where it moves next, and so on.
	
	\begin{defi}[Play, history]
		A \emph{play} (resp. \emph{history}) in the game $G$ is an infinite (resp. finite) path in the graph $(V, E)$.
		We write $\Plays G$ (resp. $\Hist G$) for the set of plays (resp. histories) in $G$.
		We write $\Hist_i G$ for the set of histories in $G$ of the form $hv$, where $v \in V_i$.
		We write $\Occ(\rho)$ (resp. $\Occ(h)$) for the set of vertices that occur at least once in the play $\rho$ (resp. the history $h$), and $\Inf(\rho)$ for the set of vertices that occur infinitely often in $\rho$.
		We write $\first(h)$ (resp. $\first(\rho)$) the first vertex of $h$ (resp. $\rho$), and $\last(h)$ its last vertex.
	\end{defi}
	
	Often, we need to specify an initial state for a game.
	
	\begin{defi}[Initialized game]
		An \emph{initialized game} is a pair $(G, v_0)$, often written $G_{\|v_0}$, where $G$ is a game and $v_0 \in V$ is the \emph{initial vertex}.
		A play (resp. history) of $G$ is a play (resp. history) of $G_{\|v_0}$ iff its first state is $v_0$.
		We write $\Plays G_{\|v_0}$ (resp. $\Hist G_{\|v_0}$, $\Hist_i G_{\|v_0}$) for the set of plays (resp. histories, histories ending in $V_i$) in $G_{\|v_0}$.
	\end{defi}
	
	When the context is clear, we call \emph{game} both non-initialized and initialized games.
	
	\begin{defi}[Strategy, strategy profile]
		A \emph{strategy} for player $i$ in $G_{\|v_0}$ is a function $\sigma_i: \Hist_i G_{\|v_0} \to V$ such that for each history $hv \in \Hist_i G_{\|v_0}$, we have $v \sigma_i(hv) \in E$.
		
		A \emph{strategy profile} for $P \subseteq \Pi$ is a tuple $\bsigma_P = (\sigma_i)_{i \in P}$ where each $\sigma_i$ is a strategy for player $i$.
		When $P = \Pi$, the strategy profile is \emph{complete}, and we usually write it $\bsigma$.
		For each $i \in \Pi$, we write $-i$ for the set $\Pi \setminus \{i\}$.
		When $\btau_P, \btau'_Q$ are two strategy profiles with $P \cap Q = \emptyset$, we write $(\btau_P, \btau'_Q)$ the strategy profile $\bsigma_{P \cup Q}$ defined by $\sigma_i = \tau_i$ if $i \in P$, and $\sigma_i = \tau'_i$ if $i \in Q$.
		We write $\Sigma_i G_{\|v_0}$ (resp. $\Sigma_P G_{\|v_0}$) the set of all strategies (resp. strategy profiles) for player $i$ (resp. the set $P$) in $G_{\|v_0}$.
		
		A history or a play is \emph{compatible with} (or \emph{supported by}) a strategy $\sigma_i$ if for each of its prefixes $hv$ with $h \in \Hist_i G$, we have $v = \sigma_i(h)$.
		It is compatible with a strategy profile $\bsigma_P$ if it is compatible with $\sigma_i$ for each $i \in P$.
		When a strategy profile $\bsigma$ is complete, there is one unique play in $G_{\|v_0}$ that is compatible with it, written $\< \bsigma \>_{v_0}$ and called the \emph{outcome} of $\bsigma$.
		
		A strategy $\sigma_i$ is \emph{memoryless} when for each state $v$ and every two histories $h$ and $h'$, we have $\sigma_i(hv) = \sigma_i(h'v)$.
		In that case, we liberally consider that $\sigma_i$ is defined from every state, and write $\sigma_i(v)$ for every $\sigma_i(hv)$.
	\end{defi}
	
	Before defining the notion of SPEs, we need to define a weaker, but more classical, solution concept: Nash equilibria.
	A Nash equilibrium is a strategy profile such that no player can improve their payoff by deviating unilaterally from their strategy.
	
	\begin{defi}[Nash equilibrium]
		A complete strategy profile $\bsigma$ in $G_{\|v_0}$ is a \emph{Nash equilibrium} --- or \emph{NE} for short --- iff for each player $i$ and for every strategy $\sigma'_i$, we have $\mu_i(\< \bsigma_{-i}, \sigma'_i \>_{v_0})~\leq~\mu_i(\< \bsigma \>_{v_0})$.
	\end{defi}
	
	An SPE is an NE in all the subgames, in the following formal sense.
	
	\begin{defi}[Subgame, substrategy]
		Let $hv$ be a history in $G_{\|v_0}$.
		The \emph{subgame} of $G$ after $hv$ is the initialized game $G_{\|hv} = (\Pi, V, (V_i)_i, E, \mu_{\|hv})_{\|v}$, where $\mu_{\|hv}$ maps each play to its payoff in $G$, assuming that the history $hv$ has already been played: formally, for every $\rho \in \Plays G_{\|hv}$, we have $\mu_{\|hv}(\rho) = \mu(h\rho)$.
		If $\sigma_i$ is a strategy in $G_{\|v_0}$, its \emph{substrategy} after $hv$ is the strategy $\sigma_{i\|hv}$ in $G_{\|hv}$, defined by $\sigma_{i\|hv}(h') = \sigma_i(hh')$ for every $h' \in \Hist_i G_{\|hv}$.
	\end{defi}
	
	\begin{defi}[Subgame-perfect equilibrium]
		A complete strategy profile $\bsigma$ in $G_{\|v_0}$ is a \emph{subgame-perfect equilibrium} --- or \emph{SPE} for short --- iff for every history $hv$ in $G_{\|v_0}$, the substrategy profile $\bsigma_{\|hv}$ is a Nash equilibrium.
	\end{defi}
	
	Throughout this paper, we mostly study \emph{parity} games.
	
	\begin{defi}[Parity game]
		The game $G$ is a \emph{parity game} if there exists a tuple of \emph{color functions} $(\kappa_i:~V~\to~\N)_{i \in \Pi}$, such that each play $\rho$ is won by a given player $i$ --- i.e. $\mu_i(\rho) = 1$ --- iff the least color seen infinitely often by player $i$, i.e. the integer $\min \kappa_i(\Inf(\rho))$, is even.
		
		A \emph{Büchi} game is a parity game where all colors are either $0$ or $1$ --- or equivalently, a game in which the objective of each player is to visit infinitely often a given set of vertices.
		A \emph{coBüchi} game is a parity game where all colors are either $1$ or $2$ --- or equivalently, a game in which the objective of each player is to eventually avoid a given set of vertices.
	\end{defi}

	\begin{ex}
		Consider the (coBüchi) game represented by Figure~\ref{fig_ex1}: both players win the play $ace^\omega$, and lose any other.
		A first NE in that game is the strategy profile in which both players always go to the right: its outcome is $ace^\omega$, which is won by both players, hence none can strictly improve their payoff by deviating.
		A second NE is the strategy profile in which both players always go down: its outcome is $ab^\omega$, which is lost by both players.
		However, player $\Box$ cannot improve his strategy, because he never plays; and player $\Circle$ cannot neither, because if she goes right, then $\Box$ plans to go down, and she still loses.
		Only the first one is an SPE: for player $\Box$, planning to go down from the state $c$ is a non-credible threat.
		
		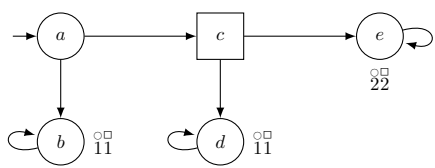
\begin{figure}
			\begin{center}
				\begin{tikzpicture}[->,>=latex,scale=0.7, every node/.style={scale=0.7},initial text={}]
				\node[state, initial left] (a) at (0, 0) {$a$};
				\node[state] (b) at (0, -2) {$b$};
				\node[state, rectangle] (c) at (3, 0) {$c$};
				\node[state] (d) at (3, -2) {$d$};
				\node[state] (e) at (6, 0) {$e$};
				
				\path (a) edge (b);
				\path (b) edge[loop left] (b);
				\path (a) edge (c);
				\path (c) edge (d);
				\path (d) edge[loop left] (d);
				\path (c) edge (e);
				\path (e) edge[loop right] (e);
				
				\node (b') at (0.8, -2) {$\stackrel{\playcircle}{1} \stackrel{\Box}{1}$};
				\node (d') at (3.8, -2) {$\stackrel{\playcircle}{1} \stackrel{\Box}{1}$};
				\node (e') at (6, -0.8) {$\stackrel{\playcircle}{2} \stackrel{\Box}{2}$};
				\end{tikzpicture}
			\end{center}
			\caption{A coBüchi game with two NEs and one SPE.}
			\label{fig_ex1}
		\end{figure}
	\end{ex}
	
	An important property of parity games is that they are \emph{prefix-independent}.
	
	\begin{defi}[Prefix-independent game]
		The game $G$ is \emph{prefix-independent} iff for every history $h$, we have $\mu_{\|h} = \mu$ --- or, equivalently, $G_{\|h} = G_{\|\last(h)}$.
	\end{defi}

	In such games, we search algorithms that solve the following problems.
	Let us specify that in all the sequel, tuples, as well as mappings, are ordered by the componentwise order.
	
	\begin{pb}[SPE constrained existence problem]
		Given a parity game $G_{\|v_0}$ and two thresholds $\bx, \by \in \{0, 1\}^\Pi$, is there an SPE $\bsigma$ in $G_{\|v_0}$ such that $\bx \leq \mu(\< \bsigma \>_{v_0}) \leq \by$?
	\end{pb}
	
	
	The next problem requires a definition of the linear temporal logic, LTL.
	
	\begin{defi}[LTL formulas]
		The \emph{linear temporal logic} --- or \emph{LTL} for short --- over the set of atomic propositions $\A$ is defined as follows: syntactically, each $a \in \A$ is an LTL formula, and if $\phi$ and $\psi$ are LTL formulas, then $\neg \phi$, $\phi \vee \psi$, $\X \phi$, and $\phi \U \psi$ are LTL formulas.
			
		Semantically, if $\nu = \nu_0 \nu_1 \dots$ is an infinite sequence of valuations of $\A$, then:
			\begin{itemize}
				\item $\nu \models a$ iff $\nu_0(a) = 1$;
				
				\item $\nu \models \neg \phi$ iff $\nu \not\models \phi$;
				
				\item $\nu \models \phi \vee \psi$ iff $\nu \models \phi$ or $\nu \models \psi$;
				
				\item $\nu \models \X \phi$ iff $\nu_1 \nu_2 \dots \models \phi$;
				
				\item $\nu \models \phi \U \psi$ iff there exists $k \in \N$ such that $\nu_k \nu_{k+1} \dots \models \psi$, and for each $\l < k$, we have $\nu_\l \nu_{\l+1} \dots \models \phi$.
			\end{itemize}
	\end{defi}
	
	We will also make use of the classical notations $\wedge$, $\top$, $\bot$, $\Rightarrow$, $\F$ or $\G$ defined as abbreviations using the symbols chosen here as primitives.
	In particular, we write $\top$ for $a \vee \neg a$, $\F \phi$ ("finally $\phi$") for $\top \U \phi$, and $\G \phi$ ("globally $\phi$") for $\neg \F \neg \phi$.
	When we use LTL to describe plays in a game, w.l.o.g. and for simplicity, the atom set is $\A = V$, and each play $\rho$ is assimilated to the sequence of valuations $\nu$ defined by $\nu_k(v) = 1$ iff $\rho_k = v$.
	For example, when $u$ and $v$ are two vertices, the play $(uv)^\omega$ is the only play satisfying the formula $u \wedge \mathbf{G} \left( (u \Rightarrow \X v) \wedge (v \Rightarrow \X u) \right)$.
	
	\begin{pb}[SPE-verification problem]
		Given a parity game $G_{\|v_0}$ and an LTL formula $\phi$, is there an SPE $\bsigma$ in $G_{\|v_0}$ such that $\< \bsigma \>_{v_0} \models \phi$?
	\end{pb}
	
	
\begin{rk}
    The natural problems of deciding whether \emph{all} SPEs generate a payoff vector between two thresholds, or a play that satisfies some LTL formula, are the duals of the aforementioned problems, and their complexities are obtained as direct corollaries.
    For example, in a given parity game, all the outcomes of SPEs satisfy the formula $\phi$ if and only if there does not exist an SPE whose outcome satisfies $\neg \phi$.
    Since we will show that the SPE-verification problem is $\PSpace$-complete, its dual will also be $\PSpace$-complete. Similarly, the SPE constrained universality problem is $\coNP$-complete as we will show that its dual, the SPE constrained existence problem, is $\NP$-complete.
\end{rk}
	
While we do not recall the definition of classical complexity classes here (such as $\NP$, $\coNP$, or $\PSpace$, see \cite{DBLP:books/daglib/0018514}), we recall the definition of the class $\BH_2$: the second level of the Boolean hierarchy.
For more details about the Boolean hierarchy itself, see \cite{DBLP:conf/fct/Wechsung85}.

	\begin{defi}[Class $\BH_2$]
		The complexity class $\BH_2$ is the class of problems of the form $P \cap Q$, where $P$ is $\NP$-easy and $Q$ is $\coNP$-easy, and both have the same set of instances.
		In other words, a $\BH_2$-easy problem is a problem that can be decided with one call to an $\NP$ algorithm, and one to a $\coNP$ algorithm.
	\end{defi}
\noindent

\begin{rk}
    The class $\BH_2$ must not be mistaken with the class $\NP \cap \coNP$, that gathers the problems that can be solved by an $\NP$ algorithm \emph{as well as} by a $\coNP$ one: the latter is included in $\NP$ and in $\coNP$, while the former contains them.
\end{rk}

To attach intuition to this definition, let us present a useful $\BH_2$-complete problem.

	\begin{pb}[$\Sat \times \coSat$]
		Given a pair $(\phi_1, \phi_2)$ of propositional logic formulas, is it true that $\phi_1$ is satisfiable and that $\phi_2$ is not?
	\end{pb}
	
	\begin{lm}[App.~\ref{pf_sat_cosat}] \label{lm_sat_cosat}
		The problem $\Sat \times \coSat$ is $\BH_2$-complete.
	\end{lm}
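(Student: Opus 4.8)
The plan is to establish both membership and hardness. Membership in $\BH_2$ is essentially immediate from the definition: the problem $\Sat \times \coSat$ is literally of the form $P \cap Q$, where $P$ is ``$\phi_1$ is satisfiable'' (an $\NP$-easy problem, namely $\Sat$ itself) and $Q$ is ``$\phi_2$ is unsatisfiable'' (a $\coNP$-easy problem, namely $\coSat$), and the two share the same instance space, i.e. pairs $(\phi_1,\phi_2)$ of propositional formulas. So a single call to an $\NP$ oracle and a single call to a $\coNP$ oracle decide membership, which is exactly what $\BH_2$-easiness requires.

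For hardness, I would take an arbitrary $\BH_2$-easy problem $R = P \cap Q$ with $P$ being $\NP$-easy and $Q$ being $\coNP$-easy over a common instance set, and reduce it in polynomial time to $\Sat \times \coSat$. The key observation is that since $P \in \NP$ and $\Sat$ is $\NP$-complete, there is a polynomial-time reduction $f$ from $P$ to $\Sat$; and since $Q \in \coNP$ and $\coSat$ is $\coNP$-complete, there is a polynomial-time reduction $g$ from $Q$ to $\coSat$. Given an instance $x$ of $R$, I would output the pair $(f(x), g(x))$. Then $x \in P$ iff $f(x)$ is satisfiable, and $x \in Q$ iff $g(x)$ is unsatisfiable, so $x \in R = P \cap Q$ iff $(f(x),g(x))$ is a positive instance of $\Sat \times \coSat$. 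The map $x \mapsto (f(x),g(x))$ is computable in polynomial time, which completes the reduction.

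The only point requiring a little care is the precise notion of $\BH_2$-hardness being used: to say $\Sat \times \coSat$ is $\BH_2$-complete, I must fix the reduction notion (polynomial-time many-one reductions) and check that the above construction is indeed many-one and polynomial. The reduction producing the \emph{pair} $(f(x),g(x))$ from the single input $x$ is a standard many-one reduction into the product problem, so there is no subtlety beyond composing two known $\NP$- and $\coNP$-complete reductions and bundling their outputs.

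The main obstacle, such as it is, lies entirely in the hardness direction and amounts to correctly decomposing an arbitrary $\BH_2$ problem into its $\NP$-part $P$ and $\coNP$-part $Q$ and invoking completeness of $\Sat$ and $\coSat$ \emph{separately} for each part; the membership direction is a direct unfolding of the definition of $\BH_2$. I would therefore spend the bulk of the argument carefully verifying that $x \in P \cap Q \iff \bigl(f(x) \in \Sat \text{ and } g(x) \in \coSat\bigr)$, which is where the two independent reductions must line up.
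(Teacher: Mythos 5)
Your proof is correct and follows essentially the same route as the paper: membership by unfolding the definition of $\BH_2$ with $\Sat$ and $\coSat$ as the $\NP$- and $\coNP$-parts, and hardness by reducing an arbitrary $P \cap Q$ via the pair map $x \mapsto (f(x), g(x))$ built from the completeness of $\Sat$ (resp.\ $\coSat$) for $\NP$ (resp.\ $\coNP$). No gaps to report.
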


	\section{Negotiation in parity games} \label{sec_nego}
	
	\subsection{Requirements, negotiation, and link with SPEs}
	
	In the algorithms we provide, we make use of the characterization of SPEs in prefix-independent games that has been presented in \cite{Concur}.
	We recall here the notions needed, slightly adapted to Boolean games.
	
	\begin{defi}[Requirement]
		A \emph{requirement} on a game $G$ is a mapping $\lambda:~V~\to~\{0, 1, +\infty\}$.
		The set of the requirements on $G$ is denoted by $\Req G$, and is ordered by the componentwise order $\leq$: we write $\lambda \leq \lambda'$ when we have $\lambda(v) \leq \lambda'(v)$ for all $v$.
	\end{defi}
	
	\begin{defi}[$\lambda$-consistency]
		Let $\lambda$ be a requirement on the game $G$.
		A play $\rho$ in $G$ is \emph{$\lambda$-consistent} iff for each player $i$ and every index $k$ such that $\rho_k \in V_i$, we have $\mu_i(\rho_k\rho_{k+1} \dots) \geq \lambda(\rho_k)$.
		The set of $\lambda$-consistent plays in $G_{\|v_0}$ is denoted by $\lCons(v_0)$.
	\end{defi}

	In this paper, we will consider specifically requirements that are \emph{satisfiable}.
	
	\begin{defi}[Satisfiability]
	    The requirement $\lambda$, on the game $G$, is \emph{satisfiable} iff for each state $v \in V$, there exists at least one $\lambda$-consistent play from $v$.
	\end{defi}

\begin{lm}\label{lm_satisfiable_np_easy}
    Given a parity game $G$ and a requirement $\lambda$, deciding whether $\lambda$ is satisfiable is $\NP$-easy.\footnote{It is actually $\NP$-complete, as we can prove by slightly adapting the proof of Theorem~\ref{thm_fixed_point_np_complete}.}
\end{lm}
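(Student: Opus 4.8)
The plan is to reduce satisfiability to a polynomial-size guess-and-check, exploiting the prefix-independence of parity games. First I would observe that, since $\mu_i(\rho_k \rho_{k+1} \dots) = \mu_i(\rho)$ for every suffix of a play, a play $\rho$ is $\lambda$-consistent exactly when, for every vertex $w \in \Occ(\rho)$ controlled by a player $i$, we have $\mu_i(\rho) \geq \lambda(w)$. Reading off the three possible values of $\lambda$, this says that $\rho$ never visits a vertex $w$ with $\lambda(w) = +\infty$, and that whenever $\rho$ visits a vertex $w \in V_i$ with $\lambda(w) = 1$, player $i$ must win $\rho$. Crucially, whether player $i$ wins depends only on $\Inf(\rho)$, namely on the parity of $\min \kappa_i(\Inf(\rho))$.

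This suggests guessing the set $I = \Inf(\rho)$ directly. Given a candidate $I \subseteq V$, all payoffs are determined: set $\mu_i = 1$ iff $\min \kappa_i(I)$ is even. I would then define a forbidden set $F$ consisting of every vertex $w$ with $\lambda(w) = +\infty$, together with every vertex $w \in V_i$ with $\lambda(w) = 1$ for which $\mu_i = 0$. The key equivalence to establish is this: there is a $\lambda$-consistent play from $v$ if and only if there is a set $I \subseteq V \setminus F$ that is the vertex set of a cycle --- that is, $I$ induces a strongly connected subgraph in which every vertex has a successor inside $I$ --- such that $I$ is reachable from $v$ by a path staying inside $V \setminus F$ (in particular $v \in V \setminus F$). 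For the constructive direction, one walks from $v$ to $I$ and then cycles through $I$ forever; the resulting play has $\Inf = I$, avoids $F$, and is therefore consistent. For the converse, one takes $I = \Inf(\rho)$ and checks that consistency forces $\Occ(\rho) \subseteq V \setminus F$.

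Each clause of this characterization --- computing $\min \kappa_i(I)$ and hence $F$, testing $I \cap F = \emptyset$, testing strong connectivity together with the successor condition, and testing reachability of $I$ from $v$ within $V \setminus F$ --- is checkable in polynomial time, so the subset $I$ serves as a polynomial certificate that a $\lambda$-consistent play exists from $v$. Satisfiability is the conjunction over all $v \in V$ of these existence statements; since there are only $|V|$ of them, a nondeterministic machine can guess one certificate $I_v$ per vertex $v$ and verify all of them in polynomial time, which places the whole problem in $\NP$.

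I expect the main obstacle to be the equivalence itself, and specifically the bookkeeping that once $I$ is fixed the winning or losing status of every player is fixed globally, so that the set of vertices the play must avoid --- which includes vertices visited only finitely often --- is entirely determined by $I$. Getting this interaction right, namely that the forbidden set $F$ depends on $\Inf(\rho)$ but constrains all of $\Occ(\rho)$, is where the care is needed; the graph-theoretic checks are then routine.
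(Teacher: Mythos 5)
Your proof is correct and takes essentially the same approach as the paper's: both certify satisfiability by guessing, for each vertex $v$, the set of infinitely-visited vertices of a lasso-shaped witness play, then checking strong connectivity, evenness of $\min \kappa_i$ on that set for every visited vertex with requirement $1$, and avoidance of vertices with requirement $+\infty$. The only cosmetic difference is that the paper additionally guesses the cycle-free prefix $h_v$ as part of the certificate, whereas you replace it by a deterministic reachability test inside $V \setminus F$.
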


\begin{proof}
    An $\NP$ algorithm for that problem guesses, first, a family $(h_v, W_v)_{v \in V}$, where for each $v$, $h_v$ is a history without cycle starting from $v$ and $W_v$ is a subset of $V$ with $\last(h_v) \in W_v$.
    That family is an object of polynomial size, and certifies that $\lambda$ is satisfiable if for each $v$: (1) $\lambda(v) \neq +\infty$; (2) the subgraph $(W_v, E \cap W_v^2)$ is strongly connected; (3) for each vertex $u \in \Occ(h_v) \cup W_v$ such that $\lambda(u) = 1$, if $i$ is the player who controls $u$, then the color $\min~\kappa_i(W_v)$ is even.
    
    Indeed, if those three points are satisfied, then for each $v$, the play $h_v c_v^\omega$, where $c_v$ is a cycle (not necessarily simple, but which can be chosen of size at most $(\card W_v)^2$) that visits all the vertices of $W_v$ at least once and none other, is $\lambda$-consistent.
    Conversely, if from each $v$, there exists a $\lambda$-consistent play $\rho$, then the pair $(h_v, W_v)$, where $W_v = \Inf(\rho)$ and $h_v$ is a prefix of $\rho$ ending in $W_v$ in which the cycles have been removed, satisfies those three properties --- those can be checked in polynomial time.
\end{proof}

Each requirement induces a notion of rationality for a coalition of players.
	
	\begin{defi}[$\lambda$-rationality]
		Let $\lambda$ be a requirement on the game $G$.
		A strategy profile $\bsigma_{-i}$ is \emph{$\lambda$-rational assuming} the strategy $\sigma_i$ iff for every history $hv$ compatible with $\bsigma_{-i}$, the play $\< \bsigma_{\|hv} \>_v$ is $\lambda$-consistent.
		It is \emph{$\lambda$-rational} if it is $\lambda$-rational assuming some strategy.
		The set of $\lambda$-rational strategy profiles in $G_{\|v_0}$ is denoted by $\lRat(v_0)$.
	\end{defi}
	
	The notion of $\lambda$-rationality qualifies the \emph{environment} against player $i$, i.e. the coalition of all the players except $i$: they play $\lambda$-rationally if their strategy profile can be completed by a strategy of player $i$, such that in every subgame, each player gets their requirement satisfied.
	
	But then, $\lambda$-rationality restrains the behaviours of the players against player $i$: that one may be able to win against a $\lambda$-rational environment while it is not the case against a fully hostile one.
	This is what the \emph{negotiation function} captures.
	
	\begin{defi}[Negotiation]
		The \emph{negotiation function} is a function that transforms every requirement $\lambda$ into a requirement $\nego(\lambda)$, defined by, for each $i \in \Pi$ and $v \in V_i$, and with the convention $\inf\emptyset = +\infty$:
		$$\nego(\lambda)(v) = \inf_{\bsigma_{-i} \in \lRat(v)} ~\sup_{\sigma_i \in \Sigma_i(G_{\|v})} ~\mu_i(\< \bsigma_{-i}, \sigma_i \>_v).$$
	\end{defi}
	
	\begin{rk}
	    If player $i$ follows the strategy $\sigma_i$ assuming which $\bsigma_{-i}$ is $\lambda$-rational, then they get at least the payoff $\lambda(v)$, hence $\nego(\lambda)(v) \geq \lambda(v)$ and the negotiation function is non-decreasing.
	    Moreover, if $\lambda \leq \lambda'$, then all the $\lambda'$-rational strategy profiles are also $\lambda$-rational, hence $\nego(\lambda) \leq \nego(\lambda')$ and the negotiation function is monotonic.
	\end{rk}

	The fixed points of the negotiation function characterize the SPEs of a game: indeed, when some play $\rho$ is $\lambda$-consistent for some fixed point $\lambda$, it means that it is won by every player who could ensure their victory from a state visited by $\rho$, while playing against a rational environment.
	Better: all the SPEs are characterized by the least fixed point of the negotiation function, which exists by Tarski's fixed point theorem, and which we will write $\lambda^*$ in the rest of this paper.
	An equivalent result exists for NEs, that are characterized by the requirement $\nego(\lambda_0)$, where $\lambda_0: v \mapsto 0$ is the \emph{vacuous requirement}.
	
	\begin{thm} \label{thm_nego}
		In a prefix-independent Boolean game $G_{\|v_0}$:
		\begin{itemize}
		    \item the set of NE outcomes is exactly the set of $\nego(\lambda_0)$-consistent plays;
		    
		    \item the set of SPE outcomes is exactly the set of $\lambda^*$-consistent plays.
		\end{itemize}
	\end{thm}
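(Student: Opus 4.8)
The plan is to prove the two statements separately, in both cases reducing to value-theoretic ``folk theorem'' arguments that hinge on prefix-independence. I first record the simplification it affords: since $\mu_i(\rho_k\rho_{k+1}\cdots) = \mu_i(\rho)$ for every $k$, a play $\rho$ is $\lambda$-consistent exactly when $\mu_i(\rho) \geq \lambda(v)$ for every $i$ and every vertex $v \in V_i$ occurring on $\rho$; as payoffs are Boolean this merely says that every player who controls a vertex $v$ on $\rho$ with $\lambda(v) = 1$ wins $\rho$.

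For the NE statement, note that every profile is $\lambda_0$-rational, so $\nego(\lambda_0)(v)$ is the ordinary antagonistic value of $v$ for its controller $i$, which by determinacy of parity games equals $1$ iff $i$ wins the zero-sum game from $v$ against the coalition $-i$. I would prove both inclusions directly. If $\bsigma$ is an NE with outcome $\rho$ that is not $\nego(\lambda_0)$-consistent, some $\rho_k \in V_i$ satisfies $\mu_i(\rho) = 0$ and $\nego(\lambda_0)(\rho_k) = 1$; then $i$ follows $\rho$ up to $\rho_k$ and switches to a winning strategy, which by prefix-independence yields payoff $1$ and contradicts the NE condition. Conversely, from a $\nego(\lambda_0)$-consistent $\rho$ I build the NE that follows $\rho$ and, as soon as a player $i$ is the first to deviate, has $-i$ switch to an optimal strategy holding $i$ to the antagonistic value of the deviation vertex; consistency makes that value at most $\mu_i(\rho)$, so no deviation pays.

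For the SPE statement I first observe that $\nego$ is monotone (raising $\lambda$ shrinks $\lRat$, hence raises the defining infimum), so by Tarski's theorem $\lambda^* = \sup_\alpha \lambda^{(\alpha)}$ along the increasing chain $\lambda^{(0)} = \lambda_0$, $\lambda^{(\alpha+1)} = \nego(\lambda^{(\alpha)})$, and a play is $\lambda^*$-consistent iff it is $\lambda^{(\alpha)}$-consistent for all $\alpha$. Necessity (every SPE outcome is $\lambda^*$-consistent) then follows by transfinite induction on $\alpha$, the base and limit steps being immediate. For the successor step, let $\bsigma$ be an SPE with outcome $\rho$, fix $\rho_k \in V_i$, and put $h = \rho_0\cdots\rho_k$. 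The key point is that the residual coalition profile $\bsigma_{-i\|h}$ is $\lambda^{(\alpha)}$-rational: any compatible sub-outcome $\< \bsigma_{\|hgw} \>_w$ is the outcome of the SPE $\bsigma_{\|hgw}$ in a subgame, hence $\lambda^{(\alpha)}$-consistent by the induction hypothesis. Feeding this particular profile into the infimum defining $\nego(\lambda^{(\alpha)})(\rho_k)$, and bounding $\sup_{\tau_i}\mu_i(\< \bsigma_{-i\|h}, \tau_i \>_{\rho_k}) \leq \mu_i(\rho)$ via the NE property of $\bsigma_{\|h}$ together with prefix-independence, gives $\lambda^{(\alpha+1)}(\rho_k) = \nego(\lambda^{(\alpha)})(\rho_k) \leq \mu_i(\rho)$, as needed.

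The sufficiency direction --- every $\lambda^*$-consistent play is an SPE outcome --- is the main obstacle, and is exactly where the fixed-point equation $\nego(\lambda^*) = \lambda^*$ is indispensable. Since $\lambda^*$ is satisfiable, that equation yields, for each vertex $v \in V_i$, a $\lambda^*$-rational punishing profile $\btau^v_{-i}$ against which $i$ cannot exceed $\lambda^*(v)$. Given a $\lambda^*$-consistent play $\rho$, I would let $\bsigma$ follow $\rho$ and, at the first deviation of a player $j$ at a vertex $w$, have the others switch to $\btau^w_{-j}$, recursively reapplying this discipline after any later deviation. To check subgame-perfection, consider any subgame and any player $i$ weighing a deviation at a vertex $v \in V_i$: the ongoing target play being $\lambda^*$-consistent guarantees $i$ at least $\lambda^*(v)$, while the punishing profile caps any deviation at $\nego(\lambda^*)(v) = \lambda^*(v)$, so deviating never helps. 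The delicate feature is that the punishments are \emph{themselves} $\lambda^*$-rational, so each punishment phase again produces $\lambda^*$-consistent outcomes in all of its own subgames; this makes the argument reproduce itself at every level, so the profile is genuinely subgame-perfect and not merely a Nash equilibrium. Most of the remaining work is the bookkeeping --- tracking the last deviator along each history --- that turns this recursive description into a well-defined profile and a rigorous verification.
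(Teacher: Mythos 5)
The paper's own proof of this theorem is a two-line reduction: it cites the characterization from \cite{Concur}, which holds for prefix-independent games \emph{with steady negotiation}, and merely checks that Boolean payoffs make the relevant infimum attained. You instead attempt a self-contained proof, which is a much more ambitious route. Your NE argument and the \emph{necessity} half of the SPE statement are essentially sound: the monotonicity of $\nego$, the transfinite (in fact finite, since the lattice of requirements is finite) iteration, and the step of feeding the actual coalition substrategy $\bsigma_{-i\|h}$ into the infimum defining $\nego(\lambda^{(\alpha)})(\rho_k)$ are exactly the right ideas. (Two small points: you do not need determinacy --- and for \emph{general} prefix-independent Boolean payoffs you do not have it --- since the deviator can respond to the specific NE coalition profile rather than to a uniform winning strategy; and the satisfiability of $\lambda^*$ is asserted without proof, which cannot be borrowed from Theorem~\ref{thm_existence_spe} because that theorem is itself derived from the present one.)

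The genuine gap is in the sufficiency direction, and it is not ``bookkeeping.'' The cap $\sup_{\sigma_i}\mu_i(\langle \tau^v_{-i},\sigma_i\rangle_v)\leq\lambda^*(v)$ bounds only plays that are compatible with that \emph{single} profile forever; your rule of ``recursively reapplying this discipline after any later deviation'' restarts a fresh punishing profile each time the punished player deviates again, so a player who deviates infinitely often produces a play that is only piecewise compatible with the successive profiles, and by prefix-independence its payoff is determined by its tail, which no individual cap controls. Concretely, take $V_i=\{u\}$, $V_j=\{g,s\}$ with edges $u\to g$, $u\to s$, $g\to u$, $g\to s$, $s\to s$, colors $\kappa_i(u)=\kappa_i(s)=1$, $\kappa_i(g)=0$, and all colors even for $j$. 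Here $\lambda^*(u)=0$, and the profile $\tau^u_{-i}$ ``at $g$: go to $u$ on its first visit, to $s$ afterwards'' is $\lambda^*$-rational and caps $i$ at $0$, hence is a legitimate choice in your construction. Now take the $\lambda^*$-consistent (and genuinely SPE) target play $\rho=us^\omega$: player $i$ deviates at $u$ to $g$, the coalition returns to $u$, player $i$ deviates again, your rule re-switches to a fresh copy of $\tau^u_{-i}$ which again returns to $u$, and so on; the deviation play is $(ug)^\omega$, which $i$ \emph{wins}, beating the conforming payoff $0$, so the constructed profile is not even an NE. The opposite rule (never re-switch against the punished player) fails too, since a $\lambda^*$-rational profile need not be optimal in its own subgames, so the NE condition can break inside a punishment phase after a deviation at a vertex with $\lambda^*=1$. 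This composition-of-punishments problem across unboundedly many deviations is precisely what the abstract negotiation game of Section~\ref{sec_nego} is built to handle (Challenger's winning condition explicitly covers infinitely many deviations, and Prover's winning strategy coordinates all proposals), and it is why the paper delegates this direction to the machinery of \cite{Concur} rather than to a per-vertex punishment argument. Your $\lambda^*$-rationality ``self-reproduction'' remark supplies the \emph{lower} bounds for conforming players, but not the needed \emph{upper} bound on an infinitely-deviating player's payoff, which is the crux of this direction.
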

	
	\begin{proof}
		By \cite{Concur}, this result is true for any prefix-independent game \emph{with steady negotiation}, i.e. such that for every requirement $\lambda$, for every player $i$ and for every vertex $v$, if there exists a $\lambda$-rational strategy profile $\bsigma_{-i}$ from $v$, there exists one that minimizes the quantity $\sup_{\sigma_i}~\mu_i(\< \bsigma_{-i}, \sigma_i \>_v).$
		In the case of Boolean games, the function $\mu_i$ can only take the values $0$ and $1$, hence this supremum is always realized.
	\end{proof}

\begin{ex}
	Let us consider again the game of Figure~\ref{fig_ex1}.
	Every play in that game --- like in every game --- is $\lambda_0$-rational.
	The requirement $\lambda_1 = \nego(\lambda_0)$ is equal to $1$ on the states $c$ and $e$ (the states from which the player controlling those states can enforce the victory), and to $0$ in each other one.
	Then, the $\lambda_1$-consistent plays are exactly the plays supported by a Nash equilibrium: the play $ace^\omega$, and the play $ab^\omega$.
	
	Now, from the state $a$, the only strategy profile that can make player $\Circle$ lose if she chooses to go to $c$ is $\sigma_\Box: ac \mapsto d$, which was $\lambda_0$-rational but is not $\lambda_1$-rational: the play $cd^\omega$ is not $\lambda_1$-consistent.
	Therefore, against a $\lambda_1$-rational environment, player $\Circle$ can enforce the victory by going to the state $c$, hence $\lambda_2(a) = 1$, where $\lambda_2 = \nego(\lambda_1)$.
	Then, the requirement $\lambda_2$ is a fixed point of the negotiation function, and consequently the least one, hence the only play supported by an SPE from the state $a$ is the only play that is $\lambda_2$-consistent, namely $ace^\omega$.
\end{ex}

	\subsection{The existence of SPEs in parity games}
	
	In parity games, the existence of SPEs is guaranteed.

	\begin{thm}[\cite{DBLP:conf/fsttcs/Ummels06}] \label{thm_existence_spe}
		There exists an SPE in every parity game.
	\end{thm}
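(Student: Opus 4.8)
The plan is to combine the fixed-point setup with the characterization of Theorem~\ref{thm_nego}. First I would observe that $\nego$ is monotone on the finite lattice $(\Req G, \leq)$: enlarging $\lambda$ shrinks the set $\lRat(v)$ of $\lambda$-rational profiles, so the infimum defining $\nego(\lambda)(v)$ can only grow. Since $\lambda_0$ is the least requirement, the Kleene chain $\lambda_0 \leq \nego(\lambda_0) \leq \nego^2(\lambda_0) \leq \dots$ is ascending and, the lattice being finite, stabilizes after finitely many steps at the least fixed point $\lambda^*$. By the second item of Theorem~\ref{thm_nego}, the SPE plays are exactly the $\lambda^*$-consistent plays, and, as the proof of that theorem shows, every $\lambda^*$-consistent play is the outcome of some SPE. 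Hence the whole statement reduces to showing that $\lambda^*$ is \emph{satisfiable}, i.e.\ that there is a $\lambda^*$-consistent play from every vertex.

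I would prove satisfiability by induction along the chain $\lambda_n = \nego^n(\lambda_0)$. The base case is immediate, since every play is $\lambda_0$-consistent. For the inductive step, assume $\lambda_n$ is satisfiable. The first observation is that then $\lRat(v) \neq \emptyset$ for every $v$: fixing for each $w$ a $\lambda_n$-consistent play $\pi^w$, I build a complete profile that follows $\pi^{w_0}$ from the initial vertex $w_0$ and, each time player $i$ deviates to a vertex $w$, resets to $\pi^w$; every subgame outcome of this profile is a suffix of some $\pi^w$, hence $\lambda_n$-consistent (a suffix of a $\lambda_n$-consistent play is $\lambda_n$-consistent), so its projection onto $-i$ is $\lambda_n$-rational. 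Consequently the infimum in the definition of $\nego(\lambda_n)(v)$ ranges over a nonempty set and, the game being Boolean, is realized by some $\lambda_n$-rational $\bsigma^*_{-i}$ with $\sup_{\sigma_i}\mu_i(\langle \bsigma^*_{-i},\sigma_i\rangle_v) = \nego(\lambda_n)(v) \in \{0,1\}$; in particular $\lambda_{n+1}$ never takes the value $+\infty$.

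It then remains to turn these local optimal profiles into a single $\lambda_{n+1}$-consistent play from every vertex. Using prefix-independence, a play $\rho$ is $\lambda_{n+1}$-consistent exactly when every player controlling a vertex $w \in \Occ(\rho)$ with $\lambda_{n+1}(w) = 1$ wins $\rho$. At a vertex $w$ with $\lambda_{n+1}(w) = 1$ the inf-achiever $\bsigma^*_{-i}$ lets its controller force a win along a $\lambda_n$-consistent outcome, while at a vertex with $\lambda_{n+1}(w) = 0$ any outcome of $\bsigma^*_{-i}$ is already acceptable for its controller. I would stitch these witnesses into one infinite path that, whenever it reaches a vertex of $\lambda_{n+1}$-value $1$, thereafter follows that vertex's winning witness, and that stabilizes in a bottom strongly connected region so that the infinitely-often parity condition is governed by a single witness. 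Once the chain stabilizes at $\lambda^* = \lambda_N$, $\lambda^*$-consistency coincides with $\lambda_N$-consistency, so the same construction at the fixed point yields, from every $v$, a $\lambda^*$-consistent play; feeding it into Theorem~\ref{thm_nego} produces the desired SPE.

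The main obstacle is precisely this stitching step: reconciling, along one infinite path, the parity requirements of all the players the path cannot avoid, since in a parity game one player's victory may force another's defeat. The delicate points are to order the visits to the threatened vertices so that no later detour reopens an already-discharged requirement, and to ensure the infinite tail lies in a region where the least color seen infinitely often is even for every still-threatened player simultaneously; this is exactly where the guarantee $\nego(\lambda_n)(v) \in \{0,1\}$, rather than $+\infty$, obtained in the previous step is used to certify that a compatible tail exists.
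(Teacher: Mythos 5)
Your reduction of the statement to the satisfiability of $\lambda^*$ is correct, and your argument that $\nego(\lambda_n)(v) \in \{0,1\}$ whenever $\lambda_n$ is satisfiable (via the ``reset-on-deviation'' profile witnessing $\lRat(v) \neq \emptyset$) is sound. But the proof has a genuine gap exactly where you acknowledge an ``obstacle'': the inductive step never establishes that $\lambda_{n+1}$ is \emph{satisfiable}, only that it avoids the value $+\infty$. These are far from equivalent. Satisfiability demands a single play that is simultaneously won by \emph{every} requirement-$1$ player whose vertices it visits, whereas finiteness of $\nego(\lambda_n)(v)$ only says that each such player, individually, can force a win against $\lambda_n$-rational opponents. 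Your closing claim that the guarantee $\nego(\lambda_n)(v) \in \{0,1\}$ ``certifies that a compatible tail exists'' is unsupported --- in a parity game, one player's winning tail can force another visited player's defeat, and nothing in your argument rules this out. Since Theorem~\ref{thm_nego} turns $\lambda^*$-consistent plays into SPEs but also turns the \emph{absence} of such plays into the absence of SPEs, the stitching step is not a technical refinement: it is the entire content of the theorem, and it is left unproven.

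The paper's proof (following Ummels) avoids the stitching problem rather than solving it head-on, and it does so by \emph{not} iterating $\nego$. It builds a decreasing sequence of edge sets $E_n$: at each stage, for each player $i$ and each vertex from which $i$ wins the zero-sum game played on $(V, E_n)$, one fixes a \emph{memoryless uniformly optimal} winning strategy and deletes every other outgoing edge of that vertex. The payoff of this construction is that any play using only edges of $E_{n+1}$ is automatically compatible with each winning player's memoryless winning strategy, hence won by \emph{all} of them at once --- simultaneity comes for free from the graph restriction, which is precisely what your local inf-achieving profiles cannot provide. The resulting limit requirement $\lambda$ satisfies $\lambda \geq \nego(\lambda)$ and is satisfiable by construction; since $\lambda^* \leq \lambda$, every $\lambda$-consistent play is $\lambda^*$-consistent, and existence follows without ever showing that the Kleene iterates $\nego^n(\lambda_0)$ themselves are satisfiable (note also that the paper's $\lambda'_n$ only dominate those iterates, and their limit need not be the least fixed point, as Figure~\ref{fig_ex2'} illustrates). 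To repair your proof you would essentially have to import this mechanism: replace the appeal to inf-achieving profiles by memoryless winning strategies in pruned zero-sum games, at which point you have reconstructed the paper's argument.
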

	
\begin{proof}[Proof sketch]
    This theorem is a result due to Ummels.
    In App.~\ref{pf_existence_spe}, we rephrase his proof in terms of requirements and negotiation.
    Let us give here the main intuitions.
    We define a decreasing sequence $(E_n)_n$ of subsets of $E$, and an associated sequence $(\lambda'_n)_n$ of requirements, keeping the hypothesis that $E_n$ always contains at least one outgoing edge from each vertex.
	First, $E_0 = E$ and $\lambda'_0$ is the vacuous requirement.
	Then, for every $n$, for each player $i$ and each $v \in V_i$, we define $\lambda'_{n+1}(v)$ as equal to $1$ if and only if in the game obtained from $G$ by removing the edges that are not in $E_n$, player $i$ can enforce the victory from $v$, against a fully hostile environment.
	Then, from each such state, we choose a memoryless winning strategy (which always exists, see \cite{DBLP:conf/dagstuhl/Kusters01}), that is, we choose one edge to always follow to ensure the victory, and we remove the other outgoing edges from $E_n$ to obtain $E_{n+1}$.
	We prove that for each $n$, we have $\lambda'_{n+1} \geq \nego(\lambda'_n)$, hence the sequence $(\lambda'_n)_n$ converges to a satisfiable fixed point of the negotiation function --- which is not necessarily the least one.
\end{proof}

	As a consequence, in every parity game $G$, we have $\lambda^*(v) \in \{0, 1\}$: this is why in what follows, we only consider requirements with values in $\{0, 1\}$.

\begin{ex}
	Let us consider the (Büchi) game of Figure~\ref{fig_ex2}: recall that the objective of each player is to see infinitely often the color $0$.
	If we follow the algorithm from \cite{DBLP:conf/fsttcs/Ummels06}, as presented above, we remove the edge $df$ --- because always going to $e$ is a winning strategy for player $\Diamond$ from $d$ --- and the edges $ba$ and $bd$ --- because always going to $c$ is a winning strategy for player $\Box$ from $b$.
	Then, the algorithm reaches a fixed point, see Figure~\ref{fig_ex2'}.
	Our proof states that every play that uses only the remaining edges is a play supported by an SPE.
	Indeed, those plays are $\lambda'_1$-consistent, where $\lambda'_1$ is the requirement given in red on Figure~\ref{fig_ex2'}: it is a fixed point of the negotiation function.
	However, it is not the least one: for example, the play $ab(dedf)^\omega$ is not $\lambda'_1$-consistent, but it is also a play that is supported by an SPE.
	The least fixed point is given in red on Figure~\ref{fig_ex2}.
	
	\begin{figure}
	    \centering
		\begin{subfigure}[b]{0.45\textwidth}
		    \centering
			\begin{tikzpicture}[->, >=latex,scale=0.7, every node/.style={scale=0.7}]
			\node[state] (a) at (-4, 0) {$a$};
			\node[state, rectangle] (b) at (-2, 0) {$b$};
			\node[state] (c) at (-2, -2) {$c$};
			\node[state, diamond] (d) at (0, 0) {$d$};
			\node[state] (e) at (1.4, -1.4) {$e$};
			\node[state] (f) at (1.4, 1.4) {$f$};
			
			\path[bend left = 20] (a) edge (b);
			\path[bend left = 20] (b) edge (a);
			\path[bend left = 20] (c) edge (b);
			\path[bend left = 20] (b) edge (c);
			\path[loop below] (c) edge (c);
			\path (b) edge (d);
			\path[bend left = 20] (d) edge (e);
			\path[bend left = 20] (e) edge (d);
			\path[bend left = 20] (d) edge (f);
			\path[bend left = 20] (f) edge (d);
			\path[loop below] (e) edge (e);
			\path[loop above] (f) edge (f);
			
			\node (a') at (-4, 0.8) {$\stackrel{\playcircle}{0} \stackrel{\Box}{1} \stackrel{\Diamond}{1}$};
			\node (b') at (-2, 0.8) {$\stackrel{\playcircle}{1} \stackrel{\Box}{1} \stackrel{\Diamond}{1}$};
			\node (c') at (-1.2, -2) {$\stackrel{\playcircle}{0} \stackrel{\Box}{0} \stackrel{\Diamond}{1}$};
			\node (d') at (-0.5, 0.5) {$\stackrel{\playcircle}{1} \stackrel{\Box}{1} \stackrel{\Diamond}{1}$};
			\node (e') at (2.2, -1.4) {$\stackrel{\playcircle}{1} \stackrel{\Box}{0} \stackrel{\Diamond}{1}$};
			\node (f') at (2.2, 1.4) {$\stackrel{\playcircle}{1} \stackrel{\Box}{1} \stackrel{\Diamond}{0}$};
			
			\node[red] (l) at (-4, -2) {$(\lambda^*)$};
			\node[red] (a'') at (-4, -0.7) {$0$};
			\node[red] (b'') at (-1.3, -0.3) {$1$};
			\node[red] (c'') at (-2.7, -2) {$1$};
			\node[red] (d'') at (-0.4, -0.4) {$1$};
			\node[red] (e'') at (0.7, -1.4) {$0$};
			\node[red] (f'') at (0.7, 1.4) {$0$};
			\end{tikzpicture}
			\caption{A Büchi game.}
			\label{fig_ex2}
		\end{subfigure}
		\begin{subfigure}[b]{0.45\textwidth}
		    \centering
			\begin{tikzpicture}[->, >=latex,scale=0.7, every node/.style={scale=0.7}]
			\node[state] (a) at (-4, 0) {$a$};
			\node[state, rectangle] (b) at (-2, 0) {$b$};
			\node[state] (c) at (-2, -2) {$c$};
			\node[state, diamond] (d) at (0, 0) {$d$};
			\node[state] (e) at (1.4, -1.4) {$e$};
			\node[state] (f) at (1.4, 1.4) {$f$};
			
			\path[bend left = 20] (a) edge (b);
			\path[bend left = 20] (b) edge (c);
			\path[loop below] (c) edge (c);
			\path[bend left = 20] (d) edge (f);
			\path[bend left = 20] (e) edge (d);
			\path[bend left = 20] (f) edge (d);
			\path[loop below] (e) edge (e);
			\path[loop above] (f) edge (f);
			
			\node (a') at (-4, 0.8) {$\stackrel{\playcircle}{0} \stackrel{\Box}{1} \stackrel{\Diamond}{1}$};
			\node (b') at (-2, 0.8) {$\stackrel{\playcircle}{1} \stackrel{\Box}{1} \stackrel{\Diamond}{1}$};
			\node (c') at (-1.2, -2) {$\stackrel{\playcircle}{0} \stackrel{\Box}{0} \stackrel{\Diamond}{1}$};
			\node (d') at (-0.5, 0.5) {$\stackrel{\playcircle}{1} \stackrel{\Box}{1} \stackrel{\Diamond}{1}$};
			\node (e') at (2.2, -1.4) {$\stackrel{\playcircle}{1} \stackrel{\Box}{0} \stackrel{\Diamond}{1}$};
			\node (f') at (2.2, 1.4) {$\stackrel{\playcircle}{1} \stackrel{\Box}{1} \stackrel{\Diamond}{0}$};
			
			\node[red] (l) at (-4, -2) {$(\lambda'_1)$};
			\node[red] (a'') at (-4, -0.7) {$1$};
			\node[red] (b'') at (-1.3, -0.3) {$1$};
			\node[red] (c'') at (-2.7, -2) {$1$};
			\node[red] (d'') at (-0.4, -0.4) {$1$};
			\node[red] (e'') at (0.7, -1.4) {$0$};
			\node[red] (f'') at (0.7, 1.4) {$0$};
			\end{tikzpicture}
			\caption{Fixed point as computed in \cite{DBLP:conf/fsttcs/Ummels06}.}
			\label{fig_ex2'}
		\end{subfigure}
		\label{fig_ex2''}
		\caption{}
	\end{figure}
\end{ex}
	
The interested reader will find in Appendix~\ref{app_ex} an additional example of parity game, on which we computed the iterations of the negotiation function.

	\subsection{Abstract negotiation game}
	
	Now, let us study how we can compute the negotiation function.
	The \emph{abstract negotiation game} is a tool which already appeared in~\cite{DBLP:journals/mor/FleschP17}, and which has been linked to the negotiation function in~\cite{Concur}.
	It is a game on an infinite graph that opposes two players, \emph{Prover} and \emph{Challenger}: Prover constructs a $\lambda$-rational strategy profile by proposing plays, and Challenger constructs player $i$'s response by accepting those plays or deviating from them.
	We slightly simplify the definition here, by considering only satisfiable requirements, which guarantees that Prover has always a play to propose\footnote{In~\cite{Concur}, a second sink state $\bot$ is added to enable Prover to give up when she has no play to propose. Another purely technical difference is the existence here of a mandatory transition from each state $[hv]$ to the state $[v]$, instead of letting Prover propose a play directly from the state $[hv]$: thus, there are few states from which Prover has a choice to make, which will be useful in what follows.}.

	\begin{defi}[Abstract negotiation game]
		Let $G$ be a parity game, let $\lambda$ be a satisfiable requirement, let $i \in \Pi$ and let $v_0 \in V_i$.
		The associated \emph{abstract negotiation game} is the two-player zero-sum game $\Abs_{\lambda i}(G)_{\|[v_0]} = \left(\{\P, \C\}, S, (S_\P, S_\C), \Delta, \nu \right)_{\|[v_0]}$ where:
		\begin{itemize}
			\item the players $\P$ and $\C$ are called respectively \emph{Prover} and \emph{Challenger};
			
			\item Challenger's states are of the form $[\rho]$, where $\rho$ is a $\lambda$-consistent play of $G$;
			
			\item Prover's states are of the form $[hv]$, where $h \in \Hist_i(G) \cup \{\epsilon\}$ and $\last(h)v \in E$, plus one additional sink state $\top$;
			
			\item the set $\Delta$ contains the transitions of the forms:
			\begin{itemize}
				\item $[v][\rho]$, where $\first(\rho) = v$: Prover proposes the play $\rho$;
				
				\item $[\rho][\rho_0 \dots \rho_k v]$, where $k \in \N$, $v \neq \rho_{k+1}$ and $\rho_kv \in E$: Challenger refuses and deviates;
				
				\item $[hv][v]$ with $h \neq \epsilon$: then, Prover has to propose a new play from the vertex $v$;
				
				\item $[\rho] \top$: Challenger accepts the proposed play;
				
				\item $\top\top$: the game is over;
			\end{itemize}
			
			\item When $\pi$ is a play in the abstract negotiation game, we will use the notation $\dpi$ to denote the play in the original game constructed by Prover's proposals and Challenger's deviations. Thus, the play $\pi$ is won by Challenger iff one of the following conditions is satisfied:
			
			\begin{itemize}
				\item the play $\pi$ has the form $[v_0][\rho^0][h^0 v_1][v_1][\rho^1] \dots [h^{n-1}v_n] [v_n] [\rho^n] \top^\omega,$
				i.e. Challenger accepts a play proposed by Prover, and the play $\dpi = h^0 \dots h^{n-1} \rho^n$ is won by player $i$;
				
				\item or the play $\pi$ has the form $[v_0][\rho^0][h^0 v_1][v_1][\rho^1][h^1v_2] \dots$,
				i.e. Challenger always deviates from the play proposed by Prover, and the play $\dpi = h^0 h^1 \dots$ is won by player $i$.
			\end{itemize}
		\end{itemize}
	\end{defi}

	\begin{thm}[\cite{Concur}, Appendix E] \label{thm_abstract_game}
		Let $G$ be a prefix-independant Boolean game, let $\lambda$ be a satisfiable requirement, let $i \in \Pi$ and let $v_0 \in V_i$. Then, we have $\nego(\lambda)(v_0) = 0$ if and only if Prover has a winning strategy in the associated abstract negotiation game.\footnote{The non-existence of the sink state $\bot$, in which Prover is supposed to get the payoff $-\infty$, does not change this result: since $\lambda$ is assumed to be satisfiable, Prover has always a strategy to get at least the payoff $0$, hence no optimal strategy of Prover plans to follow a transition to $\bot$.}
	\end{thm}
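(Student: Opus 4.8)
The plan is to unfold the definition of the negotiation function in the Boolean setting and then to translate, in both directions, between strategies in the abstract negotiation game and strategy profiles in $G$. Since $\mu_i$ takes only the values $0$ and $1$, each supremum $\sup_{\sigma_i} \mu_i(\<\bsigma_{-i}, \sigma_i\>_{v_0})$ is realized and lies in $\{0,1\}$, so the infimum defining $\nego(\lambda)(v_0)$ equals $0$ exactly when it is attained. Hence $\nego(\lambda)(v_0) = 0$ holds if and only if there is a $\lambda$-rational profile $\bsigma_{-i} \in \lRat(v_0)$ against which player $i$ loses every outcome, i.e. $\mu_i(\<\bsigma_{-i}, \sigma_i\>_{v_0}) = 0$ for all $\sigma_i$. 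It therefore suffices to show that Prover has a winning strategy in $\Abs_{\lambda i}(G)_{\|[v_0]}$ if and only if such a profile exists. The bridge is the correspondence whereby Prover's proposals play the role of the environment $\bsigma_{-i}$, Challenger's deviations play the role of player $i$'s moves, and the reconstructed play $\dpi$ coincides with the outcome of the associated profile in $G$.

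For the direction where $\nego(\lambda)(v_0) = 0$ implies that Prover wins, I would fix a witness $\bsigma_{-i}$, which is $\lambda$-rational assuming some strategy $\sigma_i^\ast$, and define Prover's strategy so that, whenever it is her turn at a state $[v_k]$ reached after Challenger's deviations have produced the original-game history $g$, she proposes the play $\<\bsigma_{-i\|g}, \sigma^\ast_{i\|g}\>_{v_k}$. A short induction shows that $g$ is always compatible with $\bsigma_{-i}$, since Challenger's deviations are player-$i$ moves and never violate $\bsigma_{-i}$; so by $\lambda$-rationality each such proposal is $\lambda$-consistent and is therefore a legal Challenger state $[\rho]$. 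Whether Challenger eventually accepts or deviates infinitely often, every finite prefix of $\dpi$ is compatible with $\bsigma_{-i}$, so $\dpi$ is an outcome of $\bsigma_{-i}$ and is thus lost by player $i$; hence Prover wins.

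For the converse, I would fix a winning Prover strategy $s_\P$ and build $\bsigma_{-i}$ by running the abstract game in parallel with a play of $G$: along any history compatible with the profile being defined, the unique sequence of player-$i$ deviations, together with $s_\P$, determines a current proposal $\rho$ and a current position on it, and the environment simply moves to the next vertex of $\rho$. I would then check that $\bsigma_{-i}$ is $\lambda$-rational by exhibiting the completion $\sigma_i$ that never deviates: from any compatible history $hv$, the outcome $\<\bsigma_{-i\|hv}, \sigma_{i\|hv}\>_v$ is a suffix of the current proposal $\rho$, and $\lambda$-consistency is closed under suffixes, so it is $\lambda$-consistent. Finally, for an arbitrary $\sigma_i$, the outcome $\<\bsigma_{-i}, \sigma_i\>_{v_0}$ is exactly the play $\dpi$ reconstructed from the corresponding abstract play, in which Challenger's deviations mirror $\sigma_i$ and an ``accept'' move mirrors player $i$ following a proposal forever; since $s_\P$ is winning, $\dpi$ is lost by player $i$, so $\mu_i(\<\bsigma_{-i}, \sigma_i\>_{v_0}) = 0$ for every $\sigma_i$, giving $\nego(\lambda)(v_0) = 0$.

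The main obstacle is the careful bookkeeping of this two-way translation: one must verify that the histories produced in $G$ remain compatible with the constructed profile, that Prover's proposals are genuinely $\lambda$-consistent (hence legal moves), and that $\dpi$ coincides with the original-game outcome of the reconstructed profile — in particular matching Challenger's single ``accept'' move against the limit behaviour of player $i$ following a proposal to infinity. Here prefix-independence is what ensures that $\lambda$-consistency of a proposed play does not depend on the history already played, so that proposals and their suffixes may be freely concatenated. One should also note that the two simplifications of the definition relative to \cite{Concur} are harmless: satisfiability guarantees Prover always has a $\lambda$-consistent play to propose, so the sink $\bot$ can be dropped as in the footnote, and the mandatory reset transition $[hv][v]$ merely discards bookkeeping that is already recorded in $\dpi$.
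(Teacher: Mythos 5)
Your proof is correct, but there is little in this paper to compare it against: Theorem~\ref{thm_abstract_game} is not proved here at all --- it is imported from \cite{Concur} (Appendix E), and the only argument the paper supplies is the footnote observing that the two definitional simplifications (dropping the sink state $\bot$, and forcing the transitions $[hv][v]$) are harmless given satisfiability. Your proposal is thus a self-contained reconstruction of the cited proof rather than an alternative to a proof given in the paper, and as a reconstruction it is sound. The reduction of $\nego(\lambda)(v_0)=0$ to the existence of a $\lambda$-rational profile $\bsigma_{-i}$ against which player $i$ loses \emph{every} outcome is exactly right in the Boolean setting (the supremum is realized because $\mu_i$ takes values in $\{0,1\}$, and an infimum of values in $\{0,1,+\infty\}$ is $0$ iff it is attained). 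In the forward direction, the induction showing that the history assembled from proposals and deviations stays compatible with $\bsigma_{-i}$ --- because Challenger's deviations occur only at vertices of $V_i$ --- is precisely what makes each proposed play $\lambda$-consistent, hence a legal move; in the converse direction you isolate the two facts that genuinely need checking, namely that $\lambda$-consistency is closed under taking suffixes (so the ``never deviate'' completion witnesses $\lambda$-rationality of the simulated environment) and that Challenger's one-shot accept move corresponds to player $i$ following a proposal forever, with prefix-independence entering exactly where you say it does. Two small blemishes: the play Prover should propose after the deviation history $g$ has led to $v_k$ is $\langle \bsigma_{\|gv_k} \rangle_{v_k}$, i.e.\ the outcome of the substrategies after $gv_k$, rather than your hybrid notation $\langle \bsigma_{-i\|g}, \sigma^\ast_{i\|g} \rangle_{v_k}$; and the construction of $\bsigma_{-i}$ from a winning Prover strategy, phrased as ``along any history compatible with the profile being defined,'' should be stated as an induction on the length of histories to avoid the appearance of circularity. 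Your closing paragraph correctly reproduces the content of the paper's footnote.
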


	\begin{ex}
		Let $G$ be the game from Figure~\ref{fig_ex1}.
		In this particular case, since there are finitely many possible plays, the abstract negotiation games $\Abs_{\lambda_0 \playcircle} G$ and $\Abs_{\lambda_1 \playcircle}$ have a finite state space.
		They are represented in Figure~\ref{fig_abstract}: the blue states are Prover's states, and the orange ones are Challenger's.
		The dashed states belong to $\Abs_{\lambda_0 \playcircle} G$ but not to $\Abs_{\lambda_1 \playcircle}$.
		Observe that Prover has a winning strategy in $\Abs_{\lambda_0 \playcircle} G$ (in red), but not in $\Abs_{\lambda_1 \playcircle} G$.
		
		\begin{figure}
			\begin{center}
				\begin{tikzpicture}[->,>=latex,scale=0.7, every node/.style={scale=0.7},initial text={}]
				\node[state, initial left, rectangle, blue] (a) at (0, 0) {$a$};
				
				\node[state, rectangle, orange] (abo) at (2, 2) {$ab^\omega$};
				\node[state, rectangle, orange, dashed] (acdo) at (2, 0) {$acd^\omega$};
				\node[state, rectangle, orange] (aceo) at (2, -2) {$ace^\omega$};
				
				\node[state, rectangle, blue] (ac) at (4, 2) {$ac$};
				\node[state, rectangle, blue] (t) at (5, 0) {$\top$};
				\node[state, rectangle, blue] (ab) at (4, -2) {$ab$};
				
				\node[state, rectangle, blue] (c) at (6, 2) {$c$};
				\node[state, rectangle, blue] (b) at (6, -2) {$b$};
				
				\node[state, rectangle, orange, dashed] (cdo) at (8, 3) {$cd^\omega$};
				\node[state, rectangle, orange] (ceo) at (8, 1) {$ce^\omega$};
				\node[state, rectangle, orange] (bo) at (8, -2) {$b^\omega$};

				\path[red] (a) edge (abo);
				\path[dashed] (a) edge (acdo);
				\path (a) edge (aceo);
				\path (abo) edge (ac);
				\path (abo) edge (t);
				\path[dashed] (acdo) edge (t);
				\path[dashed] (acdo) edge (ab);
				\path (aceo) edge (t);
				\path (aceo) edge (ab);
				\path[red] (ac) edge (c);
				\path[red] (t) edge[loop below] (t);
				\path (ab) edge (b);
				\path[dashed, red] (c) edge (cdo);
				\path (c) edge (ceo);
				\path (b) edge (bo);
				\path[dashed] (cdo) edge (t);
				\path (ceo) edge (t);
				\path (bo) edge (t);
				\end{tikzpicture}
			\end{center}
			\caption{An abstract negotiation game.}
			\label{fig_abstract}
		\end{figure}
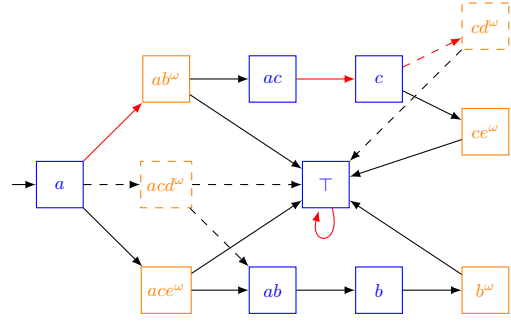
	\end{ex}

	\section{Algorithms} \label{sec_algo}
	
	Let us now study how we can use the abstract negotiation game to solve the problems presented in the introduction.
	We first define an equivalence relation between histories and between plays; then, we show that in the abstract negotiation game, Prover can propose only plays that are simple representatives of their equivalence class, and propose always the same play from each vertex.

	\subsection{Reduced plays and reduced strategy}
	
The equivalence relation that we use is based on the order in which vertices appear.
	
	\begin{defi}[Occurrence-equivalence (histories)]
		Two histories $h$ and $h'$ are \emph{occurrence-equivalent}, written $h \approx h'$, iff $\first(h) = \first(h')$, $\last(h) = \last(h')$ and $\Occ(h) = \Occ(h')$.
	\end{defi}
	
	\begin{defi}[Occurrence-equivalence (plays)]
		Two plays $\rho$ and $\rho'$ are \emph{occurrence-equivalent}, written $\rho \approx \rho'$, iff the three following conditions are satisfied:
		\begin{itemize}
			\item $\Inf(\rho) = \Inf(\rho')$;
			
			\item for each history prefix of $\rho$, there exists a occurrence-equivalent history prefix of $\rho'$;
			
			\item for each history prefix of $\rho'$, there exists a occurrence-equivalent history prefix of $\rho$.
		\end{itemize}
	\end{defi}
	
\begin{ex}
    Let us consider the game of Figure~\ref{fig_ex2}.
    In that game, the play $ab(dedf)^\omega$ is occurrence-equivalent to the play $abde(dedf)^\omega$, but not to the play $ab(dfde)^\omega$.
    Indeed, the latter has the history $abdf$ as a prefix, which is not occurrence-equivalent to any prefix of $ab(dedf)^\omega$, in which the state $f$ occurs only when the state $e$ has already occurred.
\end{ex}
	
	\begin{rk}
		The operators $\Occ$, $\Inf$ and $\mu$ are stable by occurrence-equivalence.
	\end{rk}


	The interest of that equivalence relation lies in the finite number of its equivalence classes, and by the existence of simple representatives of each of them.
	
	\begin{lm}[App.~\ref{pf_lasso}] \label{lm_lasso}
		Let $\rho$ be a play of $G$.
		There exists a lasso $h c^\omega \approx \rho$ with $|h| \leq n^3 + n^2$ and $|c| \leq n^2$, where $n = \card V$.
	\end{lm}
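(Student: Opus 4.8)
The plan is to build the two parts of the lasso separately, exploiting the fact that occurrence-equivalence constrains only the set $\Inf(\rho)$ together with the \emph{set} of prefix profiles realized by $\rho$: for a history prefix $h'$ of $\rho$ the relevant data is the pair $(\last(h'), \Occ(h'))$, the first vertex being fixed to $\first(\rho)$. Writing $n = \card V$, I want $\Inf(hc^\omega) = \Inf(\rho)$ and I want the set of pairs $(\last(h'), \Occ(h'))$ ranging over prefixes $h'$ of $hc^\omega$ to coincide exactly with that of $\rho$. No ordering has to be preserved, only these two sets, which is what makes short representatives possible.

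First, the cyclic part. Set $I = \Inf(\rho)$ and note that the subgraph induced on $I$ is strongly connected: the tail of $\rho$ (after its finitely many transient vertices have ceased to occur) is a walk that stays in $I$ and visits each vertex of $I$ infinitely often, yielding a path inside $I$ between any two of its vertices. Ordering $I = \{w_1, \dots, w_p\}$ and concatenating shortest paths $w_1 \rightsquigarrow w_2 \rightsquigarrow \dots \rightsquigarrow w_p \rightsquigarrow w_1$ inside that subgraph gives a closed walk $c$ with $\Occ(c) = I$, so $\Inf(hc^\omega) = I = \Inf(\rho)$, and with $|c| \le p(p-1) \le n^2$ since each hop has length at most $p-1$.

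Second, the finite prefix $h$. Along any play $\Occ$ is nondecreasing and grows by exactly one vertex at a time, so it takes distinct values $S_0 \subsetneq \dots \subsetneq S_m = \Occ(\rho)$ with $m \le n-1$ and $S_j = L_0 \cup \dots \cup L_j$, where $L_j = \{v : (v,S_j) \text{ is a prefix profile of } \rho\} \subseteq S_j$. I would build $h = g_0 g_1 \cdots g_m$, the $g_j$ being glued by the single edges of $\rho$ that first reveal each new vertex, where $g_j$ covers $L_j$ while staying inside $L_j$: enumerating $L_j = \{u_1, \dots, u_{q_j}\}$ in their order of first appearance at level $j$ in $\rho$ and concatenating shortest paths $u_1 \rightsquigarrow \dots \rightsquigarrow u_{q_j}$ inside the subgraph induced on $L_j$, which is traversable in the required direction because the level-$j$ segment of $\rho$ is itself such a walk. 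The last segment $g_m$ is additionally routed to end at $\first(c) \in I$. Each shortest path has length $< |L_j| \le n$ and there are $q_j \le n$ of them per level, so $|g_j| \le q_j n$ and $|h| \le n \sum_j q_j + m \le n^3 + n \le n^3 + n^2$, using $\sum_j q_j \le n^2$.

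It remains to verify $hc^\omega \approx \rho$, and the heart of the argument — and the main obstacle — is showing that $hc^\omega$ produces \emph{exactly} the prefix profiles of $\rho$ and no others. Since each $g_j$ stays inside $L_j$ and the $\Occ$-increases occur only at the junction edges, every prefix ending inside $g_j$ has $\Occ$ equal to $S_j$ and last vertex in $L_j$, hence profile $(v,S_j)$ with $v \in L_j$, which is a profile of $\rho$ by definition of $L_j$; conversely every profile $(v,S_j)$ of $\rho$ is produced because $g_j$ visits all of $L_j$. The cyclic part only adds profiles $(v,\Occ(\rho))$ with $v \in I \subseteq L_m$, already profiles of $\rho$. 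The two profile sets and the two infinity sets thus coincide, giving $hc^\omega \approx \rho$. The subtle points to discharge carefully are staying strictly within each $L_j$ (not merely within $S_j$, which would create spurious profiles), justifying the reachability inside each induced subgraph from the corresponding segment of $\rho$, and correctly gluing $g_m$ to the entry vertex $\first(c)$ of the cycle.
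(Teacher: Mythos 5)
Your construction follows essentially the same route as the paper's proof: both decompose $\rho$ according to its successive occurrence sets $S_0 \subsetneq \dots \subsetneq S_m$, both cover, at each level $j$, exactly the set $L_j$ of vertices realizable as last vertices of prefixes with occurrence set $S_j$ (the paper calls this set $U$), and both finish with a cycle of length at most $n^2$ covering $\Inf(\rho)$; the profile-matching verification is also the same, and the length accounting is sound.

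There is, however, one step that fails as written: the inter-level gluing. You glue $g_j$ to $g_{j+1}$ by ``the single edge of $\rho$ that first reveals each new vertex'', but that edge departs from the specific vertex that immediately precedes the new vertex in $\rho$, whereas your walk $g_j$ ends at $u_{q_j}$, the element of $L_j$ appearing last in first-appearance order --- and these two vertices need not coincide. Concretely, if the level-$j$ segment of $\rho$ is $a\,b\,a$ followed by the revealing edge from $a$ to $d$, then $L_j = \{a,b\}$, your $g_j$ is the path $ab$ ending at $b$, and gluing it to $d$ would require an edge from $b$ to $d$ that need not exist in $E$; the resulting sequence is then not a path of the graph at all. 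You flag gluing as a subtle point, but only for the final segment $g_m$; this analogous issue arises at \emph{every} intermediate level. The repair is immediate and uses exactly the reachability argument you already invoke: after covering $L_j$, route $g_j$ inside $L_j$ back to the departure vertex of the revealing edge (or, as the paper does, route from the end of the covering walk directly to the new vertex through $L_j$, which is possible because $\rho$ itself walks from any vertex of $L_j$ to the new vertex while staying inside $L_j$). This adds one extra shortest path of length at most $n$ per level, so the bound $|h| \leq n^3 + n^2$ still holds.
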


	We call such lassos \emph{reduced plays}.
	For each $\rho$, we write $\trho$ for an arbitrary occurrence-equivalent reduced play.
	Then, operations such as computing $\mu(\trho)$, $\Occ(\trho)$, $\Inf(\trho)$, or checking whether $\trho$ is $\lambda$-consistent, can be done in time $O(n^3)$.

\begin{defi}[Reduced strategy]
    A strategy $\tau_\P$ for Prover in $\Abs_{\lambda i}(G)$ is \emph{reduced} iff it is memoryless, and for each state $v$, the play $\rho$ with $[\rho] = \tau_\P([v])$ is a reduced play.
\end{defi}

\begin{ex}
    In Figure~\ref{fig_abstract}, Prover's winning strategy, defined by the red arrows, is reduced.
\end{ex}

If $\rho \approx \rho'$, and if Challenger can deviate from $\rho$ after the history $hv$, then he can also deviate in $\rho'$ after some history $h'v$ that traverses the same states.
Thus, Prover can play optimally while proposing only reduced plays, and by proposing always the same play from each vertex; that is, by following a reduced strategy.

\begin{lm}[App.~\ref{pf_reduced_strategy}] \label{lm_reduced_strategy}
	Prover has a winning strategy in the abstract negotiation game if and only if she has a reduced one.
\end{lm}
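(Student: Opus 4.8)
The ``if'' direction is immediate, as a reduced strategy is in particular a strategy. For the converse, I would start from a winning strategy $\tau_\P$ of Prover in $\Abs_{\lambda i}(G)$ and transform it into a reduced one in two stages, each justified by the fact that occurrence-equivalence is a congruence for the winning condition: by the remark following its definition, $\Occ$, $\Inf$ and $\mu$ are stable under $\approx$, and since $G$ is prefix-independent, the winner of an abstract play $\pi$ is determined solely by $\Inf(\dpi)$.

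The first stage makes $\tau_\P$ memoryless. Because the winner of $\pi$ depends only on the tail behaviour of $\dpi$, the residual game reached at any Prover state $[v]$ has a winning condition that does not depend on the finite history used to reach $[v]$. A standard cut-and-paste argument then applies: for each vertex $v$ occurring as a Prover state along some play consistent with $\tau_\P$, I fix one history $\pi_v$ reaching $[v]$ and let the new strategy always propose, from $[v]$, the play $\rho_v$ that $\tau_\P$ proposes after $\pi_v$; prefix-independence guarantees that reusing this choice at every later visit to $[v]$ cannot turn a win into a loss. Hence I may assume that $\tau_\P$ is memoryless and proposes a single play $\rho_v$ from each state $[v]$.

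The second stage makes the proposals reduced. Using Lemma~\ref{lm_lasso}, for each $v$ I fix a reduced play $\trho_v \approx \rho_v$, and I define $\tilde\tau_\P$ to be the memoryless strategy proposing $\trho_v$ from $[v]$; by construction it is reduced. To see that it is still winning, I would show that every play $\tilde\pi$ consistent with $\tilde\tau_\P$ can be mirrored by a play $\pi$ consistent with $\tau_\P$ with $\dot{\tilde{\pi}} \approx \dpi$. The mirror is built segment by segment: whenever Challenger, in $\tilde\pi$, deviates from a proposed play $\trho_v$ at a player-$i$ state $u$ towards some successor $w$, the prefix of $\trho_v$ ending at $u$ is a history prefix of $\trho_v$, hence (as $\trho_v \approx \rho_v$) admits an occurrence-equivalent prefix of $\rho_v$ also ending at $u$; I let Challenger deviate from $\rho_v$ at that point towards the same $w$. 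Both abstract plays then resume at the state $[w]$, where $\tilde\tau_\P$ and $\tau_\P$ again propose the occurrence-equivalent plays $\trho_w$ and $\rho_w$, and the construction continues. The accumulated histories on the two sides remain occurrence-equivalent at every stage, and by the three conditions defining occurrence-equivalence of plays their infinite limits satisfy $\dot{\tilde{\pi}} \approx \dpi$; since $\tau_\P$ is winning, player $i$ loses $\dpi$, and therefore loses $\dot{\tilde{\pi}}$ as well.

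I expect the main obstacle to be this mirroring of deviations, and in particular two points. First, one must check that the mirrored move is a legal deviation, i.e. that $w$ differs from the successor of $u$ in $\rho_v$; when the occurrence-equivalent prefix selected above already continues towards $w$ in $\rho_v$, Challenger instead follows $\rho_v$, which by occurrence-equivalence still yields an $\approx$-equivalent continuation, so that one has to argue the matching uniformly over all of Challenger's choices rather than move by move. Second, one must verify that occurrence-equivalence is genuinely preserved in the limit --- both that $\Inf(\dot{\tilde{\pi}}) = \Inf(\dpi)$ and that each side's history prefixes are matched on the other --- which is exactly where the stability of $\Occ$, $\Inf$ and $\mu$ under $\approx$ does the decisive work.
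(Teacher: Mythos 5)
Your ``if'' direction and the two-stage plan are reasonable, but Stage~1 rests on a false principle, and this is the central gap. Prefix-independence of the winning condition does \emph{not} license the cut-and-paste argument you describe: the objective ``visit both $a$ and $b$ infinitely often'' is prefix-independent, yet in the game where from a vertex $v$ the player may loop through $a$ or loop through $b$, the alternating strategy wins while both memoryless strategies lose --- so fixing the choice made after one history and reusing it at every later visit to $[v]$ can very well turn a win into a loss. A play consistent with your memoryless strategy is an infinite interleaving of segments taken from plays consistent with $\tau_\P$ after \emph{different} histories; to conclude that such an interleaving is still won by Prover you need her objective to be closed under shuffles (convexity), a property inherited from the parity condition that has to be stated and proved. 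This is exactly what the paper does, and it is also why the paper does not argue inside the abstract game at all: the memoryless-determinacy theorem it invokes (Kopczy\'nski, \cite{DBLP:conf/icalp/Kopczynski06}, used in Lemma~\ref{lm_reduced_game_memoryless}) requires a game with \emph{finitely many states}, whereas the abstract negotiation game has infinitely many. The paper therefore first builds a finite ``reduced negotiation game'', proves that both players' objectives there are prefix-independent and convex, obtains memoryless determinacy for that finite game, and then transfers strategies between the two games --- with the abstract-to-reduced direction argued through \emph{Challenger} (if Prover does not win the finite game, Challenger wins it memorylessly, and his strategy lifts to the abstract game), rather than by massaging Prover's abstract strategy directly as you do.

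The difficulty you flag in Stage~2 is also a genuine obstruction, not a loose end. Occurrence-equivalence constrains the occurrence sets of prefixes, not local successors, so it can happen that Challenger legally deviates from $\trho_v$ after a prefix ending in $u$ towards $w$, while \emph{every} occurrence-equivalent prefix of $\rho_v$ ending in $u$ is followed in $\rho_v$ by $w$ itself; the mirrored move is then illegal at every candidate position. Your fallback (``Challenger instead follows $\rho_v$'') does not repair the correspondence: after a genuine deviation Prover proposes a fresh play from $[w]$, whereas after following, $\pi$ continues inside $\rho_v$, so from that point on the two plays are built from unrelated proposals and nothing forces $\dot{\tilde{\pi}} \approx \dpi$. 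The paper's proof is organized so that it never mirrors deviations across two \emph{different} proposed plays: in its reduced-to-abstract direction both strategies propose the same reduced plays and only the representatives of the deviation histories change, and the opposite direction goes through Challenger as described above. To salvage your route you would need both the shuffle-closure property and a memorylessness argument valid on the infinite abstract arena (for instance by recasting Prover's objective as a parity condition on suitably colored edges of that arena), neither of which your proposal supplies.
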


	\subsection{Checking that a reduced strategy is winning: the deviation graph}
	
	We have established that Prover is winning the abstract negotiation game if and only if she has a reduced winning strategy.
	Such a strategy has polynomial size and can thus be guessed in nondeterministic polynomial time.
	It remains us to show that we can verify in deterministic polynomial time that a guessed strategy is winning.
	For that purpose, we construct its \emph{deviation graph}.
	
	\begin{defi}[Deviation graph]
		Let $\tau_\P$ be a reduced strategy of Prover, and let $i \in \Pi$.
		The \emph{deviation graph} associated to $\tau_\P$ and $v$ is the colored graph $\Dev_i(\tau_\P)$:
		\begin{itemize}
			\item the vertices are the plays $\tau_\P([w])$, for every vertex $w$ of the original game;
			
			\item there is an edge from $[\trho]$ to $\tau_\P([w])$ with color $c$ iff there exists $k \in \N$ such that $\trho_k \in V_i$, $\trho_k w \in E$, $w \neq \trho_{k+1}$ and $\min\kappa_i(\Occ(\trho_0 \dots \trho_k)) = c$.
		\end{itemize}
	\end{defi}

	Constructing the deviation graph associated to a memoryless strategy $\tau_\P$ enables to decide whether $\tau_\P$ is a winning strategy or not.
	
	\begin{lm} \label{lm_fixed_point_np_easy}
		The reduced strategy $\tau_\P$ is winning in the abstract negotiation game if and only if in the corresponding deviation graph, there neither exists, from the vertex $\tau_\P([v_0])$:
		\begin{itemize}
			\item a finite path to a vertex $[\trho]$ such that the play $\trho$ is winning for player $i$;
			
			\item nor an infinite path along which the minimal color seen infinitely often is even.
		\end{itemize}
		
		As a consequence, given a parity game $G$ and a requirement $\lambda$, deciding whether $\lambda$ is a fixed point of the negotiation function is $\NP$-easy.
	\end{lm}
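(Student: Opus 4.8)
The plan is to establish the two directions of the characterization separately, then read off the complexity consequence. I would interpret a play $\pi$ of the abstract negotiation game, when Prover follows the reduced memoryless strategy $\tau_\P$, as a walk in the deviation graph $\Dev_i(\tau_\P)$. The key observation is that the only genuine choices left in the game belong to Challenger: from a Challenger state $[\trho]$ (where $\trho = \tau_\P([w])$ for the relevant vertex $w$), Challenger either accepts, sending the game to $\top^\omega$, or deviates at some position $k$ with $\trho_k \in V_i$, $\trho_k u \in E$, $u \neq \trho_{k+1}$, which by the definition of $\Delta$ forces the game through $[h u][u]$ and then, since Prover is memoryless, to the unique state $\tau_\P([u])$. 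This is exactly an edge of $\Dev_i(\tau_\P)$, and its color $\min \kappa_i(\Occ(\trho_0 \dots \trho_k))$ records the least color player $i$ sees along the prefix that is actually traversed before the deviation.

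\emph{First I would argue the ``if'' direction (soundness).} Suppose neither forbidden path exists. Any Challenger play against $\tau_\P$ corresponds to a walk in $\Dev_i(\tau_\P)$ starting at $\tau_\P([v_0])$. If Challenger eventually accepts, the walk is finite and ends at some $[\trho]$; the play $\dpi$ built from the deviations followed by the accepted $\trho$ is won by player $i$ exactly when $\trho$ is winning for $i$ (using prefix-independence, so the finitely many deviation-prefixes do not affect the payoff). Since no finite path reaches a winning $[\trho]$, Challenger loses every accepting branch. If Challenger deviates forever, the walk is infinite; the play $\dpi = h^0 h^1 \cdots$ is won by player $i$ iff the least color $i$ sees infinitely often is even, and this least color is precisely the minimal color occurring infinitely often along the infinite path in the deviation graph, because each edge color captures the least color seen along the corresponding finite segment. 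Since no such infinite path has even minimal infinitely-recurring color, Challenger loses here too, so $\tau_\P$ is winning. The ``only if'' direction is the contrapositive: a bad finite or infinite path in $\Dev_i(\tau_\P)$ yields a concrete Challenger strategy producing a play $\dpi$ won by player $i$, defeating $\tau_\P$.

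\emph{The step I expect to be most delicate} is matching the color along an infinite deviation walk with the parity condition on $\dpi$: I must verify that $\min \kappa_i(\Inf(\dpi))$ equals the least color seen infinitely often among the \emph{edge} colors of the infinite path, which requires checking that the vertices visited infinitely often by $\dpi$ are exactly those whose minimal colors are recorded infinitely often as edge labels, and that finitely-traversed tails of each $\trho^j$ contribute nothing infinitely often. Prefix-independence of the parity objective is what makes this clean.

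\emph{Finally I would deduce the complexity statement.} By Lemma~\ref{lm_reduced_strategy}, Prover wins the abstract negotiation game iff she has a reduced winning strategy; by Theorem~\ref{thm_abstract_game} this is equivalent to $\nego(\lambda)(v) = 0$. A reduced strategy is memoryless and specifies, for each vertex $w$, one reduced play $\tau_\P([w])$, each of polynomial size by Lemma~\ref{lm_lasso}; hence the whole strategy has polynomial size and can be guessed nondeterministically. The deviation graph has one vertex per state $w$ of $G$, and by the characterization just proved we verify in deterministic polynomial time (reachability of a winning $[\trho]$, and a cycle check for an even minimal infinitely-recurring color, both standard polynomial graph procedures) that a guessed $\tau_\P$ is winning. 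To decide whether $\lambda$ is a fixed point of $\nego$, it suffices to check, for each vertex $v$ with $\lambda(v) = 1$, that $\nego(\lambda)(v) = 1$, i.e.\ that Prover does \emph{not} win (no reduced winning strategy exists), and for each $v$ with $\lambda(v) = 0$ that Prover \emph{does} win; after first checking satisfiability of $\lambda$ (Lemma~\ref{lm_satisfiable_np_easy}, also $\NP$-easy), all these polynomially many conditions can be folded into a single nondeterministic polynomial-time computation, giving $\NP$-easiness.
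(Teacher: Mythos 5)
Your deviation-graph characterization (both directions, including the argument that the edge colors of an infinite deviation path record exactly the colors player $i$ sees infinitely often in $\dpi$) follows the paper's own proof and is correct. The gap is in the final complexity step. To certify that $\lambda$ is a fixed point, you propose to check, for each $v$ with $\lambda(v)=1$, that Prover does \emph{not} have a winning strategy in $\Abs_{\lambda i}(G)_{\|[v]}$, and you then claim that all these conditions ``can be folded into a single nondeterministic polynomial-time computation''. That step fails: non-existence of a reduced winning strategy is a universally quantified statement over exponentially many candidate strategies, i.e.\ a $\coNP$-flavoured condition, and you offer no polynomial-size witness for it. (Guessing a winning strategy for Challenger does not help either: in the reduced negotiation game a memoryless Challenger strategy assigns a choice to each reduced-play state, of which there are exponentially many.) Folding an $\NP$ check and a $\coNP$ check into one $\NP$ computation is precisely what one cannot do in general --- that combination is the class $\BH_2$, which the paper is careful to distinguish from $\NP$; as written, your algorithm only shows $\BH_2$-membership, not $\NP$-easiness.

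The paper's certificate avoids the $\lambda(v)=1$ vertices entirely: it consists only of (i) a satisfiability certificate for $\lambda$ (Lemma~\ref{lm_satisfiable_np_easy}) and (ii) winning reduced strategies for Prover at the vertices with $\lambda(v)=0$. Nothing needs to be certified at vertices with $\lambda(v)=1$ because of a monotonicity fact that your proposal misses: for every requirement one has $\nego(\lambda)(v)\geq\lambda(v)$. Indeed, if $\bsigma_{-i}$ is any $\lambda$-rational profile from $v\in V_i$, assuming some $\sigma_i$, then the outcome $\<\bsigma_{-i},\sigma_i\>_v$ is $\lambda$-consistent and starts at $v$, so player $i$ gets at least $\lambda(v)$ on it, hence $\sup_{\sigma'_i}\mu_i(\<\bsigma_{-i},\sigma'_i\>_v)\geq\lambda(v)$; and with the convention $\inf\emptyset=+\infty$ the inequality also holds when no $\lambda$-rational profile exists. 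Consequently, at a vertex with $\lambda(v)=1$ the fixed-point equation can only fail if $\nego(\lambda)(v)=+\infty$, i.e.\ if there is no $\lambda$-rational environment at all, and this is exactly what the satisfiability certificate rules out. With this observation the whole verification is genuinely in $\NP$; without it, your reduction from fixed-point checking to the deviation-graph test does not yield the claimed complexity bound.
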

	
	\begin{proof}
	    Given $G$ and $\lambda$, let $n$ be the number of states in $G$, and $m$ be the number of colors.
		The deviation graph can be seen as the abstract negotiation game itself, where one removed the transitions that were not compatible with $\tau_\P$; removed the states that were not accessible from $[v_0]$; and merged the paths $[\trho][hv][v][\trho']$ into one edge $[\trho][\trho']$ with color $\min \kappa_i(h)$.
		
		Therefore, a path from the vertex $\tau_\P([v_0])$ can be seen as a history (if it is finite) or a play (if it is infinite), compatible with the strategy $\tau_\P$, in the abstract negotiation game.
		In particular, the finite paths to a vertex $[\trho]$ with $\mu_i(\trho) = 1$ correspond to the histories that lead Prover to propose a play that is winning for player $i$, that Challenger can accept to win.
		Similarly, the infinite paths along which the minimal color seen infinitely often is even correspond to the plays $\pi$ where Challenger deviates infinitely often, and constructs the play $\dpi$ that is winning for player $i$.
		Such paths will be called \emph{winning paths}.
		
		Now, the deviation graph has $n$ vertices, and at most $mn^2$ edges.
		Constructing it requires time $O(n^4)$.
        Deciding the existence of a finite winning path is a simple accessibility problem, and can be done in time $O(mn^2)$.
        Deciding the existence of an infinite winning path is similar to deciding the emptiness of a parity automaton, and requires a time $O(mn^3)$.
		As a consequence, deciding whether a reduced strategy is winning can be done in polynomial time --- and it is an object of polynomial size.
		
		Thus, an $\NP$ algorithm that decides whether the requirement $\lambda$ is a fixed point of the negotiation function is the following: we guess, at the same time, a certificate that proves that $\lambda$ is satisfiable (Lemma~\ref{lm_satisfiable_np_easy}), and a reduced strategy $\tau_\P^v$ for Prover in $\Abs_{\lambda i}(G)_{\|[v]}$, for each $i \in \Pi$ and $v \in V_i$ such that $\lambda(v) = 0$: by Lemma~\ref{lm_reduced_strategy}, there exists such a winning strategy if and only if $\nego(\lambda)(v) = 0$.
		Those objects form a certificate of polynomial size, that can be checked in polynomial time.
\end{proof}

\begin{ex}
    Let us consider the game of Figure~\ref{fig_ex2}, and the requirement $\lambda^*$, presented on the same figure.
    Let $\tau_\P$ be the memoryless strategy in the game $\Abs_{\lambda^*\playcircle} (G)_{\|[a]}$ defined by:
    $$\begin{matrix}
        \tau_\P([a]) = ab(dedf)^\omega & \tau_\P([c]) = c^\omega & \tau_\P([e]) = (edfd)^\omega \\
        \tau_\P([b]) = b(dedf)^\omega &  \tau_\P([d]) = (dedf)^\omega & \tau_\P([f]) = (fded)^\omega.
    \end{matrix}$$
    
    The corresponding deviation graph is given by Figure~\ref{fig_deviation_graph}.
    The purple edges have color $0$, and the orange ones have color $1$.
    Observe that there is no winning path from the vertex $\tau_\P([a]) = ab(dedf)^\omega$: each purple edge can be used at most once, and even though the play $c^\omega$ is winning for player $\Circle$, the corresponding vertex is not accessible.
    Therefore, the strategy $\tau_\P$ is winning in $\Abs_{\lambda^*\playcircle} (G)_{\|[a]}$, which proves the equality $\nego(\lambda^*)(a) = \lambda^*(a) = 0$.
\end{ex}

	\subsection{Upper bounds}
	
	Let us now give the main problems for which the concepts given above yield a solution.
	
\begin{wrapfigure}[24]{r}{0.45\textwidth}\centering
    \begin{tikzpicture}[->,>=latex,scale=0.7, every node/.style={scale=0.7}]
            \node[state, rectangle, rounded corners] (a) at (0:2) {$ab(dedf)^\omega$};
            \node[state, rectangle, rounded corners] (b) at (240:2) {$b(dedf)^\omega$};
            \node[state, rectangle, rounded corners, double] (c) at (120:2) {$c^\omega$};
            \node[state, rectangle, rounded corners] (d) at (60:2) {$(dedf)^\omega$};
            \node[state, rectangle, rounded corners] (e) at (300:2) {$(edfd)^\omega$};
            \node[state, rectangle, rounded corners] (f) at (180:2) {$(fded)^\omega$};
            
            \path[violet] (a) edge (d);
            \path[violet] (a) edge (e);
            \path[violet] (a) edge (f);
            \path[violet, bend left=10] (c) edge (b);
            \path[orange, bend right=10] (d) edge (e);
            \path[orange] (b) edge (d);
            \path[orange] (b) edge (e);
            \path[orange,<->, bend left=10] (f) edge (e);
            \path[orange] (b) edge (f);
            \path[orange, bend left=10] (d) edge (f);
            \path[orange] (d) edge[loop above] (d);
        \path[orange] (e) edge[loop below] (e);
        \path[orange] (f) edge[loop left] (f);
    \end{tikzpicture}
    \caption{A deviation graph.}
    \label{fig_deviation_graph}
    \bigskip
    \begin{tikzpicture}[->,>=latex,shorten >=1pt, scale=0.4, every node/.style={scale=0.6}]
			\node[state] (C1) at (0:5) {$C_1$};
			\node[state] (C11) at (30:6) {$x_1$};
			\node[state] (C12) at (30:4) {$\neg x_1$};
			\node[state] (C2) at (60:5) {$C_2$};
			\node[state] (C21) at (90:6) {$x_2$};
			\node[state] (C22) at (90:4) {$\neg x_2$};
			\node[state] (C3) at (120:5) {$C_3$};
			\node[state] (C31) at (150:6) {$x_3$};
			\node[state] (C32) at (150:4) {$\neg x_3$};
			\node[state] (C4) at (180:5) {$C_4$};
			\node[state] (C41) at (210:6) {$x_4$};
			\node[state] (C42) at (210:4) {$\neg x_4$};
			\node[state] (C5) at (240:5) {$C_5$};
			\node[state] (C51) at (270:6) {$x_5$};
			\node[state] (C52) at (270:4) {$\neg x_5$};
			\node[state] (C6) at (300:5) {$C_6$};
			\node[state] (C61) at (330:6) {$x_6$};
			\node[state] (C62) at (330:4) {$\neg x_6$};
			\node[state] (b) at (0,0) {$\bot$};
			
			\path[->] (C1) edge (C11);
			\path[->] (C1) edge (C12);
			\path[->] (C11) edge (C2);
			\path[->] (C12) edge (C2);
			\path[->] (C11) edge[bend left=30] (b);
			\path[->] (C12) edge (b);
			\path[->] (C2) edge (C21);
			\path[->] (C2) edge (C22);
			\path[->] (C21) edge (C3);
			\path[->] (C22) edge (C3);
			\path[->] (C21) edge[bend left=30] (b);
			\path[->] (C22) edge (b);
			\path[->] (C3) edge (C31);
			\path[->] (C3) edge (C32);
			\path[->] (C31) edge (C4);
			\path[->] (C32) edge (C4);
			\path[->] (C31) edge[bend left=30] (b);
			\path[->] (C32) edge (b);
			\path[->] (C4) edge (C41);
			\path[->] (C4) edge (C42);
			\path[->] (C41) edge (C5);
			\path[->] (C42) edge (C5);
			\path[->] (C41) edge[bend left=30] (b);
			\path[->] (C42) edge (b);
			\path[->] (C5) edge (C51);
			\path[->] (C5) edge (C52);
			\path[->] (C51) edge (C6);
			\path[->] (C52) edge (C6);
			\path[->] (C51) edge[bend left=30] (b);
			\path[->] (C52) edge (b);
			\path[->] (C6) edge (C61);
			\path[->] (C6) edge (C62);
			\path[->] (C61) edge (C1);
			\path[->] (C62) edge (C1);
			\path[->] (C61) edge[bend left=30] (b);
			\path[->] (C62) edge (b);
			\path (b) edge[loop above] (b);
		\end{tikzpicture}
		\caption{The game $G_\phi$.}
		\label{fig_Gphi}
\end{wrapfigure}

	A first application is an algorithm for the SPE constrained existence problem.
	
	\begin{lm} \label{lm_constrained_existence_np_easy}
		The SPE constrained existence problem for parity games is $\NP$-easy.
	\end{lm}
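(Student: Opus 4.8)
The plan is to exhibit a nondeterministic polynomial-time algorithm. By Theorem~\ref{thm_nego}, an SPE whose outcome $\rho$ satisfies $\bx \leq \mu(\rho) \leq \by$ exists if and only if there is a $\lambda^*$-consistent play from $v_0$ meeting these thresholds, where $\lambda^*$ is the least fixed point of the negotiation function. The naive route --- guess $\lambda^*$ itself together with a witnessing play --- does not stay inside $\NP$, since certifying that a requirement is \emph{the least} fixed point is $\BH_2$-hard (Theorem~\ref{thm_lfp_bh2_complete}), not $\NP$-easy. The observation that circumvents this is monotonicity: as $\lambda^*$ is the least fixed point, every fixed point $\lambda$ satisfies $\lambda \geq \lambda^*$, so every $\lambda$-consistent play is a fortiori $\lambda^*$-consistent, hence the outcome of an SPE. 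It therefore suffices to guess \emph{some} fixed point rather than the least one.

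Concretely, the algorithm guesses at once a Boolean requirement $\lambda$, a reduced play $\trho$ from $v_0$ (a lasso of polynomial size, which exists by Lemma~\ref{lm_lasso}), and the $\NP$-certificate witnessing that $\lambda$ is a fixed point of the negotiation function (Lemma~\ref{lm_fixed_point_np_easy}, which bundles a satisfiability certificate with reduced winning strategies of Prover). It then verifies deterministically in polynomial time that: (i) the certificate is valid, so $\lambda = \nego(\lambda)$; (ii) $\trho$ is $\lambda$-consistent; and (iii) $\bx \leq \mu(\trho) \leq \by$. Since $\trho$ is reduced, checking $\lambda$-consistency and computing $\mu(\trho)$ take time $O(n^3)$, and the fixed-point certificate is polynomial-time checkable, so the whole procedure runs in nondeterministic polynomial time.

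It then remains to argue correctness. For soundness, if a run accepts, then $\lambda$ is a genuine fixed point, whence $\lambda \geq \lambda^*$, and the $\lambda$-consistent play $\trho$ is $\lambda^*$-consistent; by Theorem~\ref{thm_nego} it is the outcome of some SPE meeting the thresholds, so the instance is positive. For completeness, suppose an SPE with outcome $\rho$ satisfying $\bx \leq \mu(\rho) \leq \by$ exists. Then $\rho$ is $\lambda^*$-consistent, and $\lambda^*$ is itself a fixed point; moreover $\lambda^*$ is satisfiable, since an SPE --- hence a $\lambda^*$-consistent play --- exists in $G_{\|v}$ from every vertex $v$ (Theorem~\ref{thm_existence_spe} and Theorem~\ref{thm_nego}), so $\lambda^*$ admits the fixed-point certificate of Lemma~\ref{lm_fixed_point_np_easy}. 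Finally, by Lemma~\ref{lm_lasso} there is a reduced play $\trho \approx \rho$; since $\Occ$, $\Inf$ and $\mu$ are stable under occurrence-equivalence and $G$ is prefix-independent, $\lambda$-consistency is stable under $\approx$, so $\trho$ is $\lambda^*$-consistent with $\mu(\trho) = \mu(\rho)$. The accepting run is the one guessing $\lambda = \lambda^*$ together with $\trho$.

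The only genuinely delicate point is the one flagged above: one must resist pinning down $\lambda^*$ exactly, and instead rely on the monotonicity of negotiation, so that any guessed fixed point is a sound over-approximation of the SPE plays, while $\lambda^*$ itself supplies the completeness witness. Everything else is routine verification on reduced plays and an invocation of the already-established $\NP$-easiness of the fixed-point test.
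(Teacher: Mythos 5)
Your proof is correct and follows essentially the same route as the paper's: guess a reduced play together with \emph{some} fixed point $\lambda$ of $\nego$ (certified via Lemma~\ref{lm_fixed_point_np_easy}), check $\lambda$-consistency and the thresholds, and rely on $\lambda \geq \lambda^*$ for soundness and on guessing $\lambda = \lambda^*$ for completeness. The paper's proof is terser and leaves the soundness/completeness argument (monotonicity, satisfiability of $\lambda^*$, stability of consistency under occurrence-equivalence) implicit, whereas you spell it out; there is no substantive difference in approach.
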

	
	\begin{proof}
		Given a parity game $G_{\|v_0}$ and two thresholds $\bx$ and $\by$, we can guess a reduced play $\teta$ from $v_0$, a requirement $\lambda$, and the certificates required to decide whether $\lambda$ is a fixed point of $\nego$, according to Lemma~\ref{lm_fixed_point_np_easy}.
		All those objects have polynomial size.
		Then, to check that $\teta$ is an SPE outcome, using Theorem~\ref{thm_nego}, we check that $\lambda$ is a fixed point of $\nego$, that $\teta$ is $\lambda$-consistent, and that $\bx \leq \mu(\teta) \leq \by$ in polynomial time.
	\end{proof}

	This algorithm does not need an effective characterization of all the SPEs in a game, which is given by the least fixed point of the negotiation function $\lambda^*$.
	Computing such a characterization can be done with a call to a $\NP$ oracle and to a $\coNP$ oracle, i.e. it belongs to the class $\BH_2$.

	\begin{lm} \label{lm_lfp_bh2_easy}
		Given a parity game $G$ and a requirement $\lambda$, deciding whether $\lambda = \lambda^*$ is $\BH_2$-easy.
	\end{lm}
	
	\begin{proof}
		First, deciding whether $\lambda$ is a fixed point of the negotiation function is an $\NP$-easy problem by Lemma~\ref{lm_fixed_point_np_easy}.
		Deciding whether it is the least one is $\coNP$-easy, because a negative instance can be recognized as follows.
		We guess a requirement $\lambda' < \lambda$, and the certificates of the algorithm given by Lemma~\ref{lm_fixed_point_np_easy}; then, we check that $\lambda'$ is a fixed point of $\nego$.
	\end{proof}

Finally, SPE-verification requires polynomial space.

\begin{lm} \label{lm_spe_verif_pspace_easy}
    The SPE-verification problem is $\PSpace$-easy.
\end{lm}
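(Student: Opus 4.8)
The plan is to use the characterization of SPE plays as the $\lambda^*$-consistent plays (Theorem~\ref{thm_nego}): the SPE-verification problem amounts to deciding whether there exists a $\lambda^*$-consistent play $\rho$ from $v_0$ with $\rho \models \phi$. I would split the argument into two phases, both running in polynomial space. First, I would compute the least fixed point $\lambda^*$: by Lemma~\ref{lm_lfp_bh2_easy} this can be done by a $\BH_2$ algorithm, and since $\BH_2 \subseteq \PSpace$, the requirement $\lambda^*$ (an object of polynomial size) is computable in polynomial space. From now on $\lambda^*$ is available as a fixed table of values in $\{0,1\}$.

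Next, I would reformulate $\lambda^*$-consistency into a convenient $\omega$-regular condition. Since parity games are prefix-independent, for every suffix $\rho_k \rho_{k+1} \dots$ we have $\mu_i(\rho_k \rho_{k+1}\dots) = \mu_i(\rho)$, so a play $\rho$ from $v_0$ is $\lambda^*$-consistent if and only if, for every player $i$, either no vertex of $B_i := \{v \in V_i : \lambda^*(v) = 1\}$ occurs in $\rho$, or player $i$ wins $\rho$, i.e. $\min \kappa_i(\Inf(\rho))$ is even. The SPE-verification problem thus becomes: does there exist a play $\rho$ from $v_0$ satisfying $\phi$ and, for each $i \in \Pi$, the implication that $\Occ(\rho) \cap B_i \neq \emptyset$ entails that $\min \kappa_i(\Inf(\rho))$ is even?

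This is a non-emptiness question for the intersection of three $\omega$-regular languages over the alphabet $V$: the plays of $G$ from $v_0$, the models of $\phi$, and the consistency condition above. I would solve it by the standard automata-theoretic, on-the-fly search for an accepting lasso, which runs in nondeterministic polynomial space and hence, by Savitch's theorem, in $\PSpace$. Concretely, a nondeterministic procedure guesses the play vertex by vertex, maintaining a configuration of polynomial size consisting of the current vertex of $G$, a state of the Vardi--Wolper tableau automaton for $\phi$ (a consistent subset of the closure of $\phi$, of size polynomial in $|\phi|$), and the set $R \subseteq \Pi$ of players whose set $B_i$ has already been visited (storable in $\card \Pi$ bits). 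After guessing a finite prefix it guesses the start of the periodic part, records that configuration, and continues until it returns to an identical configuration; while traversing the loop it additionally checks, using polynomially many extra bits, that every generalized-Büchi acceptance set of the automaton for $\phi$ is visited (certifying $\rho \models \phi$) and that, for each player $i \in R$, the value $\min \kappa_i$ seen along the loop is even (certifying the consistency condition, since the loop vertices form $\Inf(\rho)$).

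The delicate point, and the main obstacle, is to combine the two qualitatively different acceptance conditions without ever materializing an exponential object. The automaton for $\phi$ is of exponential size, so it must be explored on the fly, keeping only a single polynomial-size tableau state in memory; and the consistency condition is a parity condition that depends on $\Inf(\rho)$, hence must be certified on the guessed loop by tracking, in polynomial space, the set of visited $B_i$ and the per-player minimal colors along that loop. Once one checks that every piece of stored information is of polynomial size and that the overall search is a nondeterministic polynomial-space computation, the $\PSpace$ upper bound follows, completing the proof together with the $\PSpace$ computation of $\lambda^*$.
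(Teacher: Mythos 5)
Your proof is correct, and its first half (computing $\lambda^*$ in polynomial space via the $\BH_2$ algorithm of Lemma~\ref{lm_lfp_bh2_easy}, using $\NP \subseteq \PSpace$ and $\coNP \subseteq \PSpace$) coincides with the paper's. Where you diverge is the second half. The paper folds the consistency condition into LTL: it constructs, in polynomial time, a formula $\psi_{\lambda^*}$ whose models are exactly the $\lambda^*$-consistent plays --- using the same prefix-independence observation you make, namely that a play $\rho$ is $\lambda^*$-consistent iff for every player $i$, visiting some $v \in V_i$ with $\lambda^*(v)=1$ implies that $\min\kappa_i(\Inf(\rho))$ is even --- and then invokes, as a black box, the known $\PSpace$ procedure for deciding whether some play of the graph satisfies $\phi \wedge \psi_{\lambda^*}$ (citing \cite{DBLP:conf/aiml/Schnoebelen02}). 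You instead keep the consistency condition at the automaton level and essentially inline the proof of that black box: an on-the-fly Vardi--Wolper tableau for $\phi$ in product with the game graph, augmented with bookkeeping (the set $R$ of players whose $B_i$ has been visited, and per-player minimal colors along the loop), searching nondeterministically for an accepting lasso and closing with Savitch's theorem. Both arguments are sound; the paper's is shorter and more modular, delegating all automata-theoretic work to a cited result, while yours is self-contained and avoids enlarging the LTL formula, at the price of handling the combination of acceptance conditions yourself. One detail you gloss over: if some $B_i$ is first visited inside the guessed periodic part, the configurations (including $R$) at the start and end of the loop differ, so "return to an identical configuration" would fail; this is repaired by the standard unrolling trick (replace the prefix $h$ by $hc$, after which $R$ has stabilized), so it is a cosmetic omission rather than a genuine gap.
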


\begin{proof}
    Given a parity game $G_{\|v_0}$ and an LTL formula $\phi$, by Lemma~\ref{lm_lfp_bh2_easy}, the requirement $\lambda^*$ can be computed by a deterministic algorithm using polynomial space --- indeed, we have the inclusions $\NP \subseteq \PSpace$ and $\coNP \subseteq \PSpace$, hence the guess of $\lambda^*$, followed by one call to an $\NP$ algorithm and one call to a $\coNP$ algorithm, can be transformed into a $\PSpace$ algorithm.
    Then, we can construct in polynomial time the LTL formula $\psi_{\lambda^*}$, that is satisfied exactly by the $\lambda^*$-consistent plays:
    $$\psi_{\lambda^*} = \bigwedge_i \bigwedge_{v \in V_i, \lambda^*(v) = 1} \left( \F v \Rightarrow \bigvee_{2k \leq m} \left( \bigvee_{\kappa_i(w) = 2k} \G \F w \wedge \bigvee_{\kappa_i(w) < 2k} \F \G \neg w \right) \right),$$
    where $m$ is the largest color appearing in $G$.
    
    Then, deciding whether there exists an SPE outcome in $G_{\|v_0}$ that satisfies the formula $\phi$ is equivalent to decide whether there exists a play in $G_{\|v_0}$ that satisfies the formula $\phi \wedge \psi_{\lambda^*}$.
    As for any LTL formula, that can be done using polynomial space: see for example \cite{DBLP:conf/aiml/Schnoebelen02}.
\end{proof}

	\subsection{Fixed-parameter tractability}

We end this section by mentioning an additional complexity result on the SPE constrained existence problem: it is fixed-parameter tractable. 

\begin{thm}[App.~\ref{pf_fpt}] \label{thm_fpt}
    The SPE constrained existence problem on parity games is fixed-parameter tractable when the number of players and the number of colors are parameters.
    More precisely, there exists a deterministic algorithm that solves that problem in time $O(2^{2^{pm}} n^{12})$, where $n$ is the number of vertices, $p$ is the number of players and $m$ is the number of colors.
\end{thm}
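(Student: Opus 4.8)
The plan is to turn the characterization of Theorem~\ref{thm_nego} into a deterministic algorithm whose exponential cost is confined to the parameters $p$ and $m$. Since the SPE plays are exactly the $\lambda^*$-consistent plays, the constrained existence problem splits into (i) computing the least fixed point $\lambda^*$ of $\nego$, and (ii) deciding whether some $\lambda^*$-consistent play has payoff in $[\bx,\by]$. For (ii), once $\lambda^*$ is available the $\lambda^*$-consistent plays of a given payoff form an $\omega$-regular set cut out by the colours and by $\lambda^*$ (the formula $\psi_{\lambda^*}$ of Lemma~\ref{lm_spe_verif_pspace_easy} already expresses $\lambda^*$-consistency); its nonemptiness under the payoff constraint amounts to searching for a reachable strongly connected component realizing one of the $2^{O(pm)}$ possible infinite colour signatures, which stays within the announced budget. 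So the whole difficulty is task (i).

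To compute $\lambda^*$ I would iterate $\nego$ from the vacuous requirement $\lambda_0$. The function $\nego$ is monotone --- enlarging $\lambda$ shrinks $\lRat$ and hence can only raise the value --- so on the finite lattice $\{0,1\}^V$ the ascending chain $\lambda_0 \le \nego(\lambda_0) \le \nego^2(\lambda_0) \le \cdots$ stabilizes on $\lambda^*$ after at most $n = \card V$ steps. Every iterate lies below $\lambda^*$, which is satisfiable because SPEs exist (Theorem~\ref{thm_existence_spe}), and satisfiability is preserved downwards (a $\lambda^*$-consistent play is $\lambda$-consistent for every $\lambda \le \lambda^*$); hence each iterate is satisfiable and the abstract negotiation game is well defined throughout. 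One iteration consists of evaluating $\nego(\lambda)(v)$ for each of the $n$ vertices, so the budget $O(2^{2^{pm}}n^{12})$ is met as soon as a single evaluation costs $O(2^{2^{pm}}n^{10})$.

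The heart of the proof is this deterministic, parameter-bounded evaluation of $\nego(\lambda)(v)$. By Theorem~\ref{thm_abstract_game} and Lemma~\ref{lm_reduced_strategy} it is enough to decide whether Prover has a \emph{reduced} winning strategy in $\Abs_{\lambda i}(G)_{\|[v]}$, and by the analysis underlying Lemma~\ref{lm_fixed_point_np_easy} a reduced strategy $\tau_\P$ is winning exactly when its deviation graph $\Dev_i(\tau_\P)$ --- a coloured graph on the $n$ play-vertices --- contains no winning path, a property testable in time $O(mn^3)$. The obstacle is that the reduced play $\tau_\P([w])$ chosen at each vertex ranges over exponentially many possibilities in $n$. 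I would remove this by showing that the existence of a winning path in $\Dev_i(\tau_\P)$ depends on the proposed plays only through coarse colour data: for each proposed play, its payoff for player $i$, and at each of its deviation points the minimal $\kappa_i$-colour of the prefix. Using Lemma~\ref{lm_lasso} and the invariance of $\Occ,\Inf,\mu$ under occurrence-equivalence, substituting a proposed play by any occurrence-equivalent reduced play carrying the same such data leaves the winning status unchanged, so Prover's choices can be reorganized around a space of colour-profiles whose size is controlled by $p$ and $m$; enumerating the resulting essentially-different strategies gives $2^{2^{O(pm)}}$ candidates, and for each candidate one builds the corresponding deviation graph on the $n$ vertices and runs the polynomial winning-path test.

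The step I expect to be genuinely delicate is precisely the one carrying the parameter dependence: proving that colour-and-payoff data is a \emph{sufficient statistic} for the winning-path condition while bounding the number of distinct profiles by $2^{O(pm)}$. The infinite winning paths of $\Dev_i(\tau_\P)$ are governed by a parity condition that is sensitive both to the colours labelling deviations and to which target plays are reached, so one must argue carefully --- through occurrence-equivalence and the reduced-play bounds of Lemma~\ref{lm_lasso} --- that only the colour information, and not the precise vertex sequences, can create or destroy a winning path. Once this is established, tallying the at most $n$ iterations, the $n$ vertices per iteration, the $O(n^4)$ construction of each deviation graph and the $O(mn^3)$ parity test yields the stated running time $O(2^{2^{pm}}n^{12})$.
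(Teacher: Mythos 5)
Your outer structure matches the paper's: iterate $\nego$ from $\lambda_0$ (monotone, hence at most $n$ iterations on a finite lattice), and once $\lambda^*$ is known, settle the payoff constraint by enumerating the $m^p$-ish colour signatures and doing reachability/SCC analysis. Those parts are fine. The genuine gap is exactly the step you flag as ``delicate'': the claim that the winning status of a reduced strategy depends on the proposed plays only through a colour-profile space whose size is controlled by $p$ and $m$, so that one can enumerate $2^{2^{O(pm)}}$ essentially-different strategies. This cannot work. The deviation graph $\Dev_i(\tau_\P)$ is a coloured graph on $n$ vertices whose edge structure records, for each proposed play, \emph{which vertices $w$} are deviation targets and with which prefix-minimal colours; this is a function from $V\times V$ to sets of colours, so the space of possible deviation graphs is of size $2^{\Theta(n^2 m)}$ --- exponential in $n$, not parameter-bounded. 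Occurrence-equivalence (Lemma~\ref{lm_lasso}) only lets you normalize each proposed play within its class; the number of classes per vertex is itself exponential in $n$ (it encodes the order in which vertices first appear). And even granting a per-vertex choice space bounded by some $f(p,m)$, a reduced strategy makes one choice \emph{per vertex}, so the number of global candidates would be $f(p,m)^n$, again exponential in $n$. Brute-force enumeration of reduced strategies is inherently an $\NP$-style argument (it is precisely how the paper gets $\NP$-membership in Lemma~\ref{lm_fixed_point_np_easy}); it cannot be converted into a deterministic FPT algorithm by compressing the certificate space in the way you describe.

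The paper's proof goes a different route for the one step you left open. It constructs a \emph{concrete} negotiation game on a finite arena of size $O(2^p n^2)$: states are pairs (vertex or edge of $G$, subset $P\subseteq\Pi$ of players whose requirement has been ``activated'' since the last deviation), and Prover wins iff a generalized parity condition over dimensions $\Pi\cup\{\star\}$ holds. After proving that Prover wins this concrete game iff $\nego(\lambda)(v_0)=0$, the winning condition is rewritten as a Boolean combination of Büchi objectives with $O(pm)$ atoms, and the FPT algorithm for Boolean Büchi games of Bruyère, Hautem and Raskin (\cite{DBLP:conf/concur/BruyereHR18}, Proposition~5) is invoked; that external result is the sole source of the $2^{2^{O(pm)}}$ factor. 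So the parameter dependence does not come from counting strategies at all, but from solving a single exponentially-larger-in-$p$ game with a parameter-sized objective formula. To repair your proposal you would need to replace your enumeration step by such a reduction (or by some other known FPT result); as written, the central claim of your third paragraph is false, and the proof does not go through.
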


\begin{proof}[Proof sketch]
    This result is obtained by constructing and solving a generalized parity game on a finite arena, that is exponential in the number of players and polynomial in the size of the original game.
    This game, called the \emph{concrete} negotiation game, is equivalent to the abstract negotiation game.
    Its construction is inspired by a construction that we have introduced in~\cite{Concur}, for games with mean-payoff objectives.
    The main idea of this construction is to decompose the plays proposed by Prover, by passing them to Challenger edges by edges, and by encoding the $\lambda$-consistence condition into a generalized parity condition.
    Then, we can apply a FPT algorithm to solve generalized parity games that was first proposed in~\cite{DBLP:conf/concur/BruyereHR18}.
    While this deterministic algorithm does not improve on the worst-case complexity of the deterministic $\ExpTime$ algorithm of \cite{DBLP:conf/fsttcs/Ummels06}, it allows for a finer parametric analysis.
\end{proof}

	\section{Matching lower bounds} \label{sec_lower_bounds}
	
	In this section, we provide matching complexity lower bounds for the problems we adressed in the previous section.
	For that purpose, we first need the following construction, inspired from a game designed by Ummels in \cite{GU08} to prove the $\NP$-hardness of the SPE constrained existence problem.
	Our definition is slightly different: while Ummels defined one player per variable, we need one player per literal.
	However, the core intuitions are the same.
	
	\begin{defi}[$G_\phi$]
		Let $\phi = \bigwedge_{j \in \Z/m\Z} C_j$ be a formula of the propositional logic, constructed on the finite set of variables $\{x_1, \dots, x_n\}$.
		We define the parity game $G_\phi$ as follows.
		
		\begin{itemize}
			\item The players are the variables $x_1, \dots, x_n$, their negations, and \emph{Solver}, denoted by $\S$.
			
			\item The states controlled by Solver are all the clauses $C_j$, and the sink state $\bot$.
			
			\item The states controlled by player $L = \pm x_i$ are the pairs $(C_j, L)$, where $L$ is a literal of $C_j$.
			
			\item There are edges from each clause state $C_j$ to all the states $(C_j, L)$; from each pair state $(C_j, L)$ to the state $C_{j+1}$, and to the sink state $\bot$; and from the sink state $\bot$ to itself.
			
			\item For Solver, every state has the color $\kappa_\S(v) = 2$, except the state $\bot$, which has color $1$.
			
			\item For each literal player $L$, every state has the color $\kappa_L(v) = 2$, except the states of the form $(C, \bL)$, that have the color $1$.
		\end{itemize}
	\end{defi}
	
	\begin{rk}
		The game $G_\phi$ is a coBüchi game: Solver has to avoid the sink state $\bot$, and player $L$ the states of the form $(C, \bL)$.
		Therefore, all the following theorems can also be applied to the more restrictive class of coBüchi games.
	\end{rk}
	
	\begin{ex}
		The game $G_\phi$, when $\phi$ is the tautology $\bigwedge_{j=1}^6 (x_j \vee \neg x_j)$, is given by Figure~\ref{fig_Gphi}.
	\end{ex}
	
	The game $G_\phi$ is strongly linked with the satisfiability of $\phi$, in the following formal sense.
	
	\begin{lm}[App.~\ref{pf_Gphi}] \label{lm_Gphi}
		The game $G_\phi$ has the following properties.
		\begin{itemize}
		    \item The least fixed point of the negotiation function is equal to $0$ on the states controlled by Solver, and to $1$ on the other ones.
		    
		    \item For every SPE outcome $\rho$ in $G_\phi$ that does not reach $\bot$, the formula $\phi$ is satisfied by:
		    $$\nu_\rho: x \mapsto \left\{ \begin{matrix}
		        1 & \mathrm{if~} \exists C, (C, x) \in \Inf(\rho) \\
		        0 & \mathrm{otherwise}.
		    \end{matrix} \right.$$
		    
		    \item Conversely, for every valuation $\nu$ satisfying $\phi$, the play $\rho_\nu = (C_1 (C_1, L_1) \dots C_m (C_m, L_m))^\omega,$ where for each $j$, the literal $L_j$ is satisfied by $\nu$, is an SPE outcome.
		\end{itemize}
	\end{lm}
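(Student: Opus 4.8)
The plan is to lean on Theorem~\ref{thm_nego}, which identifies the SPE plays of a prefix-independent Boolean game with the $\lambda^*$-consistent plays. The first bullet then amounts to computing $\lambda^*$ explicitly, and the last two bullets reduce to reading off the structure of the $\lambda^*$-consistent plays. The guiding observation about $G_\phi$ is that $\bot$ is a safe haven for every literal player --- it carries color $2$ for each of them, so any play reaching $\bot$ is won by all literal players --- while it is fatal for Solver, who has color $1$ there.

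For the first bullet I would compute $\lambda^*$ as the Kleene iteration of $\nego$ from the vacuous requirement $\lambda_0$. First the antagonistic values $\lambda_1 := \nego(\lambda_0)$: from any pair state $(C,L)$ its owner $L$ can move straight to $\bot$ and win against any coalition, so $\lambda_1((C,L)) = 1$; conversely from any clause state Solver must hand control to some literal player, whom a hostile coalition lets dive to $\bot$, so $\lambda_1 = 0$ on Solver's states. It then remains to check that $\lambda_1$ is a fixed point. The literal states keep value $1$ for the same reason (diving to $\bot$ beats even a $\lambda_1$-rational environment). The decisive point is that $\nego(\lambda_1)$ is still $0$ on Solver's states: the profile in which every literal player immediately dives to $\bot$ is $\lambda_1$-rational, because every play it produces reaches $\bot$ and is hence $\lambda_1$-consistent (all literal players win there); against this environment Solver is forced into $\bot$ and loses. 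Hence $\nego(\lambda_1) = \lambda_1$, and since any fixed point $\lambda$ satisfies $\lambda = \nego(\lambda) \geq \nego(\lambda_0) = \lambda_1$ by monotonicity, $\lambda_1$ is the least one, i.e. $\lambda^* = \lambda_1$.

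For the second bullet, let $\rho$ be an SPE play not reaching $\bot$; by Theorem~\ref{thm_nego} it is $\lambda^*$-consistent, i.e. every literal player owning a state visited by $\rho$ wins $\rho$. Since $\rho$ avoids $\bot$, every pair state it enters leads to the next clause, so $\rho$ cycles through all clauses $C_1, \dots, C_m$ infinitely often. Fix a clause $C_j$; some literal $L$ of $C_j$ satisfies $(C_j, L) \in \Inf(\rho)$. If $L = x_i$, then $\nu_\rho(x_i) = 1$ by definition and $C_j$ is satisfied. If $L = \neg x_i$, then $\lambda^*$-consistency forces player $\neg x_i$ to win, which (as $\neg x_i$ has color $1$ exactly on the states $(C, x_i)$) means no $(C, x_i)$ lies in $\Inf(\rho)$, whence $\nu_\rho(x_i) = 0$ and again $C_j$ is satisfied. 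As $j$ was arbitrary, $\nu_\rho \models \phi$.

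For the third bullet I would show directly that $\rho_\nu$ is $\lambda^*$-consistent and conclude with Theorem~\ref{thm_nego}. The states visited infinitely often by $\rho_\nu$ are the clauses and the pair states $(C_j, L_j)$, with each $L_j$ true under $\nu$; in particular no false literal occurs in $\rho_\nu$. Given a literal player $L$ owning a visited pair state $(C_j, L_j) = (C_j, L)$, that player loses only if some state $(C, \bL)$ lies in $\Inf(\rho_\nu)$; but $\bL$ is false under $\nu$ and no false literal occurs, so $L$ never sees color $1$ infinitely often and wins. Thus every literal player owning a visited state wins $\rho_\nu$, so $\rho_\nu$ is $\lambda^*$-consistent and hence an SPE play. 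The main obstacle is the fixed-point verification of the first bullet --- specifically arguing that raising the requirement on the literal states to $1$ does not enable Solver to force a win, which is exactly where the $\lambda_1$-rationality of the ``dive to $\bot$'' coalition is needed.
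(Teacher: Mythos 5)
Your proof is correct and follows essentially the same route as the paper's: compute $\lambda_1 = \nego(\lambda_0)$, show it is a fixed point by exhibiting the $\lambda_1$-rationality of the ``everyone dives to $\bot$'' profile against Solver, and then derive the last two bullets from $\lambda^*$-consistency via Theorem~\ref{thm_nego}. If anything, your verification that $\nego(\lambda_1)=\lambda_1$ and the minimality argument are spelled out in more detail than in the paper's own proof.
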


	A first consequence is the lower bound on the complexity of deciding whether some requirement is, or not, a fixed point of the negotiation function.
	
	\begin{thm} \label{thm_fixed_point_np_complete}
		Given a parity game $G$ and a requirement $\lambda$, deciding whether $\lambda$ is a fixed point of the negotiation function is $\NP$-complete.
	\end{thm}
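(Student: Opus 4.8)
The problem is $\NP$-easy by Lemma~\ref{lm_fixed_point_np_easy}, so only $\NP$-hardness remains. The plan is to reduce from $\Sat$, reusing the game $G_\phi$. Given $\phi$, I build the instance $(G_\phi, \lambda)$ where $\lambda$ is the requirement with $\lambda(\bot) = 0$ and $\lambda(v) = 1$ for every other vertex $v$; note this differs from the least fixed point, which is $0$ on clause states. This is computable in polynomial time, and the claim to establish is that $\lambda$ is a fixed point of $\nego$ if and only if $\phi$ is satisfiable.

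First I would characterize the $\lambda$-consistent plays of $G_\phi$. A $\lambda$-consistent play from a clause state $C_1$ cannot reach $\bot$: since $\lambda(C_1) = 1$ forces Solver to win from $C_1$, and Solver loses every play visiting $\bot$ infinitely often, such a play must cycle through all clause states. Writing $S$ for the set of literals $L$ such that some state $(C,L)$ lies in $\Inf(\rho)$, consistency at those states (where $\lambda = 1$ forces player $L$ to win, i.e.\ to see a state $(C',\bL)$ only finitely often) shows that $S$ contains no complementary pair, while the pigeonhole principle applied to each infinitely-visited clause shows that $S$ meets every clause; hence $S$ induces a satisfying assignment of $\phi$. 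Conversely, from any satisfying assignment the associated cyclic play is $\lambda$-consistent, and from every literal state the play jumping immediately to $\bot$ is $\lambda$-consistent as well. This yields that $\lambda$ is satisfiable if and only if $\phi$ is satisfiable.

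Then I would treat the two directions. If $\phi$ is not satisfiable, $\lambda$ admits no consistent play from $C_1$, so there is no $\lambda$-rational environment at $C_1$ and $\nego(\lambda)(C_1) = \inf\emptyset = +\infty \neq 1 = \lambda(C_1)$; hence $\lambda$ is not a fixed point. If $\phi$ is satisfiable, $\lambda$ is satisfiable and, by Theorem~\ref{thm_abstract_game} together with determinacy of the abstract negotiation game, $\nego(\lambda)$ takes values in $\{0,1\}$, so it suffices to check $\nego(\lambda)(v) = \lambda(v)$ vertex by vertex. At $\bot$, Solver can only loop and lose, giving $\nego(\lambda)(\bot) = 0$. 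At a literal state $(C_j,L)$, player $L$ can move to $\bot$, producing the $\lambda$-consistent play $(C_j,L)\bot^\omega$ that she wins (she sees color $2$ throughout, as $\bot$ is not of the form $(C,\bL)$); thus Challenger wins the abstract game and $\nego(\lambda)((C_j,L)) = 1$. At a clause state $C_j$, every play Prover may legally propose is $\lambda$-consistent, hence avoids $\bot$ and is won by Solver, so Challenger (playing Solver) simply accepts and wins, giving $\nego(\lambda)(C_j) = 1$.

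The step I expect to be the main obstacle is the clause-state case $\nego(\lambda)(C_j) = 1$: I must argue carefully that a $\lambda$-rational environment can never drive the play into $\bot$ (equivalently, that Prover is constrained to propose only plays in which Solver wins), which is exactly where the design of $G_\phi$ does the work — there is no edge from a clause state to $\bot$, and $\lambda$ forces Solver's victory from every clause, so the only way to reach $\bot$ passes through a literal state whose controller would then violate Solver's requirement earlier in the play. Once this is in place, combining the reduction with the $\NP$-easiness of Lemma~\ref{lm_fixed_point_np_easy} gives $\NP$-completeness.
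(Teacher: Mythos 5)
Your proof is correct and takes essentially the same route as the paper: the same reduction from $\Sat$, using the same game $G_\phi$ and the same requirement ($\lambda$ equal to $1$ everywhere except $\lambda(\bot)=0$), with the same pivotal equivalence that $\lambda$ is a fixed point of $\nego$ if and only if $\phi$ is satisfiable. The differences are only presentational: where the paper obtains the correspondence between $\lambda$-consistent plays from $C_1$ and satisfying valuations by invoking Lemma~\ref{lm_Gphi} (via $\lambda \geq \lambda^*$, so that $\lambda$-consistent plays are SPE plays), you re-derive it directly, and where the paper observes in one line that --- Solver having no winning play from $\bot$ --- being a fixed point amounts to satisfiability of $\lambda$, you check $\nego(\lambda)=\lambda$ vertex by vertex through the abstract negotiation game, both arguments relying on the same implicit fact that a satisfiable requirement admits $\lambda$-rational profiles, hence finite values of $\nego(\lambda)$.
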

	
	\begin{proof}
		Easiness is given by Lemma~\ref{lm_fixed_point_np_easy}.
		For hardness, we proceed by reduction from the $\NP$-complete problem \textsc{Sat}.
		Given a formula $\phi = \bigwedge_{j \in \Z/m\Z} C_j$ of the propositional logic, we can construct the game $G_\phi$ in polynomial time.
		Then, let us define on $G_\phi$ the requirement $\lambda$ that is constantly equal to $1$, except in $\bot$, where it is equal to $0$.
		Since there is no winning play for Solver from $\bot$, the requirement $\lambda$ is a fixed point of the negotiation function if and only if there exists a $\lambda$-consistent play from each vertex.
		If it is the case, then let $\rho$ be a $\lambda$-consistent play from $C_1$.
		Since $\lambda \geq \lambda^*$, the play $\rho$ is also $\lambda^*$-consistent, and is therefore an SPE outcome.
		It is also a play won by Solver, since $\lambda(C_1) = 1$: therefore, it does not reach the state $\bot$.
		Then, by Lemma~\ref{lm_Gphi}, we can define the valuation $\nu_\rho$, which satisfies $\phi$.
		
		Conversely, if $\nu$ is a valuation satisfying $\phi$, then the play $\rho_\nu$ is an SPE outcome from $C_1$, and it does not end in $\bot$, hence it is won by Solver, and is therefore $\lambda$-consistent.
		If we consider the suffixes of $\rho_\nu$, we find a $\lambda^*$-consistent play from each state $C_j$; and from the states of the form $(C_j, L)$, the play $(C_j, L) \bot^\omega$ is $\lambda^*$-consistent, hence there is a $\lambda^*$-consistent play from each vertex: deciding whether $\lambda$ is a fixed point of the negotiation function is $\NP$-hard.
	\end{proof}

	A similar proof ensures the same lower bound on the SPE constrained existence problem.
	
	\begin{thm} \label{thm_constrained_existence_np_complete}
		The SPE constrained existence problem in parity games is $\NP$-complete.
	\end{thm}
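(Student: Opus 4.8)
The plan is to establish both directions of the $\NP$-completeness claim, reusing the machinery already in place. Easiness is immediate from Lemma~\ref{lm_constrained_existence_np_easy}, which shows that the SPE constrained existence problem for parity games is $\NP$-easy. So the substance of the proof is the hardness direction, and I would obtain it by reduction from \textsc{Sat}, closely mirroring the structure of the proof of Theorem~\ref{thm_fixed_point_np_complete}.

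Given a propositional formula $\phi = \bigwedge_{j \in \Z/m\Z} C_j$, I would construct the game $G_\phi$ in polynomial time, initialized at the clause state $C_1$, and then choose the two thresholds $\bx, \by \in \{0,1\}^\Pi$ that pin down the payoff vectors we want to witness. The natural choice is to force Solver to win: set $\bx_\S = 1$ and leave the literal players' components unconstrained (i.e.\ $\bx_L = 0$ and $\by_L = 1$ for every literal $L$), with $\by_\S = 1$. The key observation, furnished by Lemma~\ref{lm_Gphi}, is that an SPE play $\rho$ from $C_1$ winning for Solver is exactly one that avoids the sink state $\bot$, and any such play induces a valuation $\nu_\rho$ that satisfies $\phi$. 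Conversely, for any satisfying valuation $\nu$, the play $\rho_\nu$ is an SPE play that never visits $\bot$, so Solver wins it, and hence $\mu(\langle \bsigma \rangle_{C_1})$ lies between the chosen thresholds for the SPE $\bsigma$ supporting $\rho_\nu$.

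Thus the equivalence to prove is: there exists an SPE $\bsigma$ in $(G_\phi)_{\| C_1}$ with $\bx \leq \mu(\langle \bsigma \rangle_{C_1}) \leq \by$ if and only if $\phi$ is satisfiable. The forward direction uses that $\bx_\S = 1$ forces the SPE outcome to be won by Solver, hence to avoid $\bot$, and then Lemma~\ref{lm_Gphi} yields the satisfying valuation $\nu_\rho$. The backward direction takes a satisfying $\nu$, invokes Lemma~\ref{lm_Gphi} to get the SPE play $\rho_\nu$, and checks that its payoff vector meets the threshold constraints, which holds because $\rho_\nu$ is won by Solver. Combined with $\NP$-easiness, this gives $\NP$-completeness.

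I expect the main delicacy to be the precise choice and justification of the thresholds $\bx$ and $\by$. The subtlety is that we must ensure the threshold constraint is equivalent to ``Solver wins'' without inadvertently over-constraining the literal players in a way that rules out the SPE play $\rho_\nu$ we want to produce in the backward direction, or fails to exclude spurious SPE plays reaching $\bot$ in the forward direction. Since Lemma~\ref{lm_Gphi} already packages the correspondence between satisfying valuations and Solver-winning SPE plays, the remaining work is essentially to verify that constraining only Solver's payoff to be $1$ (and leaving the literal players free between $0$ and $1$) exactly captures the avoidance of $\bot$, which is a short argument given that reaching $\bot$ makes Solver lose and that $\lambda^*$ is $1$ on every non-Solver state so that SPE plays automatically respect the literal players' objectives where forced.
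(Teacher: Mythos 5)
Your proposal is correct and follows essentially the same route as the paper: $\NP$-easiness from Lemma~\ref{lm_constrained_existence_np_easy}, and $\NP$-hardness by reduction from \textsc{Sat} via the game $G_\phi$ and Lemma~\ref{lm_Gphi}, with thresholds chosen so that the constraint is exactly ``Solver wins'' (i.e.\ the play avoids $\bot$). The threshold subtlety you flag is harmless, since leaving the literal players' components at $0$ and $1$ imposes no constraint at all, which is precisely what the paper implicitly does.
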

	
	\begin{proof}
		Easiness is given by Lemma~\ref{lm_constrained_existence_np_easy}.
		For hardness, we proceed by reduction from the problem \textsc{Sat}.
		Given a formula $\phi = \bigwedge_{j \in \Z/m\Z} C_j$ of the propositional logic, we can construct the game $G_\phi$ in polynomial time.
		By Lemma~\ref{lm_Gphi}, there exists an SPE outcome from $C_1$ and won by Solver, i.e. an SPE outcome from $C_1$ that does not reach the sink state $\bot$, if and only if there exists a valuation satisfying $\phi$: the SPE constrained existence problem is $\NP$-hard.
	\end{proof}

	Now, we show that the algorithm we presented to compute $\lambda^*$ is optimal as well.
	
	\begin{thm}[App.~\ref{pf_lfp_bh2_complete}] \label{thm_lfp_bh2_complete}
		Given a parity game $G$ and a requirement $\lambda$, deciding whether $\lambda = \lambda^*$ is $\BH_2$-complete.
		Given a parity game $G$, computing $\lambda^*$ can be done by a non-deterministic algorithm in polynomial time iff $\NP = \BH_2$.
	\end{thm}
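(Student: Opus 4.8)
The plan is to treat the two statements in turn, reducing everything to the machinery already set up. I begin with the decision problem.

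For $\BH_2$-easiness I would exhibit ``$\lambda = \lambda^*$'' as an intersection $P \cap Q$ matching the definition of $\BH_2$. Take $P$ to be ``$\lambda$ is a fixed point of $\nego$'', which is $\NP$-easy by Lemma~\ref{lm_fixed_point_np_easy}, and $Q$ to be ``$\lambda \leq \mu$ for every fixed point $\mu$ of $\nego$''. The complement of $Q$ reads ``there exist a fixed point $\mu$ and a vertex $v$ with $\mu(v) < \lambda(v)$'', which is $\NP$-easy by guessing $\mu$ together with a certificate that it is a fixed point (Lemma~\ref{lm_fixed_point_np_easy}); hence $Q$ is $\coNP$-easy. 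Since $\lambda^*$ is the least fixed point, $\lambda = \lambda^*$ holds iff $\lambda$ is a fixed point (so $\lambda \geq \lambda^*$) and $\lambda$ lies below every fixed point (so $\lambda \leq \lambda^*$); that is, iff $P \cap Q$. Thus the problem is $\BH_2$-easy.

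For $\BH_2$-hardness I would reduce from $\Sat \times \coSat$, which is $\BH_2$-complete by Lemma~\ref{lm_sat_cosat}. The difficulty is that, by Lemma~\ref{lm_Gphi}, the least fixed point of $G_\phi$ is \emph{independent} of $\phi$ (it is $0$ on Solver's states and $1$ elsewhere), so comparing $\lambda$ with $\lambda^*$ on $G_\phi$ carries no information about $\phi$. The key idea, and the step I expect to be the main obstacle, is to add in front of $C_1$ a fresh \emph{observer} player $W$ controlling a new source vertex $r_W$ whose unique outgoing edge goes to $C_1$, with a parity objective making $W$ win exactly the plays that reach the sink $\bot$. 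As $W$ has no genuine choice, its presence does not change the SPEs of $G_\phi$, so by Theorem~\ref{thm_nego} it leaves $\lambda^*$ unchanged on the states of $G_\phi$; but the value $\lambda^*(r_W)$ now reads off the \emph{existence} of an SPE avoiding $\bot$. Indeed $\lambda^*(r_W) = 1$ iff $W$ wins every SPE play from $r_W$, i.e.\ iff every SPE play reaches $\bot$, which by Lemma~\ref{lm_Gphi} happens iff $\phi$ is unsatisfiable; dually $\lambda^*(r_W) = 0$ iff $\phi$ is satisfiable. The careful point is the verification that attaching the observer perturbs neither the SPEs nor $\lambda^*$ of the underlying game, for which I would reuse Lemma~\ref{lm_Gphi} together with the fact that the negotiation value at a vertex depends only on plays staying inside its component.

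Given $(\phi_1, \phi_2)$, I then take the disjoint union $G$ of two such augmented games with disjoint player sets (players' colours on foreign vertices chosen arbitrarily, which is irrelevant by the decoupling above). I define $\lambda$ to equal the known component values everywhere, except $\lambda(r_{W_1}) = 0$ and $\lambda(r_{W_2}) = 1$. Then $\lambda(r_{W_1}) = \lambda^*(r_{W_1})$ iff $\phi_1$ is satisfiable and $\lambda(r_{W_2}) = \lambda^*(r_{W_2})$ iff $\phi_2$ is unsatisfiable, while $\lambda$ agrees with $\lambda^*$ at every other vertex by construction; hence $\lambda = \lambda^*$ iff $\phi_1$ is satisfiable and $\phi_2$ is not, giving a polynomial-time reduction. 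Finally, for the equivalence I argue both implications from this $\BH_2$-completeness. If $\lambda^*$ is computable by a nondeterministic polynomial-time algorithm, then ``$\lambda = \lambda^*$'' is in $\NP$ (guess a computation and accept iff it outputs exactly $\lambda$); being $\BH_2$-hard it forces $\BH_2 \subseteq \NP$, and with $\NP \subseteq \BH_2$ always, $\NP = \BH_2$. Conversely, if $\NP = \BH_2$ then ``$\lambda = \lambda^*$'' lies in $\BH_2 = \NP$ (easiness, or Lemma~\ref{lm_lfp_bh2_easy}), so it admits a polynomial-time verifier, and the nondeterministic algorithm that guesses a requirement $\lambda$ with its certificate and outputs $\lambda$ exactly when the verifier accepts computes $\lambda^*$ in nondeterministic polynomial time; here the only bookkeeping is unambiguity of the output, which holds because the verifier accepts $(G,\lambda)$ only for the unique $\lambda = \lambda^*$.
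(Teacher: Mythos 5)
Your proposal is correct and follows essentially the same route as the paper: the same $\NP$/$\coNP$ decomposition for $\BH_2$-easiness, and the same hardness reduction from $\Sat \times \coSat$ that prepends a single-successor vertex owned by a fresh player (the paper's ``Opponent,'' your ``observer'') whose objective is to reach $\bot$, takes the disjoint union of the two augmented games $G_{\phi_1}, G_{\phi_2}$, and sets $\lambda$ to $0$ and $1$ at the two new vertices respectively. The only differences are cosmetic --- the paper uses one Opponent player for both components where you use two, and your explicit two-way argument for the equivalence with $\NP = \BH_2$ spells out what the paper leaves implicit.
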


Finally, the LTL model-checking problem easily reduces to the SPE-verification problem, which gives us the matching lower bound.
	
	\begin{thm}[App.~\ref{pf_spe_verif_pspace_complete}] \label{thm_spe_verif_pspace_complete}
		The SPE-verification problem is $\PSpace$-complete.
	\end{thm}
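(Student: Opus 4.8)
The goal is to prove that the SPE-verification problem is $\PSpace$-complete. Easiness was already established in Lemma~\ref{lm_spe_verif_pspace_easy}, so the remaining task is the matching $\PSpace$-hardness lower bound.

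The plan is to reduce from the LTL model-checking problem (equivalently LTL satisfiability over a given Kripke structure), which is the canonical $\PSpace$-complete problem mentioned in the paper. The key idea is to find a parity game in which \emph{every} play, or at least a sufficiently rich set of plays, is supported by an SPE; then the SPE-verification problem on that game collapses to pure LTL model-checking. The natural candidate is a game where the least fixed point $\lambda^*$ of the negotiation function is identically $0$, because then by Theorem~\ref{thm_nego} every play is $\lambda^*$-consistent, hence every play is an SPE play.

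First I would take an arbitrary instance of LTL model-checking: a finite transition system (finite directed graph with each state labelled, every state having a successor) together with an LTL formula $\phi$, and ask whether some infinite path satisfies $\phi$. I would turn the transition system into a parity game by assigning all vertices to a single player (or a trivial set of players) and choosing the color functions so that the parity objective is vacuously satisfiable from everywhere in the weakest possible way --- for instance, giving every vertex the even color $0$ for every player. Then from each vertex the player already wins antagonistically, so $\nego(\lambda_0)$ and indeed $\lambda^*$ stay at $0$; consequently every play is an SPE play by Theorem~\ref{thm_nego}. With atoms identified with vertices as in the paper's convention ($\A = V$), the formula $\phi$ can be read directly over plays of the game. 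Then the SPE-verification instance $(G_{\|v_0}, \phi)$ has a positive answer if and only if there exists \emph{any} play from $v_0$ satisfying $\phi$, which is exactly the model-checking instance. The reduction is clearly polynomial.

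The main obstacle is ensuring that the reduction genuinely captures the full generality of LTL model-checking rather than a restricted fragment, and that the encoding of the transition system's labels into the vertex-based atom convention preserves satisfaction. Since the paper fixes $\A = V$ and assimilates a play to the sequence of singleton valuations marking the current vertex, I must either start from an LTL model-checking instance already phrased over a path-graph with vertices as atoms, or pre-process a general labelled structure by splitting each state so that its label is encoded by the vertex name --- a standard and polynomial transformation. I would therefore state the reduction starting from the version of LTL model-checking over finite graphs with $\A = V$, which is still $\PSpace$-complete, so that satisfaction of $\phi$ by a play of the game coincides exactly with satisfaction along the corresponding path of the structure. Combining easiness (Lemma~\ref{lm_spe_verif_pspace_easy}) with this hardness argument yields the claimed $\PSpace$-completeness.
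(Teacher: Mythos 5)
Your proposal follows essentially the same route as the paper: reduce from LTL model-checking over Kripke structures by building a trivial parity game on the structure's graph in which every play is an SPE play, so that SPE-verification collapses to plain model-checking. The paper does exactly this (a single player $\S$, all colors equal to $2$, and the atom translation $a \mapsto \bigvee_{\nu_v(a)=1} v$), and your handling of the atom convention --- either pre-processing the structure or invoking the vertex-atom version of model-checking --- is an acceptable variant of that substitution.

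One intermediate claim in your argument is wrong, although harmlessly so. With every vertex colored $0$ for every player, each player wins \emph{every} play, so the antagonistic value from each vertex is $1$, not $0$; hence $\nego(\lambda_0)$ is identically $1$ and $\lambda^*$ is identically $1$, contrary to your assertion that they ``stay at $0$.'' (Winning antagonistically pushes the requirement \emph{up} to $1$; a requirement of $0$ arises when the player can secure nothing.) The conclusion you need survives: since $\mu_i(\rho) = 1$ for every play $\rho$ and every player $i$, every play is $\lambda^*$-consistent whatever $\lambda^* \in \{0,1\}^V$ is, so every play is an SPE play by Theorem~\ref{thm_nego} --- or, even more directly, a profile in which every player wins in every subgame is trivially an SPE. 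Alternatively, if you genuinely want $\lambda^* \equiv 0$, color every vertex with the odd color $1$ for every player: then every play is lost by everyone, $\nego(\lambda_0) = \lambda_0$, and every play is vacuously $\lambda_0$-consistent. Either fix is one line, and the rest of your reduction, including its polynomiality, is sound.
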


	\bibliography{Biblio}

\appendix

The following appendices are providing the detailed proofs of some results given in the main body of this paper.
They are not necessary to understand our results and are meant to provide full formalization and rigorous proofs.
To improve readability, we have chosen to recall the statements that appeared in the main body of the paper.

    \section{Proof of Lemma~\ref{lm_sat_cosat}} \label{pf_sat_cosat}
    
\noindent \textbf{Lemma~\ref{lm_sat_cosat}.} \emph{The problem $\Sat \times \coSat$ is $\BH_2$-complete.}
	
	\begin{proof}
		Given a pair $(\phi_1, \phi_2)$, the problem $\Sat \times \coSat$ can be decided by, first, deciding whether $\phi_1$ is satisfiable, which is $\NP$-easy, and second, deciding whether $\phi_2$ is not, which is $\coNP$-easy.
		The problem $\Sat \times \coSat$ is therefore $\BH_2$-easy.
		
		Now, let us prove that it is $\BH_2$-hard; or in other words, that any $\BH_2$-easy problem can be reduced to it in polynomial time.
		Let $P \cap Q$ be a $\BH_2$-easy problem, where $P$ is $\NP$-easy and $Q$ is $\coNP$-easy, and both are defined on the instance set $I$.
		
		Since $\Sat$ is $\NP$-complete, there exists a reduction function $f: I \to \mathcal{F}$, where $\mathcal{F}$ is the set of the instances of $\Sat$, which is computable in polynomial time and such that for every instance $x \in I$, we have $x \in P$ if and only if $f(x)$ is a satisfiable formula.
		Identically, there exists a reduction function $g: I \to \mathcal{F}$ which is computable in polynomial time and such that for every instance $x \in I$, we have $x \in Q$ if and only if $f(x)$ is a non-satisfiable formula.
		
		Then, we can define the function $h: x \mapsto (f(x), g(x))$, which is computable in polynomial time, and which maps every instance $x$ of $P \cap Q$ to an instance $h(x)$ of $\Sat \times \coSat$, such that $h(x)$ is a positive instance of $\Sat \times \coSat$ if and only if $x$ is a positive instance of $P \cap Q$.
		The problem $\Sat \times \coSat$ is $\BH_2$-complete.
	\end{proof}

	\section{Proof of Theorem~\ref{thm_existence_spe}} \label{pf_existence_spe}

\noindent \textbf{Theorem~\ref{thm_existence_spe}.} \emph{There exists an SPE in every parity game.}

	\begin{proof}
		This result has already been proved by Ummels in \cite{DBLP:conf/fsttcs/Ummels06}. Here, we present a variation of his proof, that uses the concepts of requirements and negotiation.
		Let $G$ be a parity game.
		
		Let us define a decreasing sequence $(E_n)_n$ of subsets of $E$, and an associated sequence $(\lambda'_n)_n$ of requirements, keeping the hypothesis that $E_n$ always contains at least one outgoing edge from each vertex.
		First, $E_0 = E$ and $\lambda'_0$ is the vacuous requirement.
		Then, for every $n$ and each player $i$, let us consider the two-player zero-sum game $G_i^n$, defined as the game $G$ where each edge $uv \not\in E_n$ has been removed, and where all the players $j \neq i$ are coalized as one unique player $-i$, whose objective is to make player $i$ lose.
		As proved in \cite{DBLP:conf/fsttcs/Ummels06}, player $i$ has a memoryless uniformly optimal strategy $\sigma^n_i$ in $G_i^n$, i.e. a memoryless strategy that is winning from each state from which player $i$ has a winning strategy.
		We define, for each state $v \in V_i$:
		$$\lambda'_{n+1}(v) = \inf_{\btau_{-i} \in \Sigma_{-i}(G_i^n)} ~\mu_i(\< \btau_{-i}, \sigma^n_i \>_v),$$
		and:
		$$E_{n+1} = E_n \setminus \bigcup_i \{uv ~|~ u \in V_i, \lambda'_{n+1}(u) = 1, \mathrm{~and~} \sigma^n_i(u) \neq v\}.$$
		As desired, the set $E_{n+1}$ contains at least one outgoing edge from each vertex.
		
		Note that as a consequence, all the requirements $\lambda'_n$ have their values in $\{0, 1\}$: we have $\lambda'_{n+1}(v) = 1$ if the strategy $\sigma^n_i$ is winning from $v$, and $0$ otherwise.
		
		We prove by induction that for each $n$, every play that contains only edges from $E_n$ is $\lambda'_n$-consistent.
		It is clear for $n = 0$, since every play is $\lambda_0$-consistent.
		
		Now, if that property is true for some $n \in \N$, let $i \in \Pi$ and $w \in V_i$ be such that $\lambda'_{n+1}(w) \geq 1$: the strategy $\sigma^n_i$ is winning from $w$ in the game $G^n_i$.
		Let then $\rho$ be a play in $G_{\|w}$, that uses only edges from $E_{n+1}$: seen as a play in $G^n_{i\|w}$, it is compatible with the strategy $\sigma^n_i$, and therefore winning for player $i$.
		Every play using only edges from $E_{n+1}$ is $\lambda'_{n+1}$-consistent.
		
		Now, let us prove that for each $n$, we have $\lambda'_{n+1} \geq \nego(\lambda'_n)$.
		Let $i \in \Pi$ and $v \in V_i$ be a vertex such that $\lambda'_{n+1}(v) = 0$: in the game $G^n_i$, the player $-i$ has a strategy $\sigma_{-i}$ against which player $i$ cannot win.
		As a consequence, the strategy $\sigma_{-i}$ prevents player $i$ to access to any state $w \in V_i$ such that $\lambda'_{n+1}(w) = 1$; and therefore, to any state controlled by player $i$ from which edges have been removed.
		In other words, player $i$ could not win against $\sigma_{-i}$ even using edges that are not in $E_n$.
		
		The strategy $\sigma_{-i}$ can be seen as a strategy profile $\bsigma_{-i}$ in the game $G$, which is $\lambda'_n$-rational by the previous result.
		Player $i$ cannot win in $G_{\|v}$ against $\bsigma_{-i}$, hence $\nego(\lambda'_n)(v) = 0$.
		
		But then, since the sequence $(\lambda'_n)_n$ is made of requirements with values in $\{0, 1\}$, it necessarily reaches a limit $\lambda$ with values in $\{0, 1\}$.
		By the previous property, that limit is such that $\lambda \geq \nego(\lambda)$, i.e. it is a fixed point of the negotiation function; and since $\nego(\lambda)(v) < +\infty$ for each $v$, it is a satisfiable requirement.
		Therefore, by Theorem~\ref{thm_nego}, there always exists an SPE in every parity game.
	\end{proof}

    \section{Proof of Lemma~\ref{lm_lasso}} \label{pf_lasso}

\noindent \textbf{Lemma~\ref{lm_lasso}.} \emph{Let $\rho$ be a play of $G$. There exists a lasso $h c^\omega \approx \rho$ with $|h| \leq n^3 + n^2$ and $|c| \leq n^2$, where $n = \card V$.}
	
	\begin{proof}
		Let us write $W_0 \subset \dots \subset W_t$ for all the sets of the form $\Occ(\rho_0 \dots \rho_k)$ with $k \in \N$, without repetition.
		Note that for each index $s \in \{0, \dots, t-1\}$, the set $W_{s+1}$ contains the set $W_s$ plus one additional vertex.
		
		Let us construct the history $h$ and the cycle $c$ as follows, maintaining the hypothesis that for all $p$, the set $\Occ(h_0 \dots h_p)$ is equal to $W_s$ for some $s$.
		
		\begin{itemize}
			\item First, $h_0 = \rho_0$, and $\{h_0\} = W_0$;
			
			\item then, when the prefix $h_0 \dots h_p$ is constructed: let $s$ be such that $\Occ(h_0 \dots h_p) = W_s$, and let $k$ be the minimal integer such that $\rho_0 \dots \rho_k = W_s$.
			
			Let $U$ be the set of all the vertices $u$ such that there exists $\l$ with $\Occ(\rho_0 \dots \rho_\l) = W_s$ and $\rho_\l = u$: any such $\l$ is greater than or equal to $k$, and $U \subseteq W_s$. Then, there exists at least one path from $\rho_k = h_p$ that traverses all the vertices of $U$ and only them. Let $h_p \dots h_q$ be such a path with minimal length: it has at most length $n^2$.
			
			If $s < t$, let now $\l$ be the minimal index greater than $k$ such that $\Occ(\rho_0 \dots \rho_\l) = W_{s+1}$. Then, there exists a path from $h_q \in W_s$ to $\rho_\l$ that uses only vertices of $U$: let $h_q \dots h_r$ be such a path with minimal length. Then, it traverses all vertices at most once, and has therefore length at most $n$.
			
			If $s = t$, let $h_q \dots h_r$ be a path of minimal length from $h_q$ to a vertex $h_r \in \Inf(\rho)$: for the same reasons as above, such a path exists and has length at most $n$. Then, we can stop here the construction of $h$, and observe that the vertex $h_r$ belongs to the graph $(\Inf(\rho), E \cap \Inf(\rho)^2)$, which is strongly connected. We can therefore choose a cycle $c$ that traverses all its vertices and only them, and that has length at most $n^2$.
		\end{itemize}
		
		By construction, the lasso $hc^\omega$ is occurrence-equivalent to $\rho$, and satisfies the desired size conditions.
	\end{proof}

    \section{Proof of Lemma~\ref{lm_reduced_strategy}} \label{pf_reduced_strategy}

\noindent \textbf{Lemma~\ref{lm_reduced_strategy}.} \emph{Prover has a winning strategy in the abstract negotiation game if and only if she has a reduced one.}
	
\begin{proof}
	Let us consider the \emph{reduced negotiation game}, i.e. the abstract negotiation game in which one would have removed all the states but:
	\begin{itemize}
	    \item those of the form $[\trho]$, where $\trho$ is a reduced play;
	    
	    \item those of the form $[hv]$, where $h$ is a prefix of a $\lambda$-consistent reduced play $\trho$, and has minimal length among the occurrence-equivalent prefixes of $\trho$.
	\end{itemize}
	
	By Lemma \ref{lm_lasso}, if $G$ has $n$ vertices, then this game has at most $n^{n^3 + 2n^2} (n^3 + 3n^2) + 1$.
	
	Let us first notice a useful property of that game.
	
	\begin{lm} \label{lm_reduced_game_memoryless}
		Either Prover or Challenger has a memoryless winning strategy in the reduced negotiation game.
	\end{lm}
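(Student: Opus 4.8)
The plan is to recognize the reduced negotiation game as a finite-state two-player zero-sum game whose winning condition for Challenger is, up to a suitable recoloring, an ordinary parity condition, and then to invoke the memoryless determinacy of parity games (see e.g.~\cite{DBLP:conf/dagstuhl/Kusters01}). By Lemma~\ref{lm_lasso} the reduced negotiation game has finitely many states, so the only real work is to produce a coloring $\Omega$ of its vertices, using player $i$'s color function $\kappa_i$, such that a play is won by Challenger in the sense of the abstract negotiation game if and only if the minimal color it sees infinitely often is even.

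First I would define $\Omega$ as follows. On each Prover state of the form $[\rho_0 \dots \rho_k v]$ reached by a deviation transition $[\rho][\rho_0\dots\rho_k v]$, I set $\Omega([\rho_0\dots\rho_k v]) = \min \kappa_i(\Occ(\rho_0 \dots \rho_k))$, i.e. the minimal color of player $i$ in the segment that this deviation commits to the constructed play $\dpi$; this is exactly the edge-coloring already used in the deviation graph of Lemma~\ref{lm_fixed_point_np_easy}. On the remaining "neutral" states, namely those of the form $[v]$ and $[\rho]$, I assign a color strictly larger than every value of $\kappa_i$, so that they can never realize a minimum seen infinitely often. Finally, to handle acceptance, I split the sink $\top$ into two copies $\top_\C$ and $\top_\P$ carrying colors $0$ and $1$ respectively, and redirect the accept transition from a Challenger state $[\rho]$ to $\top_\C$ when $\mu_i(\rho) = 1$ and to $\top_\P$ when $\mu_i(\rho) = 0$. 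Since the game is prefix-independent, $\mu_i(\dpi) = \mu_i(\rho)$ for the last proposed play $\rho$, so this split faithfully records whether an accepted play is won by player $i$, while leaving Challenger's accept-versus-deviate choice untouched.

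The hard part will be verifying that this coloring is faithful on infinite-deviation plays. There I must show that the minimal color seen infinitely often along such a play $\pi$ equals $\min \kappa_i(\Inf(\dpi))$. The key observation is that each vertex of $\dpi$ that lies outside $\Inf(\dpi)$ occurs in only finitely many committed segments, and there are finitely many such vertices, so all but finitely many segments have their occurrence set contained in $\Inf(\dpi)$; hence the deviation-state color $\min \kappa_i(\Occ(\cdot))$ is eventually at least $c^* := \min \kappa_i(\Inf(\dpi))$, while infinitely many segments contain the witness vertex of color $c^*$ and so have color exactly $c^*$. As the neutral states never contribute the infinitary minimum, the minimal color seen infinitely often in $\pi$ is precisely $c^*$, which is even iff $\dpi$ is won by player $i$. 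Together with the acceptance case handled above, this establishes that $\pi$ is won by Challenger iff it satisfies the parity condition $\Omega$.

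Having obtained a finite parity game equivalent to the reduced negotiation game, I would conclude by memoryless determinacy: one of Prover, Challenger has a memoryless winning strategy in the colored game. Because the sink-splitting and the neutral recoloring are purely cosmetic (the underlying choice sets of both players, and the partition of plays into winners, are unchanged), such a memoryless strategy transfers back to a memoryless winning strategy in the reduced negotiation game itself, which is exactly the statement of Lemma~\ref{lm_reduced_game_memoryless}.
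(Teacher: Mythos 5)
Your proof is correct, but it takes a genuinely different route from the paper. The paper handles determinacy abstractly: it invokes Kopczy\'nski's theorem that finite games whose two objectives are prefix-independent and \emph{convex} (closed under shuffling) are positionally determined for both players, and it checks convexity by observing that if $\chi$ is a shuffling of two plays $\pi, \pi'$ of the reduced negotiation game, then $\dot{\chi}$ is a shuffling of $\dot{\pi}$ and $\dot{\pi}'$, so convexity is inherited from the convexity of parity objectives. You instead make the winning condition \emph{concretely} a parity condition: you recolor each deviation state by $\min \kappa_i$ of the committed segment, split the sink into two differently colored copies to record acceptance (correctly using prefix-independence to read the winner off the last proposed play alone), prove the faithfulness of this coloring on infinite-deviation plays, and then invoke the classical memoryless determinacy of finite parity games. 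Both arguments are sound, and your faithfulness argument (eventually all committed segments lie inside $\Inf(\dot{\pi})$, while infinitely many contain a witness of the minimal color) is exactly right. Your route is more elementary and self-contained, since it relies only on positional determinacy of parity games, which the paper already cites elsewhere, rather than on the less standard half-positional determinacy framework; it also makes explicit the correspondence between infinitary deviation colors and $\min \kappa_i(\Inf(\dot{\pi}))$ that the paper uses only implicitly when justifying the deviation graph in Lemma~\ref{lm_fixed_point_np_easy}. The price is the extra bookkeeping of the recoloring and the sink splitting, which the paper's citation-based argument avoids.
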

	
	\begin{proof}
		By \cite{DBLP:conf/icalp/Kopczynski06}, this result is true if both Prover's and Challenger's objectives are prefix-independent and \emph{convex}, on a game with finitely many states.
		By \emph{convex}, we mean that for all winning plays $\pi, \pi'$ and for any two sequences $k_1 < k_2 < \dots$ and $\l_0 < \l_1 < \dots$, with $\pi'_{\l_p} = \pi_{k_q}$ for all $p, q \in \N$, the play:
        $$\chi = \pi_0 \dots \pi_{k_1} \pi'_1 \dots \pi'_{\l_1} \pi_{k_1+1} \dots \pi_{k_2} \pi'_{\l_1+1} \dots$$
        must also be winning.
        The plays of the form of $\chi$ are called \emph{shufflings} of $\pi$ and $\pi'$.
        
        If $\pi$ and $\pi'$ are plays in the reduced negotiation game, we observe that the play $\dchi$ is also a shuffling of the plays $\dpi$ and $\dpi'$.
        The convexity of Prover's and Challenger's objectives is then a consequence of the convexity of parity objectives: the minimal color seen infinitely often by player $i$ in $\dchi$ is the minimum of the minimal colors seen infinitely often in $\dpi$ and in $\dpi'$.
	\end{proof}
	
	We can now prove our theorem by using the equivalence between the abstract and the reduced negotiation game.

		\begin{itemize}
		    \item \emph{If Prover has a winning strategy in the abstract negotiation game, she has one in the reduced negotiation game.}
		    
			Indeed, if Prover has no winning strategy in the reduced game, then, by Lemma \ref{lm_reduced_game_memoryless}, Challenger has a memoryless one: let us write it $\tau_\C$.
			
			Now, let us extend $\tau_\C$ into a memoryless winning strategy $\tau_\C^\star$ in the abstract game.
			
			Let $[\rho] \in S_\C$, and let $[hvw] = \tau_\C([\trho])$.
			By occurrence-equivalence, there exists $k \in \N$ such that $\rho_k = v$, and $\Occ(\rho_0 \dots \rho_k) = \Occ(hv)$.
			We then set $\tau_\C^\star([\rho]) = [\rho_0 \dots \rho_k w]$.
			Note that the states $\tau_\C^\star([\rho]) = [\rho_0 \dots \rho_k w]$ and $\tau_\C([\trho]) = [hvw]$ can be different, but they both have as unique successor the state $[w]$. 
			
			Let us prove, now, that $\tau_\C^\star$ is winning: let $\pi^\star$ be a play compatible with $\tau_\C^\star$.
			When Prover proposes a play $\rho$, in the abstract game, against the strategy $\tau_\C^\star$, and when she proposes the play $\trho$ in the reduced game against the strategy $\tau_\C$, the same thing happens in both cases: either Challenger accepts in both games, or he deviates, and Prover has to propose a new play from the same state $w$.
			Therefore, we can define from $\pi^\star$ a play $\pi$ compatible with $\tau_\C$ in the reduced game, in which each Challenger's state $[\rho]$ is replaced by the state $[\trho]$, and Prover's states are replaced accordingly.
			
			Since the play $\pi$ is compatible with $\tau_\C$, it is winning for Challenger: let us then prove that so is $\pi^\star$.
			If $\pi^\star$ has the form $H [\rho] \top^\omega$, then $\pi$ has the form $H' [\trho] \top^\omega$, hence $\trho$ is winning for player $i$, and therefore $\rho$ is winning for player $i$ and $\pi^\star$ is winning for Challenger.
			If $\pi$ never reaches the state $\top$, then we have $\dpi = h^0 h^1 \dots$ and $\dpi^\star = h^{0\star} h^{1\star} \dots$ where, for every $k$, the histories $h^k$ and $h^{k\star}$ are possibly different, but contain exactly the same vertices.
			Then, the set of player $i$'s colors appearing infinitely often in $\dpi$ and $\dpi^\star$ are the same, and the play $\dpi^\star$ is winning for player $i$, i.e. the play $\pi^\star$ is winning for Challenger: the strategy $\tau^\star_\C$ is winning.
		
			\item \emph{If Prover has a winning strategy in the reduced negotiation game, she has a reduced one in the abstract negotiation game.}
			
			Indeed, let $\tau_\P$ be a winning strategy for Prover in the reduced negotiation game.
			By Lemma \ref{lm_reduced_game_memoryless}, we can assume without loss of generality that $\tau_\P$ is memoryless.
			Since in the abstract game, the only states controlled by Prover where she has several possible choices are the ones of the form $[v]$, for $v \in V$, we can see $\tau_\P$ as a reduced strategy in the abstract game: let us write it $\tau_\P^\star$ in that case.
			We now have to prove that $\tau_\P^\star$ is also a winning strategy.
			
			Let $\pi^\star$ be a play in the abstract negotiation game compatible with $\tau_\P^\star$.
			For any sequence of states $[v] [\trho] [hw] [w]$ that appears in $\pi^\star$, the history $h$ is occurrence-equivalent to some prefix $\th$ of $\trho$ such that $[\th w]$ is a state of the reduced game.
			Therefore, we can transform the play $\pi^\star$ into a play $\pi$ of the reduced game, compatible with $\tau_\P$, where each state of the form $[hw]$ have been replaced by $[\th w]$.
			Since $\tau_\P$ is a winning strategy in the reduced negotiation game, the play $\pi$ is winning for Prover --- let us prove that so is $\pi^\star$.
			
			If $\pi^\star$ has the form $H [\trho] \top^\omega$, then $\pi$ has the form $H' [\trho] \top^\omega$, and since $\pi$ is winning for Prover, the play $\trho$ is losing for player $i$, and therefore the play $\pi^\star$ is winning for Prover.
			If $\pi^\star$ has the form:
			$$\pi^\star = [v_0] [\trho^0] [h^0v_1] [v_1] [\trho^1] [h^1v_2] \dots$$
			then:
			$$\pi = [v_0] [\trho^0] [\th^0v_1] [v_1] [\trho^1] [\th^1v_2] \dots$$
			and since for each $k$, we have $\Occ(h^k) = \Occ(\th^k)$, we find $\Inf(h^0 h^1 \dots) = \Inf(\th^0 \th^1 \dots)$ and therefore, if $\pi$ is winning for Prover, so is $\pi^\star$.
			
			The strategy $\tau_\P^\star$ is a reduced winning strategy.
		\end{itemize}
		
		Therefore, if Prover has a winning strategy in the abstract negotiation game, she has a reduced one.
	\end{proof}

    \section{Proof of Theorem~\ref{thm_fpt}} \label{pf_fpt}

\noindent \textbf{Theorem~\ref{thm_fpt}.} \emph{The SPE constrained existence problem on parity games is fixed-parameter tractable when the number of players and the number of colors are parameters.
    More precisely, there exists a deterministic algorithm that solves that problem in time $O(2^{2^{pm}} n^{12})$, where $n$ is the number of vertices, $p$ is the number of players and $m$ is the number of colors.}

\begin{proof}
    Let $G$ be a parity game, let $i \in \Pi$, and let $v_0 \in V_i$.
    Let us assume, without loss of generality, that $m$ is even, and that the vertices of $G$ or labelled by the colors $0, \dots, m-1$.
    Let $\lambda$ be a requirement.
    The value of $\nego(\lambda)(v_0)$ can be computed using the following lemma.
    
    \begin{lm}
        We have $\nego(\lambda)(v_0) = 0$ if and only if Prover has a winning strategy in the \emph{concrete negotiation game} $\Conc_{\lambda i}(G)_{\|s_0}$, defined as follows:
        \begin{itemize}
            \item the two players are Prover, denoted by $\P$, and Challenger, denoted by $\C$;
            
            \item Prover's states are the states of the form $(u, P)$, of which $u \in V$ is the \emph{current vertex} and $P \subseteq \Pi$ is the \emph{memory};
            
            \item Challenger's states are the states of the form $(uv, P)$, with $uv \in E$ and $P \subseteq \Pi$;
            
            \item there is a transition from each state $(u, P)$ to each state $(uv, P)$ (such a transition is called \emph{proposal}), and from each state $(uv, P)$ to the state $(v, P)$ if $\lambda(v) = 0$ and $(v, P \cup \{j\})$ if $\lambda(v) = 1$, where $v \in V_j$ (\emph{acceptation}), and to each state $(w, \{j\})$, with $w \in V_j$, $w \neq v$ and $uw \in E$ (\emph{deviation});
            
            \item the zero-sum Boolean outcome function $\nu$ is defined as follows: Prover wins a play $\pi$ if and only if the color $\min \hkappa_d(\InfTr(\pi))$ is even for each \emph{dimension} $d \in \Pi \cup \{\star\}$, where $\InfTr(\pi)$ is the set of the transitions used infinitely often in $\pi$, and where the colors $\hkappa_d$ are defined (on transitions) as follows:
            \begin{itemize}
                \item if $d \in \Pi$ (a \emph{non-main dimension}), then $\hkappa_d(st) = m$ when $st$ is a proposal, $\hkappa_d(st) = 0$ when $st$ is a deviation, and:
                $$\hkappa_d((uv, P)(v, Q) = \left\{ \begin{matrix}
                    \kappa_d(v) & \mathrm{if~} d \in P \\
                    m & \mathrm{otherwise},
                \end{matrix} \right.$$
                when $(uv, P)(v, Q)$ is an acceptation;
                
                \item if $d = \star$ (the \emph{main dimension}), then $\hkappa_d(st) = m$ when $st$ is a proposal, and:
                $$\hkappa_d((uv, P), (w, Q))~=~\kappa_i(w) + 1$$
                when $(uv, P)(w, Q)$ is an acceptation or a deviation;
            \end{itemize}
            
            \item the initial state is $s_0 = (v_0, \{i\})$ if $\lambda(v_0) = 1$, and $s_0 = (v_0, \emptyset)$ otherwise.
        \end{itemize}
    \end{lm}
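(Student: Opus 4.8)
The plan is to lean on Theorem~\ref{thm_abstract_game}, which already equates $\nego(\lambda)(v_0) = 0$ with the existence of a winning strategy for Prover in the abstract negotiation game $\Abs_{\lambda i}(G)_{\|[v_0]}$; it then suffices to prove that Prover wins the abstract game if and only if she wins the concrete game $\Conc_{\lambda i}(G)_{\|s_0}$. The concrete game is designed as a finite-state reformulation of the abstract one: instead of committing to a whole $\lambda$-consistent play at once, Prover reveals it edge by edge through the proposal and acceptation transitions, and Challenger's deviations are handled as before. The only genuine change is that the $\lambda$-consistency of the plays Prover commits to, which the abstract game enforces structurally by restricting Prover's states to $\lambda$-consistent plays, is here enforced semantically, through the memory component $P$ together with the non-main color functions $\hkappa_d$, as a conjunct of the generalized parity condition.

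First I would fix the correspondence $\pi \mapsto \dpi$ between a play of the concrete game and the play of $G$ it traces out through acceptations and deviations, which mirrors the map of the same name in the abstract game; streaming the successive proposals of an abstract play yields a concrete play with the same $\dpi$, and conversely. The heart of the argument is then to check that the two winning conditions agree along corresponding plays, dimension by dimension. For the main dimension $\star$, proposals contribute the even color $m$ and every step onto a vertex $w$ contributes $\kappa_i(w)+1 \leq m$, so the least color seen infinitely often equals $\min\kappa_i(\Inf(\dpi))+1$; this is even exactly when player $i$ \emph{loses} $\dpi$, which is precisely the condition for Prover to win in the abstract game. For a non-main dimension $d$, proposals contribute $m$, deviations contribute $0$, and an acceptation onto $v$ contributes $\kappa_d(v)$ once $d$ has entered the memory and $m$ before that; I would show that, after the last deviation, the memory records exactly the players owning a committed requirement-$1$ vertex, so that the least color infinitely often in dimension $d$ is even iff player $d$ wins whenever a $\lambda$-active $d$-vertex is committed. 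Conjoining over all $d$ recovers exactly the $\lambda$-consistency of the committed play; and in the regime where Challenger deviates infinitely often, the deviation color $0$ makes every non-main dimension trivially satisfied, matching the abstract game, where that regime only tests player $i$'s value.

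It remains to transfer strategies, where I do not insist on memorylessness (finite-memory determinacy of the finite generalized parity game is invoked only later, for the algorithm). From an abstract winning strategy Prover plays in the concrete game by streaming the currently committed play edge by edge and restarting with a fresh proposal after each deviation; the committed play is then the last play she proposed, which is $\lambda$-consistent and lost by player $i$, so all dimensions are satisfied. Conversely, from a concrete winning strategy Prover builds an abstract one by proposing, from each reachable configuration, the whole play that would result if Challenger accepted forever; the non-main dimensions certify that this play is $\lambda$-consistent, hence a legal proposal, and the main dimension certifies that player $i$ loses it. The main obstacle is precisely this last bookkeeping: one must verify that the memory---initialized according to $\lambda(v_0)$, grown by acceptation of requirement-$1$ vertices, and reset on deviation---records precisely the $\lambda$-constraints active in the current committed segment while imposing none on the transient, deviated-from portions, so that the generalized parity condition is neither stronger nor weaker than the per-proposal $\lambda$-consistency enforced structurally in the abstract game. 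Combining this equivalence with Theorem~\ref{thm_abstract_game} yields the lemma.
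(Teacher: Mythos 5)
Your proposal is correct, but it takes a genuinely different route from the paper's. The paper proves the lemma directly from the definition of the negotiation function, never invoking the abstract game: from a winning Prover strategy $\tau_\P$ in the concrete game it reads off a strategy profile $\bsigma$ in $G$ (Prover's proposals dictate everyone's moves, Challenger's deviations play the role of player $i$'s deviations), shows that $\bsigma_{-i}$ is $\lambda$-rational assuming $\sigma_i$ using the non-main dimensions and that player $i$ cannot win against it using the main dimension, hence $\nego(\lambda)(v_0)=0$; conversely, from a $\lambda$-rational profile witnessing $\nego(\lambda)(v_0)=0$ it builds $\tau_\P$ by proposing edges along $\bsigma$ and re-checks the generalized parity condition dimension by dimension. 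You instead factor the equivalence through Theorem~\ref{thm_abstract_game}, so that all the game-theoretic content (relating Prover strategies to $\lambda$-rational profiles) is inherited from the citation, and what remains is a purely combinatorial simulation between the two Prover--Challenger games: streaming a proposed play edge by edge in one direction, and proposing the ``accept-forever'' limit play of the concrete strategy in the other. The dimension-by-dimension computation (main dimension $=$ player $i$'s parity shifted by one; non-main dimensions plus memory $=$ $\lambda$-consistency of the committed tail, via prefix-independence; deviation color $0$ neutralizing non-main dimensions under infinitely many deviations) is common to both proofs, and you identify correctly the one delicate bookkeeping point, namely that the memory encodes exactly the requirement-$1$ vertices of the current committed segment. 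Two remarks on what each route buys. Your factorization is more modular and mirrors the style of Lemma~\ref{lm_reduced_strategy}, but it silently needs $\lambda$ to be satisfiable, since the abstract game (and Theorem~\ref{thm_abstract_game}) is only defined for satisfiable requirements, whereas the paper's direct argument does not; this is harmless where the lemma is used, because the iterates $\nego^n(\lambda_0)$ computed by the FPT algorithm are all satisfiable, but it deserves a sentence. Second, your transfer must also bridge the regime mismatch that abstract acceptance is an explicit move to $\top$ while concrete acceptance is merely the absence of further deviations; your ``accept forever'' construction together with prefix-independence handles exactly this, so the gap is only one of explicitness, not of substance.
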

    
    \begin{proof}
        A similar tool has been defined for mean-payoff games in \cite{Concur}.
        As a consequence, this proof is very similar to the proof of Theorem~3 of the same paper.
    
        \begin{itemize}
            \item \emph{If Prover has a winning strategy $\tau_\P$ in the concrete negotiation game.}
            Let $\bsigma$ be the strategy profile in the original game defined as follows.
            For every history $h_0 \dots h_k$ compatible with $\bsigma_{-i}$, let:
            $$H = (h_0, P_0) (h_0h'_0, P_0) \dots (h_k, P_k)$$
            be the only history of that form that is compatible with the strategy $\tau_\P$.
            Let $(h_kv, P_k) = \tau_\P(H)$.
            We define $\bsigma(h_0 \dots h_k) = v$.
            Thus, the strategy profile $\bsigma$ is defined after every history compatible with $\bsigma_{-i}$, and we define it arbitrarily after the other histories.
            
            The strategy profile $\bsigma_{-i}$ is $\lambda$-rational assuming $\sigma_i$: indeed, let $hv = h_0 \dots h_{k-1} v$ be a history compatible with $\bsigma_{-i}$, and let $\rho = \< \bsigma_{\|hv} \>_v$.
            Let us assume that $\rho$ is not $\lambda$-consistent.
            Without loss of generality, we assume that $\mu_j(\rho) = 0$, where $j$ is the player controlling the state $v$, while $\lambda(v) = 1$ (if needed, we extend $hv$ and shorten $\rho$).
            Then, the integer $\min\kappa_j(\Inf(\rho))$ is odd.
            
            Let us now consider, in the concrete negotiation game, the play:
            $$\pi = (h_0, P_0) (h_0h'_0, P_0) \dots (h_{k-1}h'_{k-1}, P_{k-1}) (v, P_k) (v\rho_1, P_k) (\rho_1, P_{k+1}) \dots,$$
            compatible with $\tau_\P$.
            That play contains only proposals and acceptations after the $2k$th step.
            Therefore, the memory $P_k$ is contained in the memories $P_\l$ for $\l > k$, and it contains $j$ since $v \in V_j$ and $\lambda(v) = 1$, hence the transitions $(\rho_\l\rho_{\l+1}, P_{k+\l-1}) (\rho_{\l+1}, P_{k+\l})$ have the colors $\hkappa_j(\rho_\l\rho_{\l+1}, P_{k+\l-1}) = \kappa_j(\rho_\l \rho_{\l+1})$.
            Thus, we have $\min\hkappa_j(\InfTr(\pi)) = \min\kappa_j(\Inf(\rho))$, which is odd: the play $\pi$ is lost by Prover, which is a contradiction since $\tau_\P$ was assumed to be a winning strategy.
            
            Moreover, player $i$ cannot win against the strategy profile $\bsigma_{-i}$.
            Indeed, let $\sigma'_i$ be a strategy for player $i$, and let $\rho = \< \bsigma_{-i}, \sigma'_i \>_{v_0}$.
            Let us consider the play:
            $$\pi = (\rho_0, P_0) (\rho_0\rho'_0, P_0) (\rho_1, P_1) \dots,$$
            compatible with $\tau_\P$.
            Since $\tau_\P$ is a winning strategy, the play $\pi$ is won by Prover, hence the color $\min\hkappa_\star(\InfTr(\pi))$ is even.
            By definition of $\hkappa_\star$, we have $\min\hkappa_\star(\InfTr(\pi)) = \min\kappa_i(\Inf(\rho)) + 1$, hence the color $\min\kappa_i(\Inf(\rho))$ is odd.
            The play $\rho$ is lost by player $i$.
            
            Therefore, there exists a $\lambda$-rational strategy profile from $v_0$ against which player $i$ cannot win, hence $\nego(\lambda)(v_0) = 0$.

            \item \emph{If $\nego(\lambda)(v_0) = 0$.} Let $\bsigma_{-i}$ be a strategy profile, $\lambda$-rational assuming the strategy $\sigma_i$, against which player $i$ cannot win.
            
            Let $\tau_\P$ be the strategy for Prover, in the concrete negotiation game, defined by:
            $$\tau_\P\left((h_0, P_0) (h_0h'_0, P_0) \dots (h_k, P_k)\right) = (h_kv, P_k),$$
            where $v = \bsigma(h_0 \dots h_k)$.
            Let us prove that $\tau_\P$ is a winning strategy.
            
            Let:
            $$\pi = (\rho_0, P_0) (\rho_0\rho'_0, P_0) (\rho_1, P_1) \dots$$
            be a play compatible with $\pi$.
            
            Let us assume that $\pi$ is lost by Prover, i.e. that $\min\hkappa_d(\InfTr(\pi))$ is odd for some dimension $d$.
            
            If $d = \star$, then $\min\hkappa_d(\InfTr(\pi)) = \min\kappa_i(\Inf(\rho)) + 1$, and the color $\min\kappa_i(\Inf(\rho))$ is even.
            Thus, the play $\rho$ is won by player $i$, which is impossible for a play compatible with $\bsigma_{-i}$.
            
            If $d \in \Pi$, then the play $\pi$ contains finitely many deviations (otherwise, we would have $\min\hkappa_d(\InfTr(\pi)) = 0$, even).
            Let $k$ be the least index such that the suffix $(\rho_k, P_k) (\rho_k\rho'_k, P_k) (\rho_{k+1}, P_{k+1}) \dots$ of $\pi$ contains no deviation --- which means that we have $\rho_k\rho_{k+1} \dots = \< \bsigma_{\|\rho_0 \dots \rho_k} \>_{\rho_k}$.
            Then, the sequence $(P_\l)_{\l \geq k}$ is non-decreasing, and for $\l$ great enough, the set $P_\l$ contains $d$ (otherwise, we would have $\min\hkappa_d(\InfTr(\pi))~=~m$, even).
            Therefore, there exists $\l \geq k$ such that $\rho_\l \in V_d$ and $\lambda(\rho_\l) = 1$.
            
            Thus, the colors $\hkappa_d((\rho_{\l'}\rho_{\l'+1}, P_{\l'})(\rho_{\l'+1}, P_{\l'+1}))$, for $\l' \geq \l$, are equal to $\kappa_d(\rho_{\l'}\rho_{\l'+1})$, and the color $\min\hkappa_d(\InfTr(\pi))$, which we assumed to be odd, is equal to the color $\min\kappa_d(\Inf(\rho))$, hence the play $\rho_k\rho_{k+1} \dots$ is lost by player $d$\dots and is therefore not $\lambda$-consistent: contradiction.
            The strategy $\tau_\P$ is a winning strategy.
        \end{itemize}
    \end{proof}
    
    If we add one state in the middle of each transition, i.e. if we decompose each transition $st \in \Delta$ into two transitions $s \delta_{st}$ and $\delta_{st} t$, that game matches with the framework of \cite{DBLP:conf/fossacs/ChatterjeeHP07}.
    That paper provides an algorithm that solves such games in time:
    $$O\left( (\card S)^{2 \card\Omega \card C} \card\Delta \right) \frac{(\card\Omega \card C)!}{(\card C)!^{\card\Omega}},$$
    where $S$ is the set of states, $\Delta$ the set of transitions, $\Omega$ the set of parity conditions and $C$ the set of colors that appear in them.
    In our case, we can therefore solve the concrete negotiation game in time:
    $$\begin{matrix}
        & O\left( (2^p (n + n^2))^{2 (p+1) (m+1)} 2^p n^3 \right) \frac{((p+1) (m+1))!}{(m+1)!^{p+1}} \\
        
        = & n^{O(p^2 m)} \frac{((p+1) (m+1))^{(p+1) (m+1)} e^{(m+1) (p+1)}}{e^{(p+1) (m+1)} (m+1)^{(m+1) (p+1)}} \\
        
        = & n^{O(p^2 m)}.
    \end{matrix}$$
    
    This method requires, therefore, an exponential time, which does not improve the complexity known since \cite{DBLP:conf/fsttcs/Ummels06}.
    
    However, the same game can also be interpreted as a Boolean Büchi game in the sense of \cite{DBLP:conf/concur/BruyereHR18}, i.e. a two-player zero-sum game in which the objective of the first player (Prover) is to validate a Boolean formula whose atoms are Büchi conditions.
    Indeed, Prover's objective can be written:
    $$\bigwedge_{d \in \Pi \cup \{\star\}} \bigvee_{k = 0}^{\frac{m}{2}} \left( \B\{ \delta_{st} ~|~ \kappa_d(st) = 2k\} \wedge \neg \B \{\delta_{st} ~|~ \kappa_d(st) < 2k\} \right),$$
    where $\B(W)$ is the Büchi objective associated to the set $W$, i.e. the objective of visiting infinitely often at least one vertex of $W$.
    That formula has at most $dm$ atoms, and has size $dm$, in a game of size $O(n^2 2^p)$, hence by Proposition~5 from \cite{DBLP:conf/concur/BruyereHR18}, there exists a deterministic algorithm that decides which player has a winning strategy in time:
    $$O\left( 2^{2^{(p+1)m}} (p+1)m + \left( 2^{(p+1)m 2^{(p+1)m}} O(n^2 2^p) \right)^5 \right)
    = 2^{2^{O(pm)}} n^{10}.$$
    
    Therefore, by constructing the concrete negotiation game and applying that algorithm on each vertex, it is possible to compute the requirement $\nego(\lambda)$ in time $2^{2^{O(pm)}} n^{11}$.
    Thus, it is possible to compute the iterations of the negotiation function on $\lambda_0$: since $\nego$ is non-decreasing, its fixed point $\lambda^*$ will be reached in at most $n$ steps, and will therefore be found in time $2^{2^{O(pm)}} n^{12}$.
    
    Once $\lambda^*$ has been computed, given two thresholds $\bx$ and $\by$, the SPE constrained existence problem can be solved by searching, for each tuple $\bc \in \{0, \dots, m-1\}^\Pi$ where $c_i$ whenever $x_i = 1$ and odd whenever $y_i = 0$, a play $\rho$ in $G$ that avoids the set:
    $$W_{\bc} = \{v \in V_i ~|~ c_i \in 2\Z \mathrm{~and~} \lambda^*(v) = 1\},$$
    and such that for each $i$, we have $\min \kappa_i(\Inf(\rho)) = c_i$.
    For a given tuple $\bc$, the existence of a play can be decided by removing all the states of $W_{\bc}$ and the states $v$ such that $\kappa_i(v) < c_i$ for some $i$, then looking for a strongly connected component that contains at least one vertex $v$ with $\kappa_i(v) = c_i$ for each $i$, and finally check whether the vertices of that strongly connected component are accessible from the initial state in $G$ without visiting the states of $W_{\bc}$.
    All those computations can be done in time $O(n)$.
    Thus, once $\lambda^*$ has be computed, the SPE constrained existence problem can be solved in time $O(m^p n)$.
    
    Given a parity game $G_{\|v_0}$ and two thresholds $\bx$ and $\by$, solving the SPE constrained existence problem can be done in time $2^{2^{O(pm)}} n^{12}$, and is therefore fixed parameter tractable with parameters $m$ and $p$.
\end{proof}

    \section{Proof of Lemma~\ref{lm_Gphi}} \label{pf_Gphi}

\noindent \textbf{Lemma~\ref{lm_Gphi}.} \emph{The game $G_\phi$ has the following properties.}
		\begin{itemize}
		    \item \emph{The least fixed point of the negotiation function is equal to $0$ on the states controlled by Solver, and to $1$ on the other ones.}
		    
		    \item \emph{For every SPE outcome $\rho$ in $G_\phi$ that does not reach $\bot$, the formula $\phi$ is satisfied by:}
		    $$\nu_\rho: x \mapsto \left\{ \begin{matrix}
		        1 & \mathrm{if~} \exists C, (C, x) \in \Inf(\rho) \\
		        0 & \mathrm{otherwise}.
		    \end{matrix} \right.$$
		    
		    \item \emph{Conversely, for every valuation $\nu$ satisfying $\phi$, the play $\rho_\nu = (C_1 (C_1, L_1) \dots C_m (C_m, L_m))^\omega,$ where for each $j$, the literal $L_j$ is satisfied by $\nu$, is an SPE outcome.}
		\end{itemize}
	
	\begin{proof}
		\begin{itemize}
			\item If $\lambda_0$ is the requirement constantly equal to $0$, then $\lambda_1 = \nego(\lambda_0)$ is equal to $0$ in every state controlled by Solver (she loses in $\bot$, and if all the other players choose to always go to $\bot$, she loses from everywhere), and to $1$ in every other state (from any state controlled by him, the player $L$ can go to $\bot$, and win).
			This requirement is a fixed point of the negotiation function (the strategy profile against Solver that chooses to always go to $\bot$ is $\lambda_1$-rational), and is therefore the least one.
			
			Therefore, the SPE outcomes are the plays consistent with that requirement, which we will now write $\lambda^*$.
			
			\item \emph{If $\rho$ is an SPE outcome that does not reach the state $\bot$.}
			
			Then, let us prove that the valuation $\nu_\rho$ satisfies $\phi$.
			Since $\rho$ does not end in the sink state $\bot$,
			for each clause $C$ of $\phi$, the state $C$ is visited infinitely often in $\rho$, and therefore so is at least one of its successors, which we will write $(C, L)$.
			
			If $L$ is a positive literal, say $L = x$, then by definition of $\nu_\rho$ we have $\nu_\rho(x) = 1$ and the clause $C$ is satisfied.
			
			If $L$ is a negative literal, say $L = \neg x$, then since $\lambda^*(C, L) = 1$, he must win as a player in $\rho$.
			Therefore, all the states of the form $(C', x)$ are visited finitely often, hence $\nu_\rho(x) = 0$, and the clause $C$ is also satisfied in that case.
			
			The formula $\phi$ is satisfied by the valuation $\nu_\rho$.

			\item \emph{If the valuation $\nu$ satisfies $\phi$.}
			
			Then, the play $\rho_\nu$ is $\lambda^*$-consistent: the only states it traverses on which $\lambda^*$ is equal to $1$ are the ones of the form $(C_j, L_j)$, where $L_j$ is a literal of $C_j$ satisfied by $\nu$.
			But then, that state is controlled by player $L_j$, who wins the play $\rho_\nu$: since the literal $L_j$ is satisfied by $\nu$, the literal $\bL_j$ is not, hence no state of the form $(C, \bL_j)$ is ever traversed.
			
			Therefore, the play $\rho_\nu$ is an SPE outcome.
		\end{itemize}
	\end{proof}

    \section{Proof of Theorem~\ref{thm_lfp_bh2_complete}} \label{pf_lfp_bh2_complete}

\noindent \textbf{Theorem~\ref{thm_lfp_bh2_complete}} \emph{Given a parity game $G$ and a requirement $\lambda$, deciding whether $\lambda = \lambda^*$ is $\BH_2$-complete.
		Given a parity game $G$, computing $\lambda^*$ can be done by a non-deterministic algorithm in polynomial time if and only if $\NP = \BH_2$.}

	\begin{proof}
		The $\BH_2$-easiness of the second problem is given by Lemma~\ref{lm_lfp_bh2_easy}; it of course implies the $\BH_2$-easiness of the first one.
		
		Now, hardness needs to be shown only for the first problem, and it will imply the $\BH_2$-hardness of the second one.
		We proceed by reduction from the problem $\Sat \times \coSat$, which is $\BH_2$-complete by Lemma~\ref{lm_sat_cosat}.
		
		Let $(\phi_1, \phi_2)$ be a pair of formulas.
		We construct in polynomial time a parity game $G$ and a requirement $\lambda$ such that $\lambda$ is the least fixed point of the negotiation function on $G$ if and only if $\phi_1$ is satisfiable and $\phi_2$ is not.
		
		First, we construct the games $G_{\phi_1}$ and $G_{\phi_2}$, supposed to have disjoint state spaces.
		We add to each of them a new state, written $v_1$ in $G_{\phi_1}$ and $v_2$ in $G_{\phi_2}$, and inserted just before the first clause, as shown in Figure~\ref{fig_reduction_lfp1} for $v_1$.
		Note that each of those two vertices has exactly one outgoing edge, so that Challenger has absolutely no choice to make in any play: his defeat of victory will be determined exclusively by the actions of the other players.
		
		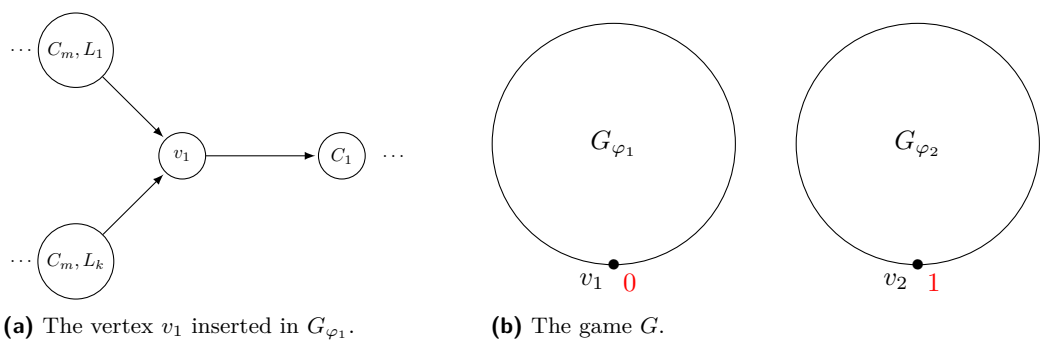
\begin{figure}
			\centering
			\begin{subfigure}[b]{0.45\textwidth}
				\begin{tikzpicture}[->,>=latex,shorten >=1pt, initial text={}, scale=0.7, every node/.style={scale=0.7}]
				\node[state] (a) at (-2, 2) {$C_m, L_1$};
				\node[state] (b) at (-2, -2) {$C_m, L_k$};
				\node[state] (v1) at (0, 0) {$v_1$};
				\node[state] (C1) at (3, 0) {$C_1$};
				\node (a') at (-3, 2) {$\dots$};
				\node (b') at (-3, -2) {$\dots$};
				\node (C1') at (4, 0) {$\dots$};
				
				\path (a) edge (v1);
				\path (b) edge (v1);
				\path (v1) edge (C1);
				\end{tikzpicture}
				\caption{The vertex $v_1$ inserted in $G_{\phi_1}$.}
				\label{fig_reduction_lfp1}
			\end{subfigure}
			\begin{subfigure}[b]{0.45\textwidth}
				\begin{tikzpicture}[scale=0.8, every node/.style={scale=1}]
				\draw (0, 0) circle (2);
				\draw (5, 0) circle (2);
				\draw (0, 0) node {$G_{\phi_1}$};
				\draw (5, 0) node {$G_{\phi_2}$};
				\draw (0, -2) node {$\bullet$};
				\draw (0, -2) node[below left] {$v_1$};
				\draw[red] (0, -2) node[below right] {$0$};
				\draw (5, -2) node {$\bullet$};
				\draw (5, -2) node[below left] {$v_2$};
				\draw[red] (5, -2) node[below right] {$1$};
				\end{tikzpicture}
				\caption{The game $G$.} 
				\label{fig_reduction_lfp2}
			\end{subfigure}
			\caption{Construction of the game $G$.}
		\end{figure}
		
		Those two vertices are controlled by an additional player, \emph{Opponent}, denoted by $\O$, and whose objective is to reach the state $\bot$ --- every state $v$ has the color $\kappa_\O(v) = 1$, except $\bot$, which has the color $\kappa_\O(v) = 2$.
		Note that Opponent's objective is the exact complement of Solver's one.
		The states $v_1$ and $v_2$ are colored to $2$ for every other player than Opponent, so that their objectives stay intuitively the same as in $G_{\phi_1}$ and $G_{\phi_2}$.
		
		The game $G$ is made of those two modified versions of $G_{\phi_1}$ and $G_{\phi_2}$, juxtaposed and not linked in any way, as shows Figure~\ref{fig_reduction_lfp2}.
		The requirement $\lambda$ is equal to $1$ on each state controlled by a literal player, to $0$ on each state controlled by Solver, to $0$ on $v_1$ and to $1$ on $v_2$.
		
		We already know that Solver can never enforce a payoff better than $0$ from her states, and that all the literal players can always enforce the payoff $1$ --- adding $v_1$ and $v_2$ did not change those facts.
		As a consequence, we always have $\lambda^*(v) = \lambda(v)$ for every $v \neq v_1, v_2$.
		
		Now, by Lemma~\ref{lm_Gphi}, there exists an SPE outcome from $v_1$ (resp. $v_2$) that is lost by Opponent if and only the formula $\phi_1$ (resp. $\phi_2$) is satisfiable: therefore, we have $\lambda = \lambda^*$ if and only if the pair $(\phi_1, \phi_2)$ is a positive instance of the problem $\Sat \times \coSat$.
		Deciding whether a given $\lambda$ is the least fixed point of the negotiation function in a given parity game is $\BH_2$-hard, and therefore so is computing that least fixed point.
	\end{proof}

    \section{Proof of Theorem~\ref{thm_spe_verif_pspace_complete}} \label{pf_spe_verif_pspace_complete}

\noindent \textbf{Theorem~\ref{thm_spe_verif_pspace_complete}.} \emph{The SPE-verification problem is $\PSpace$-complete.}
	
	\begin{proof}
		Easiness is given by Lemma~\ref{lm_spe_verif_pspace_easy}.
		
		For hardness, we proceed by reduction from the $\PSpace$-complete problem of model-checking in Kripke structures.
		
		Let $\M = (\A, S, R, (\nu_s)_{s \in S})$ be a Kripke structure, where $\A$ is a finite set of atoms, $S$ is a finite set of states, $R \subseteq S \times S$ is a binary relation over $S$, and each $\nu_s$ is a valuation over $\A$.
		Let $\phi$ be an LTL formula over $\A$.
		
		We associate to $\M$ the game $G = (\Pi, V, (V_i)_{i \in \Pi}, E, \mu)$ where $\Pi = \{\S\}$, $V = V_\S = S$, $E = R$, and $\mu$ is constantly equal to $1$ --- the game $G$ is a parity game with all colors equal to $2$.
		Then, every play in $G$ is an SPE outcome.
		To $\phi$, we associate the LTL formula $\psi$ over the atom set $V$, by replacing each atom $a$ in $\phi$ by the proposition:
		$$\bigvee_{\nu_v(a) = 1} v.$$
		
		Then, there exists a path in $\M$ satisfying $\phi$ if and only if there exists an SPE outcome in $G$ satisfying $\psi$, which yields the desired lower bound.
	\end{proof}

    \section{An example} \label{app_ex}

Let $G$ be the parity game of Figure~\ref{fig_ex3}.
Let us compute the iterations of the negotiation function, i.e. the sequence $(\lambda_n)_n = (\nego^n(\lambda_0))_n$, on that game.

At first, the requirement $\lambda_1$ is equal to $1$ on each state that is controlled by a player who has a winning strategy from that state: namely, the states $b$, $f$, $j$, and $k$.
It is equal to $0$ everywhere else.

At the second step, the requirement $\lambda_2$ is also equal to $1$ on the state $i$, because from there, player $\Circle$ can go to the state $j$, from which only a play winning for player $\Box$ can be proposed to her --- that is, only the play $jk^\omega$, which is winning for her too.
For the same reason, the requirement $\lambda_2$ is also equal to $1$ on the state $g$, because the only $\lambda_1$-consistent losing play that could be proposed to player $\Circle$ from there is $gehi^\omega$, which also enables her to deviate and go to the state $j$.
The requirement $\lambda_2$ is still equal to $0$ on $a$, because the play $ab(cd)^\omega$, winning for player $\Circle$, can be proposed to player $\Box$, without letting him the possibility of deviating.
It is also still equal to $0$ on $c$, $d$ and $h$, because from $e$, the play $(efgeh)^\omega$, winning for player $\Circle$, can be proposed to player $\Diamond$ --- and the play $i^\omega$ if she deviates and goes to $i$.

At the third step, though, if she deviates and goes to $i$, since $\lambda_2(i) = 1$, then the only $\lambda_2$-consistent play that can be proposed to her is $ijk^\omega$, winning for her.
As a consequence, the requirement $\lambda_3$ is equal to $1$ on the states $c, d$, and $h$.

Then, from the state $a$, the play $ab(cd)^\omega$ is no longer $\lambda_3$-consistent, but the play $abcd(ef)^\omega$ still is, as well as the play $(ef)^\omega$ from $e$, hence $\lambda_3 = \lambda^*$.

We sum up those results in the following table.

\begin{center}
    \begin{tabular}{|c|c|c|c|c|c|c|c|c|c|c|c|}
        \hline
        & $a$ & $b$ & $c$ & $d$ & $e$ & $f$ & $g$ & $h$ & $i$ & $j$ & $k$ \\
        \hline
        $\lambda_1$ & $0$ & $1$ & $0$ & $0$ & $0$ & $1$ & $0$ & $0$ & $0$ & $1$ & $1$ \\
        \hline
        $\lambda_2$ & $0$ & $1$ & $0$ & $0$ & $0$ & $1$ & $1$ & $0$ & $1$ & $1$ & $1$ \\
        \hline
        $\lambda_3 = \lambda^*$ & $0$ & $1$ & $1$ & $1$ & $0$ & $1$ & $1$ & $1$ & $1$ & $1$ & $1$ \\
        \hline
    \end{tabular}
\end{center}

\begin{figure}
	\begin{center}
		\begin{tikzpicture}[->,>=latex,scale=0.8, every node/.style={scale=0.8},initial text={}]
		    \node[state, rectangle] (a) at (0, 0) {$a$};
			\node[state] (b) at (2, 0) {$b$};
			\node[state, diamond] (c) at (4, 0) {$c$};
		    \node[state, diamond] (d) at (6, 0) {$d$};
			\node[state, rectangle] (e) at (8, 0) {$e$};
			\node[state] (f) at (9.4, -2) {$f$};
			\node[state] (g) at (6.6, -2) {$g$};
			\node[state, diamond] (h) at (10, 0) {$h$};
			\node[state] (i) at (12, 0) {$i$};
			\node[state, rectangle] (j) at (14, 0) {$j$};
			\node[state, diamond] (k) at (16, 0) {$k$};
				
			\path[<->] (a) edge (b);
			\path[<->] (b) edge (c);
			\path[<->] (c) edge (d);
			\path (d) edge (e);
			\path[<->] (e) edge (f);
			\path (f) edge (g);
			\path (g) edge (e);
			\path[<->] (e) edge (h);
			\path (h) edge (i);
			\path (i) edge[loop below] (i);
			\path (i) edge (j);
			\path (j) edge[loop below] (j);
			\path (j) edge (k);
			\path (k) edge[loop below] (k);
				
			\node (a') at (0, 0.8) {$\stackrel{\playcircle}{3} \stackrel{\Box}{1} \stackrel{\Diamond}{1}$};
			\node (b') at (2, 0.8) {$\stackrel{\playcircle}{2} \stackrel{\Box}{0} \stackrel{\Diamond}{1}$};
			\node (c') at (4, 0.8) {$\stackrel{\playcircle}{1} \stackrel{\Box}{1} \stackrel{\Diamond}{1}$};
			\node (d') at (6, 0.8) {$\stackrel{\playcircle}{0} \stackrel{\Box}{2} \stackrel{\Diamond}{1}$};
			\node (e') at (8, 0.8) {$\stackrel{\playcircle}{2} \stackrel{\Box}{1} \stackrel{\Diamond}{2}$};
			\node (f') at (10.2, -2) {$\stackrel{\playcircle}{2} \stackrel{\Box}{1} \stackrel{\Diamond}{2}$};
			\node (g') at (5.8, -2) {$\stackrel{\playcircle}{1} \stackrel{\Box}{1} \stackrel{\Diamond}{1}$};
			\node (h') at (10, 0.8) {$\stackrel{\playcircle}{0} \stackrel{\Box}{1} \stackrel{\Diamond}{3}$};
			\node (i') at (12, 0.8) {$\stackrel{\playcircle}{1} \stackrel{\Box}{1} \stackrel{\Diamond}{1}$};
			\node (j') at (14, 0.8) {$\stackrel{\playcircle}{1} \stackrel{\Box}{1} \stackrel{\Diamond}{1}$};
			\node (k') at (16, 0.8) {$\stackrel{\playcircle}{0} \stackrel{\Box}{0} \stackrel{\Diamond}{0}$};
		\end{tikzpicture}
	\end{center}
	\caption{A parity game.}
	\label{fig_ex3}
\end{figure}

\end{document}